\newtheorem{definition}{Definition}
\newtheorem{lemma}[definition]{Lemma}
\newtheorem{theorem}[definition]{Theorem}
\newtheorem{remark}[definition]{Remark}
\newtheorem{proposition}[definition]{Proposition}
\newtheorem{corollary}[definition]{Corollary}
\newenvironment{proof}
  {\par \noindent {\bf Proof\ }}
  {\hfill {\small $\blacksquare$} \par \medskip}
\newcommand{\mR}{\mathbb{R}}
\newcommand{\mC}{\mathbb{C}}
\newcommand{\mN}{\mathbb{N}}
\newcommand{\mZ}{\mathbb{Z}}
\newcommand{\mM}{\mathbb{M}}
\newcommand{\rH}{\mathrm{H}}
\newcommand{\cB}{\mathcal{B}}
\newcommand{\cC}{\mathcal{C}}
\newcommand{\cD}{\mathcal{D}}
\newcommand{\cF}{\mathcal{F}}
\newcommand{\cH}{\mathcal{H}}
\newcommand{\cL}{\mathcal{L}}
\newcommand{\cX}{\mathcal{X}}
\newcommand{\cO}{\mathcal{O}}
\newcommand{\cU}{\mathcal{U}}
\newcommand{\cW}{\mathcal{W}}
\newcommand{\cT}{\mathcal{T}}
\newcommand{\bV}{\mathbf{V}}
\newcommand{\eps}{\epsilon}
\newcommand{\ol}{\overline}
\newcommand{\KP}{\mathrm{kp}}
\newcommand{\EM}{\mathrm{em}}
\newcommand{\per}{\mathrm{per}}
\newcommand{\e}{\mathrm{e}}
\newcommand{\abs}[1]{{\left\vert {#1} \right\vert}}
\newcommand{\norma}[1]{{\| {#1} \|}}
\newcommand{\bk}[1]{{\left\langle #1 \right\rangle}}
\newcommand{\cara}{\mathbbm{1}}
\newcommand{\carB}{\mathbbm{1}_{\cB/\eps}}
\newcommand{\gini}{g^\mathit{in}}
\newcommand{\hini}{h^\mathit{in}}
\newcommand{\psini}{\psi^{\mathit{in},\eps}}
\newcommand{\ginieps}{g^{\mathit{in},\eps}}
\DeclareMathOperator{\DIV}{div}
\DeclareMathOperator{\supp}{supp}
\DeclareMathOperator{\Span}{span}
\numberwithin{equation}{section}
\numberwithin{definition}{section}
\newcommand{\ds}{\displaystyle}
\begin{document}
\title{Quantum Transport in Crystals: \\ Effective Mass Theorem  and K$\cdot$P Hamiltonians}
%
%
\author{Luigi Barletti$^{1}$  and  Naoufel Ben Abdallah$^{2}$}

%
\date{}
%
%
\maketitle
\begin{center}
$^{1}$Dipartimento di Matematica, 
Universit\`a di Firenze,
Viale Morgagni 67/A, 50134 Firenze, Italy,
{\em barletti@math.unifi.it}\\
$^{2}$Institut de Math\'ematiques de Toulouse, 
Universit\'e de Toulouse
Univ.~Paul Sabatier,
118 route de Narbonne, 31062 Toulouse, France,
{\em naoufel@math.univ-toulouse.fr}   
\end{center}
\begin{abstract}
In this paper the effective mass approximation and k$\cdot$p multi-band models,
describing quantum evolution of electrons in a crystal lattice, are discussed. 
Electrons are assumed to move in both a periodic potential and a macroscopic one.
The typical period $\eps$ of the periodic potential is assumed to be very
small, while the macroscopic potential acts on a much bigger length scale. 
Such homogenization asymptotic is investigated by using the envelope-function 
decomposition of the electron wave function. 
If the external potential is smooth enough, the k$\cdot$p and effective mass models, 
well known in solid-state physics, are proved to be close (in strong sense) to the 
exact dynamics. 
Moreover, the position density of the electrons is proved to converge weakly
to its effective mass approximation.
\end{abstract}
\section{Introduction}
\label{Intro}
The effective mass approximation is a common approximation in solid state physics \cite{Bastard,Ashcroft76,Wenckebach99} 
and states roughly speaking that the motion of electrons in a periodic potential can be replaced with a good approximation by 
the motion of a fictitious particle in vacuum but with a modified mass called the effective mass of the electron. 
This approximation is valid when the lattice period is small compared to the observation length scale, 
it relies on the Bloch decomposition theorem for the Schr\"odinger equation with a periodic potential. 
The effective mass is actually a tensor and  depends on the energy band in which the electron ``live's'.  
One of the most important references in the Physics literature on the subject is the paper of Kohn and Luttinger 
\cite{LuttingerKohn55} which dates back to 1955. 
As for rigorous mathematical treatment of this problem, we are aware of the work of Poupaud and Ringhofer  
\cite{PoupaudRinghofer96} and that of Allaire and Piatnitski \cite{Allaire05}.  
The aim of the present work is to provide an alternative mathematical treatment which is based on the original 
work of Kohn and Luttinger. 
Like in \cite{Allaire05} (see also \cite{Allaire04} and \cite{Allaire06} for related problems), we consider the  scaled 
Schr\"odinger equation
$$
  i\partial_t\, \psi(t,x) = \left( -\frac{1}{2}\Delta + \frac{1}{\eps^2}
   \,W_\cL\left(\frac{x}{\eps}\right) + V\left(x,\frac{x}{\eps}\right) \right) \psi(t,x),
$$
where $W_\cL(z)$ is a periodic potential with the periodicity of a lattice $\cL$, representing the 
crystal ions, while $V(x,z)$ represents an external potential.
The latter is assumed to act both on the macroscopic scale $x$ and on the microscopic scale $z = x/\eps$,
and to be $\cL$-periodic with respect to $z$.
The small parameter $\eps$ is interpreted as the so-called ``lattice constant'', that is the 
typical separation between lattice sites. 
Note that the scaling of the Schr\"odinger equation \eqref{SE1} is a 
homogenization scaling \cite{Allaire05,PoupaudRinghofer96}.
As mentioned above, the analysis of the limit $\eps \to 0$ has been done in Refs.~\cite{Allaire05} and 
\cite{PoupaudRinghofer96} by different techniques. 
In \cite{PoupaudRinghofer96}, the analysis is done indirectly by means of Wigner functions techniques.
Using Bloch functions which diagonalize the periodic Hamiltonian, a Wigner function is constructed.
The limit $\eps \to 0$ is done in the Wigner equation and is reinterpreted as the Wigner transform
of an effective mass Schr\"odinger equation.
In \cite{Allaire05}, the problem is tackled differently  thanks to homogenization techniques, mainly
double-scale limits.  
The wave function is spanned on the Bloch basis and
the limiting equation is obtained by expanding around zero wavevector
the Bloch functions and the energy bands.
 \par
The approach we adopt in this paper is completely different from \cite{PoupaudRinghofer96}
and somehow related to \cite{Allaire05} although the techniques are different.
The main idea, borrowed from the celebrated work of Kohn and Luttinger \cite{LuttingerKohn55},
consists of expanding the wave function on a modified Bloch basis.
This choice of basis does not allow to completely diagonalize the periodic part of the Hamiltonian,
but completely separates the ``oscillating'' part of the wave function from its slowly varying one.
By doing so, we introduce a so-called envelope function decomposition of the wave function and
rewrite the Schr\"odinger equation as an infinite system of coupled Schr\"odinger equations.
Each of the envelope functions has a fast oscillating scale in time with a frequency related to the 
energy band for vanishing wavevector. 
Therefore adiabatic decoupling occurs as it is commonly the 
case for fast oscillating systems \cite{hagedorn-joye,panati,spohn-teufel,teufel}. 
The action of the macroscopic potential becomes in the envelope function formulation a convolution 
operator in both the position variable  and band index. 
The limit of this operator becomes a multiplication operator in position by a matrix potential 
(in the band index).
The analysis of this limiting process is obtained through simple Fourier-like  analysis and 
perturbation of point spectra of self-adjoint operators.
The method allows to handle an infinite number of Bloch waves and also derive the so-called 
k$\cdot$p Hamiltonian as an intermediate model between the original Schr\"odinger equation 
and its limiting effective mass approximation.
\par
\medskip
The outline of the paper is as follows. Section \ref{sec2} is devoted to the presentation of the functional setting, 
notations as well as the main result of the paper. 
As mentioned above, the Schr\"odinger equation is reformulated as an infinite system of coupled Schr\"odinger 
equations, where the coupling comes both from the differential part and from the potential part. 
In Section \ref{sec3}, we concentrate on the potential part and analyze its limit. Section \ref{sec4} is devoted to 
the diagonalization of the differential part  and to the expansion of the corresponding eigenvalues in the Fourier space. 
In Section  \ref{sec5}, we analyze the convergence of the solution of the Schr\"odinger equation towards its 
effective mass approximation. 
The method relies on the definition of intermediate models and the comparison of their respective dynamics. 
Some comments are done in Section \ref{sec6} while some proofs are postponed to Section \ref{post}.
\section{Notations and main results}
\label{sec2}
\subsection{Bloch decomposition}
\label{SecDEF}
Let us consider the operator
\begin{equation}
\label{ScaledHamiltonian}
   H_\cL^\eps = -\textstyle{\frac{1}{2}} \Delta + \frac{1}{\eps^2}\,W_\cL\Big(\frac{x}{\eps}\Big),
\end{equation}
where $W_\cL$ is a bounded $\cL$-periodic potential where the lattice $\cL$ is defined by
\begin{equation}
\label{DirLatt}
  \cL = \left\{ Lz \ \big| \   z \in \mZ^d \right\}  \subset \mR^d,
\end{equation}
where $L$ be a $d\times d$ matrix with $\det L \not= 0$.
The centered fundamental domain $\cC$ of $\cL$ is, by definition,
\begin{equation}
\label{Cell}
  \cC = \left\{ Lt \ \Big| \  t \in \Big[-{1\over 2},{1\over 2}\Big]^d \right\}.
\end{equation}
Note that the volume measure $\abs{\cC}$ of $\cC$ is given by $\abs{\cC} = \abs{\det L}$.
The {\em reciprocal lattice} $\cL^*$ is, by definition, the lattice generated by the matrix $L^*$
such that
\begin{equation}
\label{RecLatt}
  L^T L^* = 2\pi I.
\end{equation}
The {\em Brillouin zone} $\cB$ is the centered fundamental domain 
of $\cL^*$, i.e.\footnote{In solid state physics the Brillouin zone used has a slightly 
different definition. However, the two definitions are equivalent to our purposes.}
\begin{equation}
\label{Brillo}
  \cB = \left\{ L^*t \ \Big| \  t \in \Big[-{1\over 2} , {1\over 2}\Big]^d \right\} .
\end{equation}
Thus, we clearly have 
\begin{equation}
\label{CB}
   \abs{\cC}\,\abs{\cB} = (2\pi)^d.
\end{equation}
We assume without loss of generality that the periodic potential is  larger than one ($W_\cL \geq 1$).  
In solid state physics, $W_\cL$ is interpreted as the electrostatic potential generated by the ions 
of the  crystal lattice \cite{Ashcroft76}.  
With the change of variables $z = x/\eps$, the operator $ H_\cL^\eps$ turns to ${1\over \eps^2}  H_\cL^1$, 
where $H_\cL^1$ is given by \eqref{ScaledHamiltonian} with $\eps = 1$. 
This operator has a band structure which is given by the celebrated Bloch theorem \cite{ReedSimonIV78}.
\begin{definition}
\label{BlochDef}
For any $k\in \cB$, the fiber Hamiltonian
\begin{equation}
\label{HkDef}
  H_\cL(k) = \frac{1}{2} \abs{k}^2 - ik\cdot\nabla -\textstyle{\frac{1}{2}} \Delta + W_\cL.
\end{equation}
defined on $L^2(\cC)$ with periodic boundary condition has a compact resolvent. 
Its eigenfunctions form an orthonormal sequence of  periodic solutions $(u_{n,k})_{n\in\mN})$ 
solving the eigenvalue problem 
\begin{equation}
\label{Bloch}
  H_\cL(k)u_{n,k} = E_n(k)u_{n,k}
\end{equation}
The functions $u_{n,k}$ are the so-called {\em Bloch functions} and the eigenvalues $E_n(k)$ are 
the {\em energy bands} of the crystal.
For each fixed value of $k\in\cB$, the set $\{u_{n,k} \mid n \in \mN\}$ is a Hilbert 
basis of $L^2(\cC)$ \cite{BerezinShubin91,ReedSimonIV78}. The Bloch waves defined for $k\in\cB$ and $n\in \mN$ by 
$$
 \cX_{n,k}^\textsc{b}(x) = \abs{\cB}^{-1/2} \,\cara_\cB(k)\, \e^{ik\cdot x}\,u_{n,k}(x)
$$
form a complete basis of $L^2(\mR^d)$ and satisfy the equation
$$
  H_\cL^1\cX_{n,k}^\textsc{b} = E_n(k)\cX_{n,k}^\textsc{b}.
$$
The scaled Bloch functions are given by
$$
 \cX_{n,k}^{\textsc{b},\eps}(x) = \abs{\cB}^{-1/2} \,\carB(k)\, \e^{ik\cdot x}\,u_{n,\eps k}(x)
$$
and they satisfy
$$
  H_\cL^\eps\cX_{n,k}^{\textsc{b},\eps}  = \frac{E_n(\eps k)}{\eps^2}\cX_{n,k}^{\textsc{b},\eps}.
$$
\end{definition}
In order to analyze the limit $\eps\to 0$, the usual starting point is to decompose the wave function on the Bloch 
wave functions. This decomposition was in particular used in \cite{Allaire05}. 
This has the big advantage of completely diagonalizing the periodic Hamiltonian, but since the wave vector
appears both in the plane wave $\e^{i k\cdot x}$ and in the standing periodic function $u_{n,\eps k}$, the separation 
between the fast oscillating scale and the slow motion carried by the plane wave is not immediate. 
We follow in this work the idea of Kohn and Luttinger  \cite{LuttingerKohn55} who decompose the wave function 
on the basis 
\begin{equation}
\label{kl}
 \cX_{n,k}^\textsc{lk}(x) = \abs{\cB}^{-1/2} \,\cara_\cB(k)\, \e^{ik\cdot x}\,u_{n,0}(x).
\end{equation}
The family $\cX_{n,k}^\textsc{lk}$ is also a complete orthonormal basis of $L^2(\mR^3)$ but only partially 
diagonalizes $H^1_\cL$ since 
\begin{equation}
\label{HLK}
\begin{aligned}
  H_\cL^1\cX_{n,k}^\textsc{lk} &= \abs{\cB}^{-1/2} \carB(k)\, \e^{ik\cdot x}
  \left[\frac{1}{2} \abs{k}^2 - ik\cdot\nabla + E_n(0) \right]u_{n,0}
\\
 &= \abs{\cB}^{-1/2} \carB(k)\, \e^{ik\cdot x} \sum_{n'} 
   \left[\frac{1}{2} \abs{k}^2\delta_{nn'} - ik\cdot P_{nn'} 
   + E_n\delta_{nn'}\right] u_{n',0}
\\
 &= \sum_{n'} \left[\frac{1}{2} \abs{k}^2\delta_{nn'} - ik\cdot P_{nn'} 
   + E_n\delta_{nn'}\right]\cX_{n',k}^\textsc{lk}\,.
\end{aligned}
\end{equation}
Here, $E_n = E_n(0)$ and
\begin{equation}
\label{Pdef}
  P_{nn'} = \int_\cC \ol u_{n,0}(x) \nabla u_{n',0}(x)\,dx 
\end{equation}
are the matrix elements of the gradient operator between Bloch functions.
The interest of the Luttinger-Kohn wave functions is that the wave vector $k$ only appears in the plane wave and not in the standing periodic part $u_{n,0}$.  This will allow us to decompose the wave function in a nice way 
for which we will prove some Hilbert analysis type results. This is the envelope function decomposition that we detail in the following section.
\subsection{Envelope functions}
\par
In the following, we shall use the symbol $\cF$ to denote the Fourier transformation 
on $L^2(\mR^d)$ 
\begin{equation}
\label{fourier}
\cF \psi (k) = {1\over (2\pi)^{d/2}} \int_{\mR^d}  e^{-i x\cdot k } \psi(x) \, dk
\end{equation}
and $\cF^* = \cF^{-1}$ for the inverse transformation.
We shall use a hat, $\hat \psi = \cF\psi$, for the Fourier transform of $\psi$.
\begin {definition}
We define $L^2_\cB(\mR^d) \subset L^2(\mR^d)$ to be  the subspace
of $L^2$-functions supported in $\cB$:
\begin{equation}
\label{L2Bdef}
  L^2_\cB(\mR^d) = \left\{ f \in L^2(\mR^d)\  \left| \ 
  \supp \big(f\big) \subset \cB \right.\right\}.
\end{equation}
Thus, $\cF^*L^2_\cB(\mR^d)$ is the space of $L^2$-functions whose Fourier transform 
is supported in $\cB$.
\end {definition}
The envelope function decomposition is defined by the following theorem.
\begin{theorem}
\label{T1}
Let $v_n : \mR^d \to \mC$ be $\cL$-periodic functions such that
$\{ v_n \mid n \in \mN\}$ is an orthonormal basis of $L^2(\cC)$.
For every $\psi \in L^2(\mR^d)$ there exists a unique sequence
$\{ f_n \in \cF^*L^2_\cB(\mR^d)\mid n \in \mN\}$ such that
\begin{equation}
\label{EFdec}
 \psi = \abs{\cC}^{1/2}\,\sum_n f_n\,v_n.
\end{equation}
We shall denote $f_n = \pi_n(\psi)$. The decomposition satisfies the Parseval identity
\begin{equation}
\label{Parseval}
 \bk{\psi,\varphi}_{L^2(\mR^d)} = \sum_n \bk{\pi_n(\psi), \pi_n(\phi)}_{L^2(\mR^d)}.
\end{equation}
For any $\eps > 0$ we shall consider the scaled version $f_n^\eps = \pi_n^\eps (\psi)$ of the envelope 
function decomposition as follows:
\begin{equation}
\label{EF}
  \psi(x) = \abs{\cC}^{1/2}\,\sum_n f_n^\eps(x) \,v_n^\eps(x),
\end{equation}
with $\hat f_n^\eps  \in L^2_{\cB/\eps}(\mR^d)$, where  
\begin{equation}
\label{vScaled}
  v_n^\eps(x) = v_n\left( \frac{x}{\eps} \right).
\end{equation}
We still have the Parseval identity
\begin{equation}
\label{Parsevaleps}
 \bk{\psi,\varphi}_{L^2(\mR^d)} = \sum_n \bk{\pi_n^\eps (\psi), \pi_n^\eps (\varphi)}_{L^2(\mR^d)}.
\end{equation}
Finally, the Fourier transforms  of the $\eps$-scaled  envelope functions are given by 
\begin{equation}
\label{EFChiEps}
 \hat f_n^\eps(k) = \int_{\mR^d} \ol \cX_{n,k}^\eps (x)\,\psi(x)\,dx,
\end{equation}
where, for $x \in \mR^d$, $k \in \mR^d$, $n \in \mN$, 
\begin{equation}
\label{ChiEps}
 \cX_{n,k}^\eps(x) = \abs{\cB\,}^{-1/2}\,\cara_{\cB/\eps}(k) \,\e^{ik\cdot x}\,v_n^\eps(x).
\end{equation}
\end{theorem}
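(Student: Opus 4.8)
The plan is to realize the decomposition \eqref{EFdec} as the composition of two unitary maps: the Bloch--Floquet (Zak) transform, which turns $L^2(\mR^d)$ into an $L^2$-space of fiberwise $\cL$-periodic functions over the Brillouin zone, followed by the fiberwise expansion in the orthonormal basis $\{v_n\}$ of $L^2(\cC)$. Concretely, for $\psi$ in a dense class (say Schwartz functions, or functions with $\hat\psi \in C_c(\mR^d)$) I would set
$$
 (Z\psi)(k,y) = \sum_{\gamma \in \cL} \e^{-ik\cdot(y+\gamma)}\,\psi(y+\gamma), \qquad k \in \cB,\ y \in \cC,
$$
and check the two structural properties that make $Z$ meaningful: $(Z\psi)(k,\cdot)$ is $\cL$-periodic (reindex the lattice sum), and $\abs{\cB}^{-1/2}Z$ is an isometry into $L^2(\cB\times\cC)$. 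Expanding each fiber $(Z\psi)(k,\cdot)$ in $\{v_n\}$ produces coefficients $\hat f_n(k) = \abs{\cB}^{-1/2}\bk{v_n,(Z\psi)(k,\cdot)}_{L^2(\cC)}$, and unfolding the $L^2(\cC)$ inner product over the cells $\cC+\gamma$ recovers exactly the formula \eqref{EFChiEps}--\eqref{ChiEps}. Setting $f_n = \cF^*(\hat f_n)$ (extended by zero outside $\cB$) gives $f_n \in \cF^* L^2_\cB(\mR^d)$, as required.

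The heart of the argument is the isometry computation. Here I would tile $\mR^d = \bigcup_{\gamma\in\cL}(\cC+\gamma)$, write $\abs{(Z\psi)(k,y)}^2 = \sum_{\gamma,\gamma'}\e^{-ik\cdot(\gamma-\gamma')}\psi(y+\gamma)\,\ol{\psi(y+\gamma')}$, and integrate in $k$ over $\cB$ using $\int_\cB \e^{-ik\cdot(\gamma-\gamma')}\,dk = \abs{\cB}\,\delta_{\gamma\gamma'}$ (which follows from the change of variables $k = L^* t$ and \eqref{RecLatt}). This collapses the double sum to $\abs{\cB}\sum_\gamma\abs{\psi(y+\gamma)}^2$, and integrating in $y$ over $\cC$ reassembles $\abs{\cB}\,\norma{\psi}_{L^2(\mR^d)}^2$, so $\norma{Z\psi}_{L^2(\cB\times\cC)}^2 = \abs{\cB}\,\norma{\psi}^2$. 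Surjectivity of $\abs{\cB}^{-1/2}Z$ onto $L^2(\cB\times\cC)$ follows by exhibiting the explicit inverse $(Z^{-1}F)(x) = \abs{\cB}^{-1}\int_\cB \e^{ik\cdot x}\,F(k,x)\,dk$ (with $F(k,\cdot)$ read as its periodic extension). Composing $\abs{\cB}^{-1/2}Z$ with the fiberwise expansion, unitary from $L^2(\cC)$ onto $\ell^2(\mN)$ by Parseval in $L^2(\cC)$, then shows that $\psi \mapsto (\hat f_n)_n$ is a unitary isomorphism of $L^2(\mR^d)$ onto $\ell^2(\mN; L^2(\cB))$. Unitarity delivers the whole statement at once: surjectivity gives existence together with the reconstruction $\abs{\cC}^{1/2}\sum_n f_n v_n = \psi$ (the explicit form of $Z^{-1}$ makes this identity manifest), injectivity gives uniqueness, and the isometry gives the Parseval identity \eqref{Parseval}; here \eqref{CB} is used to match the constants and Plancherel for $\cF$ to pass between $\hat f_n$ and $f_n$.

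Two points need care. First, the manipulations above (interchanging $\sum_\gamma$ with $\int_\cB$, applying Fubini, and evaluating the inverse transform) are only immediately legitimate on the dense class where the lattice sums converge absolutely; I would establish every identity there and extend to all of $L^2(\mR^d)$ by continuity, the isometry bound being exactly what makes the extension well defined. In particular, the synthesis series $\abs{\cC}^{1/2}\sum_n f_n v_n$ converges in $L^2(\mR^d)$ because its partial sums are images of the truncated coefficient sequences under the bounded reconstruction map, so $L^2$-convergence is inherited from the $\ell^2$-convergence of $(\hat f_n)_n$. This is the main obstacle, since the individual summands $f_n v_n$ are not mutually orthogonal and $\norma{f_n v_n}$ is not controlled by $\norma{f_n}$ termwise, so convergence cannot be read off directly from the expansion and genuinely relies on the unitary framework.

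Second, the scaled statement \eqref{EF}--\eqref{Parsevaleps} follows from the case $\eps = 1$ by the dilation $x \mapsto x/\eps$: applying the unscaled decomposition to $\psi(\eps\,\cdot)$ and undoing the change of variables sends $v_n \mapsto v_n^\eps$ as in \eqref{vScaled}, rescales the band-limiting region from $\cB$ to $\cB/\eps$, and leaves the Parseval identity invariant, yielding \eqref{EFChiEps}--\eqref{ChiEps} with $\cara_\cB$ replaced by $\carB$.
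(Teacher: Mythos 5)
Your argument is correct, but it runs along a genuinely different (Fourier-dual) track from the paper's. You build everything from the Zak transform: a direct-lattice sum $\sum_{\gamma\in\cL}\e^{-ik\cdot(y+\gamma)}\psi(y+\gamma)$ in physical space, whose isometry rests on the character orthogonality $\int_\cB\e^{-ik\cdot(\gamma-\gamma')}\,dk=\abs{\cB}\,\delta_{\gamma\gamma'}$ and the tiling of $\mR^d$ by the cells $\cC+\gamma$. The paper instead works on the frequency side: it splits $\hat\psi$ over the reciprocal-lattice translates $\cB+\eta$, writes $\psi=\sum_{\eta\in\cL^*}\e^{i\eta\cdot x}G_\eta(x)$ with $G_\eta\in\cF^*L^2_\cB(\mR^d)$, forms the auxiliary two-variable function $F(x,y)=\sum_\eta\e^{i\eta\cdot x}G_\eta(y)\in L^2(\cC\times\mR^d)$, and then performs exactly the basis-change step you perform --- from the Fourier basis $\{\abs{\cC}^{-1/2}\e^{i\eta\cdot x}\}$ of $L^2(\cC)$ to $\{v_n\}$ --- before setting $y=x$. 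The two constructions are Poisson-summation duals of each other, and both need a dense class (Schwartz functions) plus extension by continuity, for the very reason you identify: the summands $f_nv_n$ are not orthogonal, so convergence of the synthesis series must be inherited from the unitary framework rather than read off termwise. What your route buys: unitarity onto the concrete space $L^2(\cB\times\cC)$ with an explicit inverse, so existence, uniqueness, Parseval and the coefficient formula \eqref{EFChiEps} all drop out of a single statement, and the $\eps$-scaled case follows cleanly by dilation (a point the paper leaves implicit). What the paper's route buys: the splitting of $\hat\psi$ over $\cB+\eta$ is an exact $L^2$ identity by countable additivity of the integral, so no absolutely convergent lattice sums are manipulated and no surjectivity check arises. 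On that last point, one step of your outline deserves more than the passing mention you give it: exhibiting $Z^{-1}$ as a \emph{left} inverse ($Z^{-1}Z=I$, the computation you sketch) gives injectivity of $Z$, but the uniqueness assertion of the theorem requires $Z$ to be \emph{onto} $L^2(\cB\times\cC)$, i.e.\ $ZZ^{-1}=I$ as well, which is a second computation (distributional Poisson summation, or a density-of-range argument) --- standard, but it should be carried out, not folded into the phrase ``explicit inverse''. Finally, your formula $\hat f_n(k)=\abs{\cB}^{-1/2}\bk{v_n,(Z\psi)(k,\cdot)}$ reproduces \eqref{EFChiEps} only under the convention that $\bk{\cdot,\cdot}$ is conjugate-linear in its first slot; with the opposite convention you would obtain the complex conjugate, so state the convention explicitly.
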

The proof of this theorem is postponed  to Section \ref{post}.
\begin{remark}
 Note that the above result is a variant of the so-called Bloch transform. In \cite{allaire08}, the function
$$\widehat{\psi} (x,k) = \abs{\cC}^{1/2}\,\sum_n \widehat f_n (k) \,v_n(x)$$
is referred to as the Bloch transform of $\psi$. We  also refer to \cite{allaire98},  \cite{ReedSimonIV78} 
and \cite{kuch93} for Bloch wave methods in periodic media.
\end{remark}
\begin {definition}
The functions $f_n = \pi_n(\psi)$ of Theorem \ref{T1} will be called the {\em envelope functions} 
of $\psi$ relative to the basis $\{ v_n \mid n\in\mN\}$, while $f_n^\eps = \pi_n^\eps(\psi)$ will be called the 
$\eps$-scaled envelope function relative  to the basis $\{ v_n \mid n\in\mN\}$
\end {definition}
\begin{theorem}
\label{T2}
Let us consider the $\eps$-scaled envelope function decomposition \eqref{EF} of $\psi \in L^2(\mR^d)$.
Then, for every $\theta \in L^1(\mR^d)$ such that $\hat\theta \in L^1(\mR^d)$, we have
\begin{equation}
 \lim_{\eps\to 0} \int_{\mR^d} \theta(x)\left[|\psi(x)|^2 -\sum_{n} |f_n^\eps(x)|^2\right] dx = 0.
\end{equation}
\end{theorem}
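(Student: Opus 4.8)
The plan is to establish the identity first for band-limited $\psi$, where the bracket in the integrand in fact vanishes identically for small $\eps$, and then to reach a general $\psi\in L^2(\mR^d)$ by density, the approximation error being controlled uniformly in $\eps$ through the Parseval identity \eqref{Parsevaleps}. Throughout write $I_\eps(\psi)=\int_{\mR^d}\theta\,\big[\abs{\psi}^2-\sum_n\abs{f_n^\eps}^2\big]\,dx$. Assume first $\supp\hat\psi\subset\{\abs{k}\le R\}$. From \eqref{EFChiEps}--\eqref{ChiEps} one has $\hat f_n^\eps(k)=\abs{\cB}^{-1/2}(2\pi)^{d/2}\,\carB(k)\,\widehat{\ol{v_n^\eps}\,\psi}(k)$, and I would expand the $(\eps\cL)$-periodic factor $\ol{v_n^\eps}$ in its Fourier series, whose frequencies lie in $\eps^{-1}\cL^*$. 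Transforming the product then yields $\widehat{\ol{v_n^\eps}\,\psi}(k)=\sum_{\gamma\in\cL^*}\ol{\hat v_{n,\gamma}}\,\hat\psi(k+\gamma/\eps)$, with $\hat v_{n,\gamma}$ the Fourier coefficients of $v_n$. The decisive observation is that $\hat\psi(\cdot+\gamma/\eps)$ is supported in $-\gamma/\eps+\{\abs{k}\le R\}$, which meets $\cB/\eps$ only when $\mathrm{dist}(\gamma,\ol\cB)\le\eps R$. As $\gamma=L^*j\in\cL^*$ lies in $\ol\cB$ only for $j=0$, the gap $\rho:=\mathrm{dist}(\cL^*\setminus\{0\},\ol\cB)$ is strictly positive, so for $\eps$ small enough (also ensuring $\{\abs{k}\le R\}\subset\cB/\eps$) every term with $\gamma\neq0$ disappears and only the zeroth harmonic remains. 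This produces the exact identity $f_n^\eps=c_n\psi$, with $c_n=\abs{\cC}^{-1/2}\bk{1,v_n}_{L^2(\cC)}$ independent of $\eps$ (using $\abs{\cB}\abs{\cC}=(2\pi)^d$, \eqref{CB}). Consequently $\sum_n\abs{f_n^\eps}^2=\big(\sum_n\abs{c_n}^2\big)\abs{\psi}^2$, and $\sum_n\abs{c_n}^2=1$ is just Parseval in $L^2(\cC)$ for the constant $1$, namely $\sum_n\abs{\bk{1,v_n}_{L^2(\cC)}}^2=\norma{1}_{L^2(\cC)}^2=\abs{\cC}$. Hence $\sum_n\abs{f_n^\eps}^2=\abs{\psi}^2$ identically and $I_\eps(\psi)=0$ for all small $\eps$.

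For general $\psi\in L^2(\mR^d)$ I would set $\psi_\delta=\cF^*\big(\cara_{\{\abs{k}\le 1/\delta\}}\hat\psi\big)$, so that $\psi_\delta\to\psi$ in $L^2$ with $\hat\psi_\delta$ compactly supported, and write $f_{n,\delta}^\eps=\pi_n^\eps(\psi_\delta)$. Using $\abs{a}^2-\abs{b}^2=\mathrm{Re}\,[(a-b)\ol{(a+b)}]$, Cauchy--Schwarz in the index $n$, the linearity of $\pi_n^\eps$ and \eqref{Parsevaleps}, one gets $\norma{\sum_n\abs{f_n^\eps}^2-\sum_n\abs{f_{n,\delta}^\eps}^2}_{L^1}\le\norma{\psi-\psi_\delta}\,(\norma{\psi}+\norma{\psi_\delta})$, and the same estimate for $\norma{\abs{\psi}^2-\abs{\psi_\delta}^2}_{L^1}$. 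Since $\hat\theta\in L^1(\mR^d)$ forces $\theta\in L^\infty(\mR^d)$, these bounds give $\abs{I_\eps(\psi)-I_\eps(\psi_\delta)}\le C\,\norma{\theta}_{L^\infty}\,\norma{\psi-\psi_\delta}$ uniformly in $\eps$. Given $\eta>0$, I would fix $\delta$ so that this is $<\eta$; the band-limited case then forces $I_\eps(\psi_\delta)=0$ for $\eps$ small, whence $\limsup_{\eps\to0}\abs{I_\eps(\psi)}\le\eta$, and $\eta\to0$ concludes.

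The step I expect to be the main obstacle is the band-limited identity, where one must annihilate \emph{all} nonzero periodic harmonics simultaneously with a single $\eps$; this is exactly what the uniform spectral gap $\rho=\mathrm{dist}(\cL^*\setminus\{0\},\ol\cB)>0$ delivers. By contrast, the seemingly more direct route of expanding the oscillatory products $v_n^\eps\,\ol{v_m^\eps}$ in Fourier series and estimating term by term runs into trouble, since the Fourier coefficients of $v_n\ol{v_m}\in L^1(\cC)$ need not be summable; reducing to band-limited $\psi$ avoids this difficulty completely.
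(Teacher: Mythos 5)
Your proof is correct, but it follows a genuinely different route from the paper's. The paper never reduces to band-limited $\psi$: it truncates in frequency, at scale $\cB/3\eps$, both the test function $\theta$ and the envelope functions of the given $\psi$ (via the operator $\cT_{1/(3\eps)}$ of \eqref{truncation}), and its key cancellation is that the products $\tilde{\theta}^\eps \tilde{f}_n^\eps$ have Fourier support in $\cB/3\eps+\cB/3\eps\subset\cB/\eps$, hence are exactly the $\eps$-scaled envelope functions of $\tilde{\theta}^\eps\tilde{\psi}^\eps$; the Parseval identity \eqref{Parsevaleps} then annihilates that term, and the four remaining error terms are controlled by $\norma{\theta-\tilde{\theta}^\eps}_{L^\infty}$ (which tends to zero precisely because $\hat\theta\in L^1$) and by dominated convergence. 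You instead freeze an $\eps$-independent band-limited approximation $\psi_\delta$ and prove a stronger, exact statement for it: once $\eps$ is small, $f_{n,\delta}^\eps=c_n\psi_\delta$ with $\sum_n\abs{c_n}^2=1$, so the bracket vanishes identically, and an $\eps$-uniform Parseval estimate transfers the conclusion to general $\psi$. Both arguments rest on the same two pillars, the Parseval identity \eqref{Parsevaleps} and the support arithmetic of $\cB$ (the paper's Lemma \ref{gammab} versus your gap $\rho=\mathrm{dist}(\cL^*\setminus\{0\},\ol\cB)>0$, which are two faces of the same fact), but distribute the work differently. Your version isolates a clean structural insight -- the scaled envelope decomposition of a fixed band-limited function is eventually trivial, consistent with the limit $\pi_n^\eps(\psi)\to\abs{\cC}^{-1/2}\bk{v_n,1}\,\psi$ computed in Section \ref{sec6} -- and it only ever uses $\theta\in L^\infty$, so it in fact proves a slightly more general statement (the hypothesis $\hat\theta\in L^1$ enters only to guarantee boundedness of $\theta$). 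The paper's version keeps $\psi$ completely arbitrary throughout, and its truncation mechanism is reused verbatim in the proof of Theorem \ref{MainTheorem}, which is an advantage of internal economy.

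Two points you should tighten. First, the termwise Fourier transform of $\ol{v_n^\eps}\,\psi_\delta=\sum_{\gamma\in\cL^*}\ol{\hat v_{n,\gamma}}\,\e^{-i\gamma\cdot x/\eps}\psi_\delta$ deserves a line of justification: for band-limited $\psi_\delta$ and small $\eps$ the shifted spectra $-\gamma/\eps+\{\abs{k}\le R\}$ are pairwise disjoint, so the series on the Fourier side is orthogonal with $\ell^2$ coefficients and converges in $L^2$, which legitimates the identity. Second, you can bypass both the series and the gap argument entirely: since $\sum_n c_n v_n=1$ in $L^2(\cC)$, the sequence $(c_n\psi_\delta)$ satisfies the decomposition \eqref{EF} of $\psi_\delta$ and has Fourier supports in $\cB/\eps$ as soon as $\{\abs{k}\le 1/\delta\}\subset\cB/\eps$; the uniqueness statement in Theorem \ref{T1} then gives $f_{n,\delta}^\eps=c_n\psi_\delta$ directly, under an even weaker smallness condition on $\eps$.
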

The proofs of this theorem is also postponed to Section \ref{post}.
\subsection{Functional spaces}
In this section, we define some functional spaces which will be used all along the paper.
\begin {definition}
\label{Spaces}
We define the space  $\cL^2  = \ell^2\left(\mN,L^2(\mR^d)\right)$ as the Hilbert space of sequences 
$g = (g_0,g_1,\ldots)$, $g_n = g_n(k)$,  with $g_n \in L^2(\mR^d)$, such that 
\begin{equation}
  \norma{g}_{\cL^2}^2 = \sum_n \norma{g_n}^2_{L^2(\mR^d)} < \infty.
\label{L2}
\end{equation}
Moreover, for $\mu \geq 0$ let $\cL^2_\mu$ be the subspace of all sequences $g \in \cL^2$ 
such that
\begin{equation}
\label{L2mu}
 \norma{g}_{\cL^2_\mu}^2 = \norma{(1+\abs{k}^2)^{\mu/2}g}_{\cL^2}^2 
 =  \sum_{n} \norma{(1+\abs{k}^2)^{\mu/2}g_n}_{L^2}^2< \infty
\end{equation} 
and let $\cH^\mu =  \ell^2\left(\mN,H^\mu(\mR^d)\right)$, with
\begin{equation}
\label{Hmu}
 \norma{f}_{\cH^\mu}^2 = \sum_{n} \norma{f_n}_{\cH^\mu}^2 
 =  \sum_{n} \norma{(1+\abs{k}^2)^{\mu/2}\widehat{f}_n}_{L^2}^2< \infty.
\end{equation} 
It is readily seen that $f\in \cH^\mu$ if and only if $\widehat{f}\in \cL^2_\mu$.
Let us redefine the eigenpairs $(E_n,v_n)$ of the operator $H_\cL^1 = -\frac{1}{2}\Delta + W_\cL$ 
with periodic boundary conditions by
\begin{equation}
\label{periodic}
\left\{
\begin{aligned}
&-\textstyle{\frac{1}{2}}\Delta v_n + W_\cL v_n = E_n v_n, \quad \text{\rm on $\cC$}
\\[6pt]
&\int_\cC |v_n|^2\, dx = 1, \quad \text{\rm $v_n$ periodic}
\end{aligned}
\right.
\end{equation} 
(note that $v_n = u_{n,0}$, according to Definition \ref{BlochDef}).
The sequence $E_n$ is increasing and tends to $+\infty$.
\end {definition}
Let us now define the functional spaces for the external potential:
\begin{equation}
\label{vdeff}
  \cW_\mu = \Big\{ V \in L^\infty(\mR^{2d}) \ \Big| \ 
  V(\cdot,z+\lambda) = V(\cdot,z),\ \lambda \in \cL,\
  \norma{V}_{\cW_\mu} < \infty \Big\},
\end{equation}
where
\begin{equation}
\label{Mdef}
  \norma{V}_{\cW_\mu} = \frac{1}{(2\pi)^{d/2}}\, \mathop\mathrm{ess\,sup}\limits_{z \in \cC}
 \int_{\mR^d} (1 + \abs{k})^\mu |\hat V(k,z)|\,dk
\end{equation}
and $\ds \widehat{V}(k,z) = (2\pi)^{-d/2}\int_{\mR^d} \e^{-ik\cdot x}\, V(x,z)\,dx$.
\par
We finally define for any positive constant $\gamma$  the truncation operator 
\begin{equation}
\label{truncation}
\cT_\gamma (f) = \cF^* (\cara_{\gamma \cB} \widehat{f}).
\end{equation}
It is now readily seen that the truncation operator satisfies for any nonnegative real numbers $s,\mu$,
\begin{equation}
\norma{f-\cT_\gamma f}_{H^s} \leq C \gamma^{-\mu} \norma{f}_{H^{s+\mu}},
\label{truc-error}
\end{equation}
where $C>0$ is a suitable constant independent of $\gamma$. 
\subsection{Main Theorem}
We announce in this section the main theorem of our paper. 
We recall that $(v_n,E_n)$ are defined by \eqref{periodic}.
\begin{theorem}
\label{main}
Assume that $W_\cL \in L^\infty$ and that all the eigenvalues $E_n = E_n(0)$ are simple.
Let $\psi^{in,\eps}$ be an initial datum in $L^2(\mR^d)$, let $f^{in,\eps}_n= \pi_n^\eps(\psi^{in,\eps})$ 
be its scaled envelope functions relative to the basis $v_n$. 
Assume that the sequence $f^{in,\eps}$ belongs to $\cH^\mu$, with a uniform bound for the norm 
as $\eps$ vanishes,  and  that it converges in $\cL^2$ as $\eps$ tends to zero to an initial datum $f^{in}$. 
Let $\psi^\eps$ be the unique solution of 
\begin{equation}
\label{SE1}
\begin{aligned}
  &i\partial_t\, \psi^\eps(t,x) = \left( -\frac{1}{2}\Delta + \frac{1}{\eps^2}
   \,W_\cL\left(\frac{x}{\eps}\right) + V\left(x,\frac{x}{\eps}\right) \right) \psi^\eps(t,x),
   \\
   &\psi(t=0) = \psi^{in,\eps},
   \end{aligned}
\end{equation}
and assume that $V\in \cW_\mu$ for a positive $\mu$.
Then for any $\theta\in L^1(\mR^d)$ such that $\widehat{\theta} \in L^1(\mR^d)$, we have the following 
local uniform convergence in time 
$$
  \int |\psi^\eps(t,x)|^2\theta(x)\, dx \to \int \sum_n |h_n(t,x)|^2 \theta(x)\, dx
$$  
where the envelope function $h_n$ is the unique solution of the homogenized Schr\"od\-in\-ger equation
$$  
i\partial_t\, h_n =   
  - \frac{1}{2} \DIV \left( \mM_n^{-1} \nabla h_n \right)
  + V_{nn}(x)\,h_n,\quad h_n(t=0) = f^{in}_n,
$$
with
$$
  V_{nn} = \int_\cC V(x,z)|v_n(z)|^2\, dz
$$
and
$$
  \mM_n^{-1}  = \nabla\otimes\nabla \,E_n(k)_{\,|k=0} 
  = I - 2 \sum_{n' \not= n} \frac{P_{nn'} \otimes P_{n'n}}{E_n - E_{n'}}.
$$
(effective mass tensor of the $n$-th band).
\end{theorem}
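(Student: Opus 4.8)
The plan is to pass through the envelope‑function formulation provided by Theorem~\ref{T1}, rewrite \eqref{SE1} as an infinite coupled system for the scaled envelope functions $f_n^\eps(t)=\pi_n^\eps(\psi^\eps(t))$, and then reach the homogenized system through a chain of intermediate models whose dynamics are compared quantitatively. \textbf{Step 1 (envelope reformulation).} Writing $\psi^\eps=\abs{\cC}^{1/2}\sum_n f_n^\eps v_n^\eps$ and applying $H_\cL^\eps$ termwise as in the scaled analogue of \eqref{HLK}, one checks that the differential part acts on the Fourier transforms $\hat f_n^\eps$ (supported in $\cB/\eps$) as the matrix multiplier $\tfrac1{\eps^2}\tilde H(\eps k)$, where
$$ \tilde H(\kappa)_{nn'} = E_n\delta_{nn'} + \tfrac12|\kappa|^2\delta_{nn'} - i\kappa\cdot P_{nn'} $$
is exactly the matrix of the fiber Hamiltonian $H_\cL(\kappa)$ of \eqref{HkDef} in the basis $\{v_n\}$, so its eigenvalues are the energy bands $E_n(\kappa)$. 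Hence \eqref{SE1} becomes
$$ i\partial_t f_n^\eps = \frac{E_n}{\eps^2}\,f_n^\eps - \tfrac12\Delta f_n^\eps - \frac1\eps\sum_{n'}P_{nn'}\cdot\nabla f_{n'}^\eps + \sum_{n'}(V^\eps)_{nn'}\,f_{n'}^\eps, $$
where $(V^\eps)_{nn'}$ is the convolution‑type potential operator whose $\eps\to0$ limit, analyzed in Section~\ref{sec3}, is multiplication by the matrix $V_{nn'}(x)=\int_\cC V(x,z)\ol v_n(z)v_{n'}(z)\,dz$. The two dangerous features are the fast diagonal phase $E_n/\eps^2$ and the singular off‑diagonal coupling $\tfrac1\eps P_{nn'}\cdot\nabla$.

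\textbf{Step 2 (filtering and emergence of the effective mass).} I would filter the fast phase by setting $g_n^\eps=\e^{iE_nt/\eps^2}f_n^\eps$, which leaves the densities unchanged, $\abs{f_n^\eps}=\abs{g_n^\eps}$, and turns every off‑diagonal coupling into an oscillatory factor $\e^{i(E_n-E_{n'})t/\eps^2}$. Since the $E_n$ are simple, the inter‑band frequencies $(E_n-E_{n'})/\eps^2$ blow up, so by non‑stationary phase the off‑diagonal potential terms contribute $O(\eps^2)$ while the diagonal one converges to multiplication by $V_{nn}$. The singular terms $\tfrac1\eps P_{nn'}\cdot\nabla$ require a second‑order (two‑scale/Duhamel) averaging: reinjecting the fast off‑diagonal components into the $n$‑th equation produces the $O(1)$ contribution
$$ -\sum_{n'\neq n}\frac{(P_{nn'}\cdot\nabla)(P_{n'n}\cdot\nabla)}{E_n-E_{n'}}\,g_n^\eps. $$
Equivalently, diagonalizing $\tfrac1{\eps^2}\tilde H(\eps k)$ by its Bloch eigenvectors $U(\eps k)\to I$ replaces the differential part of band $n$ by the symbol $E_n(\eps k)/\eps^2$; the expansion of Section~\ref{sec4}, together with $\nabla E_n(0)=0$ (time‑reversal symmetry, $W_\cL$ real) and $\nabla\otimes\nabla E_n(0)=\mM_n^{-1}=I-2\sum_{n'\neq n}\tfrac{P_{nn'}\otimes P_{n'n}}{E_n-E_{n'}}$, gives
$$ \frac{E_n(\eps k)-E_n}{\eps^2}\ \longrightarrow\ \tfrac12\,k\cdot\mM_n^{-1}k, $$
the symbol of $-\tfrac12\DIV(\mM_n^{-1}\nabla)$. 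This identifies the limit system as the homogenized equation for $h_n$, with the correct initial data $g_n^\eps(0)=f_n^{\mathit{in},\eps}\to f_n^{\mathit{in}}=h_n(0)$.

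\textbf{Step 3 (quantitative comparison and conclusion).} I would make this rigorous by comparing $g^\eps$ with $h$ through the intermediate k$\cdot$p model (potential replaced by its limit, singular coupling retained), estimating each difference by Duhamel's formula and Grönwall's lemma. The uniform‑in‑$\eps$ bound of $f^{\mathit{in},\eps}$ in $\cH^\mu$ is propagated by the flow, since the k$\cdot$p multiplier commutes with $(1+\abs{k}^2)^{\mu/2}$ and $V\in\cW_\mu$ is designed to act boundedly on $\cH^\mu$; via the truncation estimate \eqref{truc-error} this lets me restrict $\hat g_n^\eps$ to $\abs{k}\le\gamma$ and to finitely many bands, rendering the non‑stationary‑phase gains $O(\eps^2)$ and the Taylor remainders $O(\eps\abs{k}^3)$ controllable, while $f^{\mathit{in},\eps}\to f^{\mathit{in}}$ in $\cL^2$ handles the data. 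This yields $\sup_{t\in[0,T]}\norma{g^\eps(t)-h(t)}_{\cL^2}\to0$. Finally, Theorem~\ref{T2} (in the form, uniform over $\cH^\mu$‑bounded families, that its proof provides) applied to $\psi^\eps(t)$ gives $\int\theta\,\abs{\psi^\eps(t)}^2 - \int\theta\sum_n\abs{f_n^\eps(t)}^2\to0$, whereas $\sum_n\abs{f_n^\eps}^2=\sum_n\abs{g_n^\eps}^2\to\sum_n\abs{h_n}^2$ in $L^1(\mR^d)$ from the $\cL^2$‑convergence; since $\hat\theta\in L^1$ forces $\theta\in L^\infty$, the two facts combine to the asserted convergence, locally uniformly in time.

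\textbf{Main obstacle.} The heart of the matter is the singular $\tfrac1\eps$ inter‑band coupling: nominally of size $\eps^{-1}$, it contributes at order one only through its second‑order self‑interaction, and this must be extracted uniformly over the infinitely many bands \emph{and} uniformly in $t\in[0,T]$. The spectral gaps guaranteed by simplicity of the $E_n$ control the small denominators $E_n-E_{n'}$, while the $\cH^\mu$ regularity controls band tails and high frequencies through \eqref{truc-error}; reconciling these two uniformities, and checking that the discarded oscillatory terms integrate to zero without the hidden $\tfrac1\eps$ factor spoiling the estimate, is, I expect, the delicate point.
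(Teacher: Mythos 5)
Your proposal is correct and follows essentially the same route as the paper: envelope-function reformulation, passage to the intermediate k$\cdot$p model by replacing $\cU^\eps$ with $\cU^0$, diagonalization of the k$\cdot$p differential part and second-order Taylor expansion of the bands to obtain the effective mass operator, filtering of the fast phases with the oscillatory off-diagonal potential terms integrating away (the paper's Theorem \ref{LastTeo}, via integration by parts in time after reducing to regular data and finitely many bands), and finally the density convergence through the truncation operator $\cT_{1/3\eps}$ and the Parseval identity. The technical devices you anticipate (Duhamel plus Gronwall, propagation of $\cL^2_\mu$ bounds through commutator estimates, band cutoff $\Pi_N$, frequency truncation) are exactly those used in Sections \ref{sec3}--\ref{sec5}, and your identification of the delicate point — extracting the $O(1)$ effect of the $\eps^{-1}$ inter-band coupling uniformly in bands and time — matches where the paper expends its effort.
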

\section{From the Schr\"odinger equation to the  k$\cdot$p model}
\label{sec3}
Let $\psi^\eps(t,x)$ be the solution of the Schr\"odinger equation \eqref{SE1} and let $f_n^\eps(t,x)$ be its 
$\eps$-scaled envelope function relative to the basis $v_n$  defined in 
\eqref{periodic} and \eqref{vScaled}: 
$$
  \psi^\eps(t,x) = \abs{\cC}^{1/2}\sum_n  f_n^\eps(t,x) v_n^\eps(x)
$$
Let us define $$g_n^\eps(t,k) = \widehat{f}_n(t,k).$$
\par
From now on, we will reserve the notation $f$ for functions of the position variable $x$, while $g$ will be 
used for functions of the wavevector $k$.
Multiplying the Schr\"odinger equation by $\overline{\cX_{n,k}^\eps (x)}$ (see Eq.~\eqref{ChiEps}) 
and integrating over $k$ leads  to the following equation 
\begin{equation}
\label{EXE2}
\begin{aligned}
  i\partial_t \,g_n^\eps(t,k) 
  &= \frac{1}{2}\,\abs{k}^2\,g_n^\eps(t,k) - \frac{i}{\eps} \sum_{n'} 
     k\cdot P_{nn'} g_{n'}^\eps(t,k) + \frac{1}{\eps^2}\,E_n\,g_n^\eps(t,k)
\\
  &+ \sum_{n'} \int_{\mR^d} U^\eps_{nn'}(k,k')\,g_{n'}^\eps(t,k') \,dk',
\end{aligned}
\end{equation}
where the kernel $U_{nn'}(k,k')$ is given by
$$
\begin{array}{lll}
 \ds U^\eps_{nn'}(k,k') &= &\ds \int_{\mR^d} \ol\cX_{n,k}^\eps(x)\, V\left(x,{x\over\eps}\right)\,\cX_{n',k'}^\eps(x)\,dx
\\[8pt]
 &=&  \ds  \abs{\cB\,}^{-1}\,\cara_{\cB/\eps}(k)   \int_{\mR^d}  \cara_{\cB/\eps}(k') \,\e^{-i(k-k')\cdot x}\,
 \overline{v_n}^\eps(x)\, V\left(x,{x\over\eps}\right)\,v_{n'}^\eps(x)\,dx.
  \end{array}
$$
By writing 
$$
  V(x,z) v_n(z) = \sum_{n'} V_{n'n}(x) v_{n'} (z),
$$
where
\begin{equation}
\label{VnmDef}
 V_{n'n}(x) = \int_\cC \ol v_{n'}(z)\, v_n(z)\, V(x,z)\,dz = \ol V_{nn'}(x),
\end{equation}
we can express $U^\eps_{nn'}(k,k')$ in the form
 \begin{equation}
\label{uneps}
 U^\eps_{nn'}(k,k') = \frac{\cara_{\cB/\eps}(k)}{\abs{\cB}}  \sum_{m} \int_{\mR^d}  \cara_{\cB/\eps}(k') 
 \,\e^{-i(k-k')\cdot x}\,\overline{v_n}^\eps(x)\, V_{mn'}(x) v_{m}^\eps(x)\,dx
\end{equation}
In position variables, the envelope functions satisfy the system
\begin{multline}
\label{EXE}
  i\partial_t \,f_n^\eps(t,x) = 
 {E_n\over \eps^2} f_{n}(t,x)
  - \textstyle{\frac{1}{2}} \Delta\,f_n^\eps(t,x) 
\\
  - \frac{1}{\eps} \sum_{n'\in\mN} P_{nn'}\cdot\nabla f_{n'}^\eps(t,x) 
  + \sum_{n'\in\mN} \int_{\mR^d} V_{n n'}^\eps(x,x')\, f_{n'}^\eps(t,x')\,dx',
\end{multline}
where
\begin{multline}
  V_{nn'}^\eps (x,x') = 
  \frac{1}{(2\pi)^d\abs{\cB}}\,  \int_{\cB/\eps} dk \int_{\mR^d}dy  \int_{\cB/\eps} dk'  \times
\\
  \times \left\{ \e^{ik\cdot x}   \e^{-i(k - k')\cdot y}\, \ol v_n^\eps(y)  
  V\left(y,{y\over\eps}\right) v_{n'}^\eps(y)\,\e^{-ik'\cdot x'}  \right\}
\end{multline}
From equation \eqref{EXE} we see that the fast oscillation scales are different for different envelope functions. 
This will naturally lead to adiabatic decoupling (see  \cite{hagedorn-joye,panati,spohn-teufel,teufel}).
\begin {definition}
\label{U}
Let us define the operator $\cU^\eps$ on $\cL^2$ as follows: for any element 
$g =(g_0, g_1, \ldots)$ of $\cL^2$
\begin{equation}
\label{Udef}
  \left( \cU^\eps g \right)_n(k) = 
  \sum_{n'} \int_{\mR^d} U^\eps_{nn'}(k,k')\,g_{n'}^\eps(k') \,dk'.
\end{equation}
Let us also define the operator $\bV^\eps$ on the position space $\cL^2$ by
\begin{equation}
\label{Vdef}
  \left( \bV^\eps f \right)_n(x) = 
  \sum_{n'} \int_{\mR^d} V^\eps_{nn'}(x,x')\,f_{n'}^\eps(x') \,dx'.
\end{equation}
We obviously have 
$$
  \widehat{\bV^\eps(f)} = \cU^\eps(\widehat{f}).
$$
\end {definition}
Since $v_n$ and $v_m$ are $\cL$-periodic, the formal limit of $ U^\eps_{nn'}(k,k')$ is  given by 
$$
  U^0_{nn'}(k,k') 
  = \sum_{m} { \bk{v_n,v_m}\over \abs{\cB} \abs{\cC}}\int_{\mR^d} \e^{-i(k-k')\cdot x}\, V_{mn'}(x)\, dx 
  =   {1 \over (2\pi)^{d/2} } \widehat{V_{nn'}}(k-k').
 $$
Therefore the formal limit of  $\cU^\eps$ is the operator $\cU^0$ defined by
 \begin{equation}
\label{U0def}
  \left( \cU^0 g \right)_n(k) = 
  \sum_{n'} \frac{1}{(2\pi)^{d/2}}\int_{\mR^d}\hat V_{nn'}(k-k')\,g_{n'}(k')\,dk',
\end{equation}
which means that the in position space the limit of $\bV^\eps$ is the non diagonal multiplication operator 
$\bV^0$ defined by 
\begin{equation}
\label{V0def}
  \left( \bV^0 f \right)_n(x) = 
  \sum_{n'} V_{nn'}(x)\,f_{n'}(x).
\end{equation}
The operators become diagonal in $n$ if $V(x,z)$ does not depend on $z$. 
Indeed, in this case $V_{nn'}(x) = V(x)\delta_{nn'}$.
The k$\cdot$p approximation found in semiconductor theory 
\cite{Wenckebach99}, consists in replacing  the operator $\cU^\eps$ by $\cU^0$. 
Let us now analyze the departure of $\cU^\eps$ from $\cU^0$.
\begin{lemma}
\label{Lemma0}
Let the external potential $V(x,z)$ be in $L^\infty$.
Then, for any $\eps \geq 0$, $\cU^\eps$ is a bounded operator on $\cL^2$ and we have 
the uniform bound 
\begin{equation}
\label{Knorm}
 \norma{\cU^\eps} \leq \norma{V}_{L^\infty},  \quad \forall\ \eps \geq  0.
\end{equation}
\end{lemma}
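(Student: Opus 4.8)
The plan is to exhibit $\cU^\eps$ as a three-fold composition --- reconstruct a wave function from its $k$-space envelope data, multiply by the potential, then re-decompose into envelope functions --- in which the first and last maps are (co)isometries furnished by the Parseval identity \eqref{Parsevaleps}, so that the whole norm is pinned down by the multiplication operator, whose norm is exactly $\|V\|_{L^\infty}$. No spectral or compactness input is needed; the argument is pure Hilbert-space bookkeeping, and the same structure underlies the position-space operator $\bV^\eps$ via the conjugacy $\widehat{\bV^\eps f}=\cU^\eps\widehat f$.

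For $\eps>0$ I would proceed in three steps. First, given $g\in\cL^2$, form $\psi(x) = \sum_n\int_{\mR^d}\cX_{n,k}^\eps(x)\,g_n(k)\,dk$. Because the indicator $\cara_{\cB/\eps}$ is built into $\cX_{n,k}^\eps$ (see \eqref{ChiEps}), only the restriction of each $g_n$ to $\cB/\eps$ contributes, and Theorem \ref{T1} says precisely that this reconstruction is isometric on the $\cB/\eps$-supported sequences; hence $\|\psi\|_{L^2}\le\|g\|_{\cL^2}$. Second, since $V$ is $\cL$-periodic in its second slot and bounded, multiplication by $V(\cdot,\cdot/\eps)$ obeys $\|V(\cdot,\cdot/\eps)\psi\|_{L^2}\le\|V\|_{L^\infty}\|\psi\|_{L^2}$. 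Third, by \eqref{EFChiEps} the sequence $(\cU^\eps g)_n(k) = \int\overline{\cX_{n,k}^\eps(x)}\,V(x,x/\eps)\psi(x)\,dx$ is nothing but the Fourier transform of the $\eps$-scaled envelope functions of $\phi:=V(\cdot,\cdot/\eps)\psi$, so \eqref{Parsevaleps} gives the equality $\|\cU^\eps g\|_{\cL^2} = \|\phi\|_{L^2}$. Chaining the three estimates yields the uniform bound $\|\cU^\eps g\|_{\cL^2}\le\|V\|_{L^\infty}\|g\|_{\cL^2}$.

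The case $\eps=0$ I would treat separately, since the family $\cX_{n,k}^\eps$ degenerates as $\eps\to0$. Here I would use the Fourier conjugacy $\widehat{\bV^0 f}=\cU^0\widehat f$ to reduce to the position-space operator $\bV^0$, which by \eqref{V0def} is pointwise multiplication by the matrix $(V_{nn'}(x))$. For each fixed $x$ this matrix represents, in the orthonormal basis $\{v_n\}$ of $L^2(\cC)$, multiplication by $V(x,\cdot)$ (recall the definition \eqref{VnmDef} of $V_{nn'}$); hence its $\ell^2$-operator norm equals $\|V(x,\cdot)\|_{L^\infty(\cC)}\le\|V\|_{L^\infty}$, and taking the essential supremum over $x$ gives $\|\cU^0\|=\|\bV^0\|\le\|V\|_{L^\infty}$.

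The main thing to get right is the exact identification in the third step of the $\eps>0$ case: one must check that the two cutoffs $\cara_{\cB/\eps}(k)$ and $\cara_{\cB/\eps}(k')$ appearing in the kernel $U_{nn'}^\eps$ are exactly the projections produced by reconstruction and by re-decomposition, and that the normalization constants match via $|\cC|\,|\cB|=(2\pi)^d$ (equation \eqref{CB}), so that $|\cB|^{-1/2}(2\pi)^{d/2}=|\cC|^{1/2}$. Once this constant- and cutoff-tracking is verified, the estimates are immediate; it is here, rather than in any genuine analysis, that all the care goes.
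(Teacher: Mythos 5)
Your proposal is correct and follows essentially the same route as the paper: for $\eps>0$ the paper likewise identifies $\cU^\eps$ as (envelope decomposition) $\circ$ (multiplication by $V(x,x/\eps)$) $\circ$ (reconstruction of $\psi^\eps$ from the truncated data $\cara_{\cB/\eps}g$), and bounds it via the Parseval identity \eqref{Parsevaleps} exactly as you do. For $\eps=0$ the paper phrases the argument as a bilinear-form estimate with Cauchy--Schwarz rather than your fiberwise $\ell^2$-matrix-norm computation, but both rest on the same identification of the matrix $\left(V_{nn'}(x)\right)$ with multiplication by $V(x,\cdot)$ on $L^2(\cC)$ in the basis $\{v_n\}$, so the content is the same.
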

\begin{proof}
Let us begin with the case $\eps = 0$. 
We remark that 
$$
  \cU^0 g = \widehat{\bV ^0 (f)},
$$
where $f = \cF^*(g)$. 
Let $G$  be another element of $\cL^2$, and let  $F$ be its back Fourier transform. 
We have
$$
\begin{aligned}
 \abs{\bk{\cU^0 g,G}} 
  & =  \abs{\bk{\bV^0 f, F}} = \abs{\sum_{nn'} \int V_{nn'}(x) f_{n'}(x) \overline{F_n}(x)\, dx} 
  \\[6pt]
  &= \abs{\sum_{nn'} \int V(x,z) v_{n'}(z)\overline{v_n}(z) f_{n'}(x)\overline{F_n}(x)\, dx\, dz} 
  \\[6pt]
  &=  \abs{\int  V(x,z) \left[\sum_n f_{n}(x) v_n(z) \right]\overline{\left[\sum_n F_{n}(x) v_n(z)\right]} dx\, dz}
  \\[2pt]
  &\leq \norma{V}_{L^\infty}\!\!  \left[\int  \Big| \sum_n f_{n}(x) v_n(z)\Big|^2 dx\, dz\right]^{\! {1\over 2}}
    \! \left[\int  \Big|\sum_n F_{n}(x) v_n(z)\Big|^2 dx\, dz \right]^{\! {1\over 2}}
   \\[6pt]
&\leq  \norma{V}_{L^\infty} \norma{f}_{\cL^2}\norma{F}_{\cL^2} = \norma{V}_{L^\infty} \norma{g}_{\cL^2}\norma{G}_{\cL^2}. 
\end{aligned}
$$
Since the result holds for any $g$ and $G$ in $\cL^2$, this implies that 
$\norma{\cU^0(g)}_{\cL^2} \leq \norma{V}_{L^\infty} \norma{g}_{\cL^2}.$
For $\eps > 0$ it is enough to observe that $\cU^\eps$ is unitarily equivalent to the multiplication operator 
by $V(x,{x\over\eps})$ in position space. 
More precisely, defining $f^\eps(x) = \cF^*(\carB g)$ and defining 
$\psi^\eps(x) = \sum_n f_n^\eps (x) v_n^\eps(x)$ so that $f_n^\eps = \pi_n^\eps(\psi^\eps)$, 
then it follows from the definition of $\cU^\eps$ that 
$$
  ( \cU^\eps g)_n  =\cF\left[ \pi_n^\eps \left( V\Big(x,{x\over\eps}\Big) \psi^\eps\right)\right].
$$
It is now readily seen that
$$
  \norma{\cU^\eps(g)}_{\cL^2}^2 = \norma{V\left(x,{x\over\eps}\right) \psi^\eps}_{L^2}^2 
  \leq \norma{V}_{L^\infty}^2 \norma{ \psi^\eps}_{L^2}^2
 \leq \norma{V}_{L^\infty}^2 \norma{g}_{\cL^2}^2.
$$
\end{proof}
\begin{lemma}
\label{gammab}
For any $\gamma > 0$ let $\gamma \cB$ be the set of $\gamma k$ where $k$ is in $\cB$. Then 
$$\gamma \cB + \beta \cB = (\gamma+\beta )\cB.$$
Moreover Let 
$k\in \cB$ and  $k'\in {1\over 3} \cB$. Let $\lambda$ a non vanishing element of the reciprocal lattice $\cL^*$. 
Then $k- k' + \lambda \notin {1\over 3} \cB$.
\end{lemma}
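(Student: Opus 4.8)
The plan is to pass to the coordinates in which $\cB$ and all its dilates become standard cubes. Since $L^T L^* = 2\pi I$ forces $L^* = 2\pi (L^T)^{-1}$, the matrix $L^*$ is invertible, so every point of $\cB$ is uniquely of the form $L^* t$ with $t \in [-\frac12,\frac12]^d$, and for $\gamma > 0$ one has $\gamma \cB = L^*([-\frac{\gamma}{2},\frac{\gamma}{2}]^d)$ because $\gamma L^* t = L^*(\gamma t)$. Working in these $L^*$-coordinates turns both statements into elementary facts about axis-parallel cubes.

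For the first claim I would compute the Minkowski sum directly: writing $\gamma \cB + \beta \cB = \{ L^*(s+s') \mid s \in [-\frac{\gamma}{2},\frac{\gamma}{2}]^d,\ s' \in [-\frac{\beta}{2},\frac{\beta}{2}]^d \}$ and pulling $L^*$ out by linearity, the identity reduces to the fact that the Minkowski sum of the cubes $[-\frac{\gamma}{2},\frac{\gamma}{2}]^d$ and $[-\frac{\beta}{2},\frac{\beta}{2}]^d$ equals $[-\frac{\gamma+\beta}{2},\frac{\gamma+\beta}{2}]^d$, which is precisely the $L^*$-preimage of $(\gamma+\beta)\cB$.

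For the second claim I would write $k = L^* a$, $k' = L^* b$ and $\lambda = L^* m$, so that $a \in [-\frac12,\frac12]^d$, $b \in [-\frac16,\frac16]^d$ and $m \in \mZ^d \setminus \{0\}$. Since $L^*$ is a bijection, the membership $k - k' + \lambda \in \frac13\cB$ is equivalent to $a - b + m \in [-\frac16,\frac16]^d$, i.e. to $\abs{(a-b+m)_j} \le \frac16$ for every coordinate $j$. Choosing an index $j$ with $m_j \ne 0$ (so $\abs{m_j} \ge 1$), the triangle inequality yields $\abs{a_j - b_j + m_j} \ge \abs{m_j} - \abs{a_j} - \abs{b_j} \ge 1 - \frac12 - \frac16 = \frac13 > \frac16$, so this single coordinate already breaks the membership condition and hence $k - k' + \lambda \notin \frac13\cB$.

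There is no serious obstacle here; the only point requiring care is the bookkeeping of half-widths in the $L^*$-coordinates. The factor $\frac13$ is exactly what makes the margin work: the half-width of $\cB$ is $\frac12$, that of $\frac13\cB$ is $\frac16$, and $1 - (\frac12 + \frac16) = \frac13$ stays strictly above $\frac16$, so a nonzero integer shift can never be absorbed back into $\frac13\cB$. This strict separation is precisely the geometric fact that will later prevent distinct Brillouin-zone translates from overlapping in the convolution analysis.
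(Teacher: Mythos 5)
Your proof is correct and follows exactly the route the paper indicates: the paper omits the proof as ``immediate (using the fact that $\cB$ is the linear deformation of a hypercube)'' and leaves it to the reader, and your reduction to axis-parallel cubes via $L^*$-coordinates, with the coordinatewise Minkowski-sum identity and the $1-\frac12-\frac16=\frac13>\frac16$ margin estimate, is precisely that argument carried out in full. Nothing is missing.
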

The proof of this lemma is immediate (using the fact that $\cB$ is the linear deformation of a hypercube,
see definition \eqref{Brillo}) and is left to the reader.
\begin{lemma}
\label{Lemma01}
Let $V \in \cW_0$ and $g \in \cL^2$ be such that 
$\supp\big(\hat V_{nm}\big) \subset {1\over 3 \eps}  \cB$ 
and $\supp(g_n) \subset {1\over 3 \eps}  \cB$, for all $n, m \in \mN$.
Then, in this case, $\,\cU^\eps g = \cU^0 g$.
\end{lemma}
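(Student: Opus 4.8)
The plan is to compute the kernel $U^\eps_{nn'}(k,k')$ of \eqref{uneps} explicitly and to check that, under the stated support restrictions, it reduces to the kernel $(2\pi)^{-d/2}\hat V_{nn'}(k-k')$ of $\cU^0$ in \eqref{U0def}. The key structural fact is that $\ol{v_n}^\eps(x)\,v_m^\eps(x)$ is $\eps\cL$-periodic, so it expands in a Fourier series over the scaled reciprocal lattice $\frac{1}{\eps}\cL^*$,
$$
  \ol{v_n}^\eps(x)\,v_m^\eps(x) = \sum_{\lambda\in\cL^*} c_{nm}(\lambda)\,\e^{i\lambda\cdot x/\eps},
  \qquad
  c_{nm}(\lambda) = \frac{1}{\abs{\cC}}\int_\cC \ol{v_n}(z)\,v_m(z)\,\e^{-i\lambda\cdot z}\,dz.
$$
Substituting this into \eqref{uneps} and performing the $x$-integration (the interchange of series and integral being legitimate since $V\in\cW_0$ makes each $\hat V_{mn'}$ integrable) replaces every term by a shifted Fourier transform of $V_{mn'}$:
$$
  U^\eps_{nn'}(k,k') = \frac{(2\pi)^{d/2}}{\abs{\cB}}\,\carB(k)\,\carB(k')
  \sum_m \sum_{\lambda\in\cL^*} c_{nm}(\lambda)\,
  \hat V_{mn'}\!\Big(k-k'-\tfrac{\lambda}{\eps}\Big).
$$

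Next I would isolate the $\lambda=0$ term. Orthonormality of the $v_n$ gives $c_{nm}(0)=\delta_{nm}/\abs{\cC}$, which collapses the $m$-sum to $\abs{\cC}^{-1}\hat V_{nn'}(k-k')$; together with $\abs{\cB}\,\abs{\cC}=(2\pi)^d$ from \eqref{CB}, the $\lambda=0$ contribution is exactly $(2\pi)^{-d/2}\carB(k)\,\carB(k')\,\hat V_{nn'}(k-k')$, i.e.\ the kernel of $\cU^0$ dressed with the two cutoffs.

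The heart of the argument is then to discard the aliased terms $\lambda\neq0$. Since $g_{n'}$ is supported in $\frac{1}{3\eps}\cB$, only $k'$ with $\eps k'\in\frac13\cB$ contribute to $\cU^\eps g$, while the factor $\carB(k)$ forces $\eps k\in\cB$. For a nonzero $\lambda\in\cL^*$, Lemma \ref{gammab} (applied to $\eps k\in\cB$, $\eps k'\in\frac13\cB$, and $-\lambda$) yields $\eps(k-k')-\lambda\notin\frac13\cB$, that is $k-k'-\frac{\lambda}{\eps}\notin\frac{1}{3\eps}\cB$; as $\supp(\hat V_{mn'})\subset\frac{1}{3\eps}\cB$, each such term vanishes. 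Hence, on the support that matters, $U^\eps_{nn'}$ equals its $\lambda=0$ part, and integrating against $g_{n'}$ gives $\cU^\eps g=\cU^0 g$ (both sides vanishing for $k\notin\frac{2}{3\eps}\cB$ by the first part of Lemma \ref{gammab}).

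The only delicate point I anticipate is bookkeeping rather than hard analysis: one must recognize that the $\eps\cL$-periodicity of the Bloch product generates the aliasing shifts $\lambda/\eps$, and verify that the $\tfrac13$-radius support hypotheses on $g$ and on $\hat V_{nm}$ are calibrated precisely so that Lemma \ref{gammab} annihilates every aliased contribution. The functional-analytic side, namely summability of the Fourier coefficients and the $L^1$ bound on $\hat V_{mn'}$, is routine once $V\in\cW_0$ is used.
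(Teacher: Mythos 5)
Your proof is correct and is essentially the paper's own argument: both rest on expanding the periodic Bloch product in a Fourier series over the reciprocal lattice, killing every aliased shift $\lambda \neq 0$ via Lemma \ref{gammab} together with the $\tfrac13$-calibrated supports of $\hat V_{nm}$ and $g_n$, identifying the surviving term through orthonormality of $(v_n)$, and finally dropping the cutoff $\carB(k)$ because the resulting convolution is supported in $\tfrac{2}{3\eps}\cB \subset \tfrac{1}{\eps}\cB$. The only cosmetic difference is that you expand the product $\overline{v_n}^\eps v_m^\eps$ in a single Fourier series, whereas the paper expands each factor separately into a double sum over $(\lambda,\lambda')$ and then invokes Parseval ($\sum_\lambda \overline{v_{m,\lambda}}\,v_{n',\lambda} = \delta_{mn'}$); your coefficient $c_{nm}(0)=\delta_{nm}/\abs{\cC}$ is exactly that diagonal sum, so the two computations coincide after re-indexing by $\mu = \lambda'-\lambda$.
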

\begin{proof}
Let us first notice that  $\{ \abs{\cC}^{-1/2}\,\e^{i\eta\cdot x} \mid \eta \in \cL^* \}$
is a orthonormal basis of $L^2(\cC)$ (the Fourier basis).
We first deduce from \eqref{uneps} and from the identity
$$
 v_n (y) = {1\over \abs{\cC}^{1/2}} \sum_{\lambda \in \cL^*} v_{n,\lambda} e^{i\lambda\cdot x}
$$
where $v_{n,\lambda} = \langle v_n, { e^{i\lambda\cdot x}\over \abs{\cC}^{1/2}}\rangle$  that 
\begin{multline*}
\begin{aligned}
  \left( \cU^\eps  g \right)_n(k)   =
 \sum_{\lambda,\lambda'\in \cL^*}\sum_{m,n'} \int_{\mR^d\times \mR^d}
   &e^{-i(k-k' +{\lambda-\lambda' \over \eps})\cdot x} \carB (k') \carB(k) \times
   \\
   &\times V_{nm}(x) \overline{v_{m,\lambda} } \,v_{n',\lambda'}\,   g_{n'} (k')\, dx\, dk' =
   \end{aligned}
\\[6pt]
  \carB(k) {(2\pi)^{d/2}} \!\!\! \sum_{\lambda,\lambda'\in \cL^*}
 \sum_{m,n'} \int_{\cB/\eps} \widehat{V}_{nm} \left(  k-k' +\textstyle{{\lambda-\lambda' \over \eps}}\right) 
  \overline{v_{m,\lambda} } \,v_{n',\lambda'} \,   g_{n'} (k')\, dx\, dk'.
\end{multline*}
Since the support of $g_{n'}$ is included in $\cB/3\eps$ and $k\in \cB/\eps$, Lemma \ref{gammab} 
implies that the only contributing terms to the above sum are those for which $\lambda = \lambda'$. 
Therefore, we are lead to evaluate  $\sum_{\lambda} \overline{v_{m,\lambda} } \,v_{n',\lambda}$ 
which is equal to $\bk{v_{n'}, v_m} = \delta_{mn'}$ because of the orthonormality of the family $(v_n)$. 
Therefore
 $$
  \left( \cU^\eps  g \right)_n(k)  = (2\pi)^{-d/2} \carB(k)\sum_{n'} \int_{\cB/\eps} 
  \widehat{V_{nn'}} (  k-k' )  g_{n'} (k')\, dx\, dk'.
$$
Now, we can remove   $\carB(k)$ from the right hand side of the above identity, since both the support of 
$g_{n'}$ and  that of $\widehat{V_{nn'}}$ are in ${1\over 3\eps} \cB$. 
Hence
$$
  \left( \cU^\eps  g \right)_n(k)  = (2\pi)^{-d/2}  \sum_{n'} \int_{\mR^d}  
  \widehat{V_{nn'}} (  k-k' )  g_{n'} (k')\, dx dk'  = \left( \cU^0  g \right)_n(k).
$$
\end{proof}
\begin{theorem}
\label{T3}
Assume that $V \in \cW_\mu$ for some $\mu \geq 0$. 
Then, a constant $c_\mu > 0$, independent of $\eps$, exists such that
\begin{equation}
\label{Kextima}
  \norma{\cU^\eps  g - \cU^0 g }_{\cL^2} \leq \eps^\mu\,c_\mu\,\norma{V}_{\cW_\mu} \,\norma{g}_{\cL^2_\mu}
\end{equation}
for all $g \in \cL^2_\mu$ and for all $\eps > 0$.
\end{theorem}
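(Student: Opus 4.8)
The plan is to compare $\cU^\eps$ and $\cU^0$ by splitting both the datum $g$ and the potential $V$ into a low-frequency part, on which the two operators coincide \emph{exactly} by Lemma~\ref{Lemma01}, and a high-frequency remainder, which is small because $g\in\cL^2_\mu$ and $V\in\cW_\mu$. Since the case $\mu=0$ is covered by Lemma~\ref{Lemma0}, I assume $\mu>0$. Fix $\eps>0$ and set the cutoff scale $\gamma=1/(3\eps)$. Introduce $g^{\flat}_n=\cara_{\gamma\cB}\,g_n$ and $g^{\sharp}_n=g_n-g^{\flat}_n$, so that $\supp(g^{\flat}_n)\subset\frac1{3\eps}\cB$, and split the potential in its $x$-Fourier variable by $\widehat{V^{\flat}}(k,z)=\cara_{\gamma\cB}(k)\,\widehat V(k,z)$, $V^{\sharp}=V-V^{\flat}$, so that $\supp\big(\widehat{V^{\flat}_{nm}}\big)\subset\frac1{3\eps}\cB$ for every $n,m$; note that truncating in $x$ preserves the $\cL$-periodicity in $z$, so both $V^{\flat},V^{\sharp}$ are admissible potentials. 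Because the kernels $U^\eps_{nn'}$ and $U^0_{nn'}$ depend linearly on $V$, the operators split as $\cU^\eps=\cU^\eps[V^{\flat}]+\cU^\eps[V^{\sharp}]$ and likewise for $\cU^0$, where the bracket indicates the potential from which the operator is built.

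Collecting the pieces and regrouping the two $V^{\sharp}$-terms, I write
\begin{equation*}
\cU^\eps g-\cU^0 g
=\big(\cU^\eps[V^{\flat}]g^{\flat}-\cU^0[V^{\flat}]g^{\flat}\big)
+\big(\cU^\eps[V^{\flat}]g^{\sharp}-\cU^0[V^{\flat}]g^{\sharp}\big)
+\big(\cU^\eps[V^{\sharp}]g-\cU^0[V^{\sharp}]g\big).
\end{equation*}
The first bracket vanishes identically: both $\widehat{V^{\flat}_{nm}}$ and $g^{\flat}_{n}$ are supported in $\frac1{3\eps}\cB$, so Lemma~\ref{Lemma01} gives $\cU^\eps[V^{\flat}]g^{\flat}=\cU^0[V^{\flat}]g^{\flat}$. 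It remains to estimate the two remainders, and for this I use the elementary geometric fact that, $\cB$ being a centered parallelepiped, it contains a ball $B(0,\rho_0)$ for some $\rho_0>0$ depending only on $L^*$; hence $\frac1{3\eps}\cB\supset B\!\big(0,\tfrac{\rho_0}{3\eps}\big)$ and its complement is contained in $\{|k|>\rho_0/(3\eps)\}$.

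On that complement one has $(1+|k|^2)^{\mu/2}\ge(\rho_0/(3\eps))^{\mu}$, which yields the two decay estimates needed. First, $\norma{g^{\sharp}}_{\cL^2}^2=\sum_n\int_{|k|>\rho_0/(3\eps)}|g_n|^2\,dk\le(3\eps/\rho_0)^{2\mu}\norma{g}_{\cL^2_\mu}^2$. Second, bounding $V^{\sharp}$ pointwise by the $L^1$-norm of its $x$-Fourier transform and using the same weight,
\begin{equation*}
\norma{V^{\sharp}}_{L^\infty}
\le\frac1{(2\pi)^{d/2}}\mathop{\mathrm{ess\,sup}}\limits_{z\in\cC}\int_{|k|>\rho_0/(3\eps)}|\widehat V(k,z)|\,dk
\le(3\eps/\rho_0)^{\mu}\,\norma{V}_{\cW_\mu}.
\end{equation*}
The second remainder is then handled by Lemma~\ref{Lemma0} applied with potential $V^{\flat}$, giving $\norma{\cU^\eps[V^{\flat}]g^{\sharp}-\cU^0[V^{\flat}]g^{\sharp}}_{\cL^2}\le2\norma{V^{\flat}}_{L^\infty}\norma{g^{\sharp}}_{\cL^2}$, where $\norma{V^{\flat}}_{L^\infty}\le\norma{V}_{\cW_0}\le\norma{V}_{\cW_\mu}$; the third remainder is handled likewise by Lemma~\ref{Lemma0} applied with potential $V^{\sharp}$, giving $\norma{\cU^\eps[V^{\sharp}]g-\cU^0[V^{\sharp}]g}_{\cL^2}\le2\norma{V^{\sharp}}_{L^\infty}\norma{g}_{\cL^2}$. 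Inserting the two decay estimates into these bounds produces a factor $\eps^\mu$ in each, and summing the three contributions gives \eqref{Kextima} with a constant $c_\mu$ of the form $C\,3^{\mu}\rho_0^{-\mu}$, depending only on $\mu$ and the lattice, not on $\eps$.

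The routine parts are the two high-frequency decay estimates, which are direct consequences of the weights defining $\cL^2_\mu$ and $\cW_\mu$. The genuine content — and the step I expect to be the crux — is the exact identity for the low-frequency parts: it is Lemma~\ref{Lemma01} that converts ``formal limit'' into ``exact equality'' once everything is localized below $1/(3\eps)$, and its applicability rests on placing the cutoff precisely at the scale $\frac1{3\eps}\cB$ dictated by the aliasing estimate of Lemma~\ref{gammab}. Keeping the bookkeeping of the bilinear splitting honest, so that every surviving cross term carries a genuine high-frequency factor, is where care is required.
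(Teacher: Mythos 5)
Your proposal is correct and takes essentially the same approach as the paper: the same cutoff of both the potential and the datum at the scale $\cB/3\eps$, Lemma~\ref{Lemma01} to get the exact identity on the low-frequency parts, Lemma~\ref{Lemma0} together with linearity in the potential for the remainders, and the same weighted tail estimates, leading to the same constant of the form $4(3/R)^\mu$. Your one-step bilinear splitting is just a regrouping of the paper's two-step argument (first $g$ supported in $\cB/3\eps$, then general $g$ via $g = \cara_{\cB/3\eps}g + \cara_{\cB/3\eps}^c g$); after expanding, the estimated terms coincide.
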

\begin{proof}
Let the smoothed potential $V_s^\eps$ be defined by
\begin{equation}
\label{UepsDef}
 \hat V_s^\eps(k,z) = \cara_{\cB/3\eps}(k)\,\hat V(k,z).
\end{equation}
Moreover, let $\cU^\eps_s$ denote the operator $\cU^\eps$ with the potential $V_s$.
Let us assume firstly that $\supp\,(g_n) \subset \cB/3\eps$ for all $n \in \mN$.
Then, from Lemma \ref{Lemma01} we have $\cU^\eps_s g = \cU^0_s g$ 
and we can write 
\begin{equation}
\label{aux0}
  \norma{\cU^\eps g - \cU^0 g}_{\cL^2} \leq
  \norma{\cU^\eps g - \cU^\eps_s g}_{\cL^2}  + \norma{\cU^0_s g - \cU^0 g}_{\cL^2} .
\end{equation}
Using \eqref{Knorm} and the linearity of $\cU^\eps$ and $\cU^0$ with respect to the potential, 
we have
$$
 \norma{\cU^\eps g - \cU^\eps_s g}_{\cL^2} \leq  \norma{V-V_s^\eps}_{\cW_0} \, \norma{g}_{\cL^2}.
 \qquad \eps \geq 0,
$$ 
Recalling the definition \eqref{Mdef}, we also have
$$ 
 \norma{V-V_s^\eps}_{\cW_0} =   \frac{1}{(2\pi)^{d/2}}\, \mathop\mathrm{ess\,sup}\limits_{z \in \cC}
 \int_{\mR^d \setminus \cB/3\eps}  |\hat V(k,z)|\,dk
$$ $$
 \leq \frac{1}{(2\pi)^{d/2}}\, \mathop\mathrm{ess\,sup}\limits_{z \in \cC}
 \int_{k\notin \cB/3\eps} \left( \frac{\abs{3\eps k} }{R} \right)^\mu\, |\hat V(k,z)|\,dk
 \leq \left(\frac{3\eps}{R}\right)^\mu \, \norma{V}_{\cW_\mu}
$$
where $R>0$ is the radius of a sphere contained in $\cB$.
Then (still in the case $\supp(g_n) \subset \cB/3\eps$),
from \eqref{aux0} we get
\begin{equation}
\label{I2}
  \norma{\cU^\eps g - \cU^0 g}_{\cL^2}  
  \leq 2\left(\frac{3\eps }{R}\right)^\mu  \norma{V}_{\cW_\mu}\,\norma{g}_{\cL^2}.
\end{equation}
Now, if $g \in {\cL^2_\mu}$ (Definition \ref{Spaces}), we can write (using $\cara^c = 1- \cara$)
\begin{equation}
\label{aux2}
  \norma{\cU^\eps g - \cU^0 g}_{\cL^2}  \leq  
  \norma{\cU^\eps \cara_{\cB/3\eps}^c g}_{\cL^2} +
  \norma{(\cU^\eps - \cU^0) \cara_{\cB/3\eps} g}_{\cL^2} 
  + \norma{\cU^0 \cara_{\cB/3\eps}^c g}_{\cL^2} 
\end{equation}
From \eqref{Knorm} we have 
$\norma{\cU^\eps \cara_{\cB/3\eps}^c g}_{\cL^2} \leq \norma{V}_{\cW_0} \norma{\cara_{\cB/3\eps}^c g}_{\cL^2}$,
for all $\eps \geq 0$.
But
$$
  \norma{\cara_{\cB/3\eps}^c g}_{\cL^2}^2   = \sum_n \int_{k\notin  \cB/3\eps} \abs{g_n(k)}^2 \,dk
$$ $$
 \leq \sum_n \int_{k\notin \cB/3\eps} 
 \left( \frac{\abs{3\eps k} }{R} \right)^{2\mu} \abs{g_n(k)} \,dk
 \leq \left(\frac{3\eps}{R}\right)^{2\mu}\norma{g}_{{\cL^2_\mu}}^2
$$
and so we can estimate the first and third term in the right hand side of \eqref{aux2} 
as follows:
$$
  \norma{\cU^\eps \cara_{\cB/3\eps}^c g}_{\cL^2} + \norma{\cU^0 \cara_{\cB/3\eps}^c g}_{\cL^2} 
  \leq 2 \left(\frac{3\eps}{R}\right)^{\mu} \norma{V}_{\cW_0}\,\norma{g}_{{\cL^2_\mu}}.
$$ 
Moreover, since Eq.~\eqref{I2} holds for $\cara_{\cB/3\eps} g$, then we can
estimate also the second term:
$$
  \norma{(\cU^\eps - \cU^0) \cara_{\cB/3\eps} g}_{\cL^2} \leq 
  2 \left(\frac{3\eps }{R}\right)^\mu \norma{V}_{\cW_\mu}\,\norma{g}_{\cL^2}.
$$
Since $\norma{V}_{\cW_0} \leq \norma{V}_{\cW_\mu}$ and $\norma{g}_{\cL^2} \leq \norma{g}_{{\cL^2_\mu}}$,
then from \eqref{aux2} we conclude that \eqref{Kextima} holds, with 
$c_\mu = 4 (3/R)^\mu$ (note that $R$ does not depend on $\eps$).
\end{proof}
\section{Diagonalization of the k$\cdot$p Hamiltonian}
\label{sec4}
In this section, we consider the case $V(x,z) = 0$ and concentrate on the diagonalization 
of the k$\cdot$p Hamiltonian. 
The envelope function dynamics are then given in Fourier variables by Eq.~\eqref{EXE2}
which we rewrite under the form
\begin{equation}
\label{FKPE}
  i\eps^2\partial_t \,g_n(t,k) = 
  \frac{1}{2}\eps^2 \abs{k}^2 g_n(t,k)
  - i \eps \sum_{n'} k \cdot P_{nn'} g_{n'}(t,k) +  E_n g_n(t,k).
\end{equation}
Putting $\xi = \eps k$, we are therefore led to consider, for any fixed $\xi \in \mR^d$, 
the following operators, acting in $\ell^2 \equiv \ell^2(\mN,\mC)$ and defined on their 
maximal domains:
\begin{equation}
\label{A012def}
  (A_0)_{nn'} = E_n \delta_{nn'},
\quad
  \left(A_1(\xi)\right)_{nn'} = -i\xi\cdot P_{nn'},
\quad
  \left(A_2(\xi)\right)_{nn'} = \frac{1}{2}\,\abs{\xi}^2\,\delta_{nn'}.
\end{equation}
Moreover, we put $A(\xi) = A_0 + A_1(\xi) + A_2(\xi)$, so that
\begin{equation}
\label{Adef}
   \left(A(\xi)\right)_{nn'}  =  E_n \delta_{nn'} -i\xi\cdot P_{nn'} 
   + \frac{1}{2}\,\abs{\xi}^2\,\delta_{nn'}
\end{equation}
is the operator at the right-hand side of Eq.~\eqref{FKPE} (with $\xi = \eps k$).
\begin{lemma}
\label{P2}
The following properties hold:
\begin{enumerate}
\item[\rm (a)]
  for any given $\xi \in \mR^d$, $A_1(\xi)$ is $A_0$-bounded with $A_0$-bound less than 1, 
  which implies that $A(\xi) =  A_0 + A_1(\xi) + A_2(\xi)$ is self-adjoint on the (fixed) domain 
  of $A_0$,  that is
\begin{equation}
\label{D0}
   \cD(A_0) = \Big\{ g \in \ell^2\ \Big|\ \sum_n \abs{E_n g_n}^2 < \infty \Big\};
\end{equation}
\item[\rm (b)]
  $\{A(\xi) \mid \xi \in \mR^d \}$ is a holomorphic family of type (A) of self-adjoint 
  operators \cite{Kato80};
\item[\rm (c)]
  for any given $\xi \in \mR^d$, $A(\xi)$ has compact resolvent, which implies that $A(\xi)$
  has a sequence of eigenvalues $\lambda_1(\xi) \leq \lambda_2(\xi) \leq \lambda_3(\xi) \leq \cdots$, 
  with $\lambda_n(\xi) \to \infty$, and a corresponding sequence 
  $\varphi^{(1)}(\xi)$, $\varphi^{(2)}(\xi)$, $\varphi^{(3)}(\xi) \ldots$ of orthonormal eigenvectors .
\end{enumerate}
\end{lemma}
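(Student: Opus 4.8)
The plan is to reduce the whole lemma to a single relative-boundedness estimate of the k$\cdot$p term $A_1(\xi)$ against the diagonal operator $A_0$, and then to feed this into the standard machinery (Kato--Rellich, analytic perturbation theory of type (A), and stability of compact resolvents). The decisive step is to transfer the $\ell^2$ operators back to $L^2(\cC)$. Given $g \in \cD(A_0)$, I would set $u = \sum_n g_n v_n \in L^2(\cC)$; since $\sum_n \abs{E_n g_n}^2 < \infty$, $u$ lies in the domain of the periodic Hamiltonian $H_\cL^1 = -\frac12\Delta + W_\cL$, with $H_\cL^1 u = \sum_n E_n g_n v_n$, so that $\norma{A_0 g}_{\ell^2} = \norma{H_\cL^1 u}_{L^2(\cC)}$ by Parseval. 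Because each $v_{n'} \in H^2(\cC)$, its gradient expands in $L^2(\cC)$ as $\nabla v_{n'} = \sum_n P_{nn'} v_n$, the coefficients being exactly the $P_{nn'}$ of \eqref{Pdef}; collecting the coefficient of $v_n$ in $\xi \cdot \nabla u$ gives $\norma{A_1(\xi) g}_{\ell^2} = \norma{\xi \cdot \nabla u}_{L^2(\cC)} \leq \abs{\xi}\,\norma{\nabla u}_{L^2(\cC)}$.

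Next I would control $\norma{\nabla u}$ by the energy. Integrating by parts on $\cC$ and using $W_\cL \geq 1 \geq 0$,
\[
  \norma{\nabla u}_{L^2(\cC)}^2 = 2\bk{u,(H_\cL^1 - W_\cL)u} \leq 2\bk{u,H_\cL^1 u} = 2\sum_n E_n\abs{g_n}^2 = 2\bk{A_0 g, g},
\]
whence $\norma{A_1(\xi) g}^2 \leq 2\abs{\xi}^2 \bk{A_0 g,g}$. Since $E_n \geq 1 > 0$ the quantity $\bk{A_0 g,g} = \sum_n E_n\abs{g_n}^2$ is nonnegative, and the elementary inequality $t \leq \frac{a}{2}t^2 + \frac{1}{2a}$ (valid for all $t, a > 0$) applied to $t = E_n$ yields $\bk{A_0 g,g} \leq \frac{a}{2}\norma{A_0 g}^2 + \frac{1}{2a}\norma{g}^2$, so that
\[
  \norma{A_1(\xi) g} \leq \abs{\xi}\sqrt{a}\,\norma{A_0 g} + \frac{\abs{\xi}}{\sqrt{a}}\,\norma{g}.
\]
For every fixed $\xi$ the coefficient $\abs{\xi}\sqrt{a}$ can be made $< 1$ (indeed arbitrarily small) by taking $a$ small, so $A_1(\xi)$ is $A_0$-bounded with relative bound $0$. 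The Kato--Rellich theorem then makes $A_0 + A_1(\xi)$ self-adjoint on $\cD(A_0)$, and adding the bounded self-adjoint operator $A_2(\xi) = \frac12\abs{\xi}^2 I$ changes neither the self-adjointness nor the domain; this is (a).

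For (b), the domain $\cD(A_0)$ is already independent of $\xi$ by (a), so only holomorphy remains. Complexifying $\xi$ and reading $\xi \cdot P$ and $\abs{\xi}^2$ as the polynomials $\sum_j \xi_j P_{\cdot\cdot}$ and $\sum_j \xi_j^2$, the same estimate as above (with the complex modulus of $\xi$) keeps $A_1(\xi)$ relatively $A_0$-bounded with bound $< 1$, so $A(\xi)$ is closed on $\cD(A_0)$ for all complex $\xi$; and for each fixed $g \in \cD(A_0)$ the map $\xi \mapsto A(\xi)g = A_0 g + A_1(\xi)g + A_2(\xi)g$ is a vector-valued polynomial, hence entire. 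This is precisely Kato's definition of a holomorphic family of type (A) \cite{Kato80}, and part (a) gives self-adjointness for real $\xi$. For (c), $A_0$ has compact resolvent, since $(A_0 - z)^{-1}$ is diagonal with entries $(E_n - z)^{-1} \to 0$, hence a norm limit of finite-rank operators; writing $B(\xi) = A_1(\xi) + A_2(\xi)$ and picking $z$ with $\norma{B(\xi)(A_0 - z)^{-1}} < 1$ (possible by the relative bound $< 1$), the identity $(A(\xi) - z)^{-1} = (A_0 - z)^{-1}\bigl(I + B(\xi)(A_0 - z)^{-1}\bigr)^{-1}$ exhibits the resolvent as compact times bounded, hence compact. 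A self-adjoint operator with compact resolvent has purely discrete spectrum accumulating only at $+\infty$ and an orthonormal basis of eigenvectors \cite{ReedSimonIV78}, giving $\lambda_1(\xi) \leq \lambda_2(\xi) \leq \cdots \to \infty$ and the $\varphi^{(k)}(\xi)$. The main obstacle is the estimate of the first two paragraphs: the momentum matrix $P_{nn'}$ is a priori unbounded on $\ell^2$, and only the spectral structure of the $v_n$ — that they diagonalize $H_\cL^1$ with $W_\cL \geq 1$, so the gradient is dominated by the energy — lets one dominate $A_1(\xi)$ by $A_0$ with small relative bound; everything after that is a routine application of known perturbation theorems.
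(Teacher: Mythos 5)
Your proof is correct and follows essentially the same route as the paper: both arguments transfer $A_1(\xi)$ and $A_0$ back to $L^2(\cC)$, dominate the gradient by the energy form to get $\norma{A_1(\xi)g}_{\ell^2}^2 \leq 2\abs{\xi}^2 \sum_n E_n \abs{g_n}^2$ using $W_\cL \geq 1$, and then absorb this into $\norma{A_0 g}^2$ and $\norma{g}^2$ (you via the Young-type inequality $E_n \leq \frac{a}{2}E_n^2 + \frac{1}{2a}$, the paper by splitting the sum at a finite index $n(\xi)$ — a cosmetic difference), before invoking Kato--Rellich, the type (A) criterion, and the stability of compact resolvents. Parts (b) and (c), which the paper dismisses as standard with references, you work out explicitly and correctly.
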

\begin{proof}
(a)
We first recall (see \eqref{periodic}) that $(v_n,E_n)$ is an eigencouple of 
$H_\cL^1= -\frac{1}{2}\Delta + W_\cL$ on the domain $\rH_\per^2(\cC)$
(the subscript ``per'' denoting periodic boundary conditions).
The operator $A_0$ is the representation in the basis $(v_n)$ of the operator $H^1_\cL$,
while  $A_1(\xi)$ is the representation in the same basis of $-i\xi\cdot\nabla$ with
domain $\rH^1(\cC)$:
$$
 \cD\left(A_0\right) \equiv \rH_\per^2(\cC) \subset \rH^1(\cC) 
 \equiv \cD\left(A_1(\xi)\right).
$$
Then, for any given sequence $(g_n)$, denoting $g(x) = \sum_n g_n v_n(x)$, we have 
$$
 \frac{1}{2}\int_\cC \abs{\nabla g(x)}^2\,dx + \int_\cC W_\cL(x) \abs{g(x)}^2\,dx 
 = \bk{ H^1_\cL g, g}_{L^2(\cC)}  = \sum_n E_n\abs{g_n}^2.
$$
Since $W_\cL$ is bounded and  $W_\cL \geq 1$, then 
for $g \in \cD(A_0)$ we obtain 
\begin{equation}
\label{A1extim}
  \norma{A_1(\xi) g}_{\ell^2}^2 \leq \abs{\xi}^2\norma{\nabla g}_{L^2(\cC)}^2
  \leq 2\abs{\xi}^2 \sum_n E_n \abs{g_n}^2,
\end{equation}
where we used the notation $g$ for both $g(x) = \sum_n g_n v_n(x)$ and for the sequence $g = (g_n) \in \ell^2$.
Since $E_n \to \infty$, then, for any given $0 < b < 1$, 
a positive integer $n(\xi)$ exists 
such that $2\abs{\xi}^2 E_n < b E_n^2$ for $n \geq n(\xi)$ and we can write
$$
  2\abs{\xi}^2 \sum_n E_n \abs{g_n}^2 \leq
  2\abs{\xi}^2 E_{n(\xi)} \sum_{n=1}^{n(\xi)}\abs{g_n}^2 
  + \sum_{n=n(\xi)}^\infty b\abs{E_ng_n}^2.
$$ 
Thus, $\norma{A_1(\xi) g}_{\ell^2}^2 \leq 2\abs{\xi}^2 E_{n(\xi)} \norma{g}_{\ell^2}^2 + b\, \norma{A_0g}_{\ell^2}^2$,
with $b<1$, which proves point (a).
The proof of the remaining points is standard (see Refs.~\cite{BerezinShubin91,Kato80,ReedSimonIV78}).
\end{proof}
\begin{remark}
Recalling Definition \ref{BlochDef} and Eq.~\eqref{HLK} we see that $A(\xi)$ is nothing but the expression of the
fiber Hamiltonian $H_\cL(\xi)$ in the Bloch basis $v_n = u_{n,0}$.
Then, the diagonalization of $A(\xi)$ corresponds to the diagonalization of $H_\cL(\xi)$
and, therefore, the eigenvalues $\lambda_n(\xi)$ coincide with the energy bands $E_n(\xi)$ inside the
Brillouin zone.
Moreover, $\varphi^{(n)}(\xi)$ is clearly the component expression of $u_{n,\xi}$ 
in the basis $u_{n,0}$, i.e.\ $\varphi^{(n)}(\xi) = \bk{u_{n,\xi}, u_{n,0}}_{L^2(\cC)}$.
\end{remark}
The eigenvalues $\lambda_n(\xi)$ have been numbered in increasing order for each $\xi$; this means that, 
when a eigenvalue crossing occurs, then the smoothness of $\lambda_n(\xi)$ (and of $\varphi^{(n)}(\xi)$) 
is lost.
However, since we are assuming that $\lambda_n(0) = E_n$ are simple, then $\lambda_n(\xi)$ and 
$\varphi^{(n)}(\xi)$ are analytic in a neighborhood of the origin. 
Of course, such neighborhood depends of $n$.
Next lemma allows to estimate the growth of the eigenvalues and, consequently, the size of the
analyticity domain.
\begin{lemma}
\label{P4}
For any given $\xi \in \mR^d$, an integer $n_0(\xi) \geq 0$ exists such that
\begin{equation}
\label{LambdaGrowth}
  \abs{\lambda_n(\xi) - E_n} \leq \abs{\xi} \sqrt{2 E_n} + \frac{1}{2}\abs{\xi}^2,
\quad
 \text{for all $ n \geq n_0(\xi)$.}
\end{equation}
\end{lemma}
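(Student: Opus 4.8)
The plan is to locate $\lambda_n(\xi)$ by comparison with $E_n$ through the Courant--Fischer min-max principle, using the relative bound on $A_1(\xi)$ proved in Lemma \ref{P2}. The building block is the estimate \eqref{A1extim}: for any unit vector $g$ in the form domain of $A_0$, writing $t = \bk{A_0 g,g}_{\ell^2} = \sum_n E_n\abs{g_n}^2$, it gives $\abs{\bk{A_1(\xi)g,g}} \leq \norma{A_1(\xi)g}_{\ell^2} \leq \abs{\xi}\sqrt{2t}$, while $A_2(\xi) = \frac12\abs{\xi}^2 I$ contributes the constant $\bk{A_2(\xi)g,g} = \frac12\abs{\xi}^2$. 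Since $A(\xi) = A_0 + A_1(\xi) + A_2(\xi)$, this yields the two-sided form inequality
$$
  t - \abs{\xi}\sqrt{2t} + \tfrac12\abs{\xi}^2 \;\leq\; \bk{A(\xi)g,g} \;\leq\; t + \abs{\xi}\sqrt{2t} + \tfrac12\abs{\xi}^2
$$
for every unit $g$. By Lemma \ref{P2}(c) the operator $A(\xi)$ is self-adjoint, bounded below and has purely discrete spectrum, so the min-max principle applies.

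For the upper estimate I would test the min-max characterization of $\lambda_n(\xi)$ on the $n$-dimensional subspace spanned by the first $n$ coordinate vectors $e_1,\dots,e_n$ of $\ell^2$ (the eigenvectors of the diagonal operator $A_0$). On this subspace $t = \sum_{j\leq n}E_j\abs{g_j}^2 \leq E_n$ because the $E_j$ are increasing; as $t\mapsto t + \abs{\xi}\sqrt{2t}$ is increasing, the right-hand inequality above gives $\bk{A(\xi)g,g} \leq E_n + \abs{\xi}\sqrt{2E_n} + \frac12\abs{\xi}^2$ uniformly over the subspace, hence $\lambda_n(\xi) \leq E_n + \abs{\xi}\sqrt{2E_n} + \frac12\abs{\xi}^2$. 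This inequality holds for every $n$.

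For the lower estimate I would use the dual max-min form of the principle, testing it on the $(n-1)$-dimensional span of $e_1,\dots,e_{n-1}$: any competitor $g$ orthogonal to this span satisfies $t = \sum_{j\geq n}E_j\abs{g_j}^2 \geq E_n$. The left-hand inequality then gives $\bk{A(\xi)g,g} \geq \phi(t)$ with $\phi(t) := t - \abs{\xi}\sqrt{2t} + \frac12\abs{\xi}^2 = \big(\sqrt t - \abs{\xi}/\sqrt2\,\big)^2$. The function $\phi$ is nondecreasing exactly on $[\abs{\xi}^2/2,\infty)$, so once $E_n \geq \abs{\xi}^2/2$ we obtain $\phi(t) \geq \phi(E_n) = E_n - \abs{\xi}\sqrt{2E_n} + \frac12\abs{\xi}^2$ for all admissible $t \geq E_n$, whence $\lambda_n(\xi) \geq E_n - \abs{\xi}\sqrt{2E_n} + \frac12\abs{\xi}^2$.

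Combining the two bounds gives $-\abs{\xi}\sqrt{2E_n} + \frac12\abs{\xi}^2 \leq \lambda_n(\xi) - E_n \leq \abs{\xi}\sqrt{2E_n} + \frac12\abs{\xi}^2$, which implies \eqref{LambdaGrowth}. The only restriction on $n$ comes from the monotonicity of $\phi$ in the lower bound, namely $E_n \geq \abs{\xi}^2/2$; since $E_n \to +\infty$ it suffices to take $n_0(\xi)$ to be the smallest integer with $E_{n_0(\xi)} \geq \abs{\xi}^2/2$. The main technical point to watch is the justification of the min-max manipulations on the correct domain: the relative bound is proved in Lemma \ref{P2} on $\cD(A_0)$ and has to be extended to the whole form domain of $A(\xi)$ by closure, after which Courant--Fischer can be invoked verbatim.
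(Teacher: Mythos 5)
Your proposal is correct and follows essentially the same route as the paper: the relative form bound $\abs{\bk{A_1(\xi)g,g}} \leq \abs{\xi}\sqrt{2\bk{A_0g,g}}$ derived from \eqref{A1extim}, the variational principle tested on spans of the canonical basis vectors $e^{(1)},\dots,e^{(n)}$, monotonicity of $t \mapsto t \pm \abs{\xi}\sqrt{2t}$ on the appropriate range, and the same choice of $n_0(\xi)$ as the smallest index with $E_{n_0(\xi)} \geq \abs{\xi}^2/2$. Your implementation is in fact marginally tidier: testing the min-max characterization on an explicit $n$-dimensional subspace avoids the paper's somewhat loose step of distributing $\max\min$ over a sum, and keeping the exact sign of the $\frac{1}{2}\abs{\xi}^2$ contribution of $A_2(\xi)$ yields a slightly sharper two-sided bound, but these are refinements of the same argument rather than a different proof.
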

\begin{proof}
The behavior of the eigenvalues $\lambda_n(\xi)$ for large $n$ will be investigated 
by means of the {\em max\,min} principle, which holds for increasingly-ordered eigenvalues, 
\cite{ReedSimonI72}.
Since the operators $A(\xi)$ have compact resolvent, the {\em max\,min} principle 
reads as follows:
\begin{equation*}
  \lambda_n(\xi) = \max_{S \in M_{n-1}} \; \min_{g \in S^\perp\cap\cD(A_0),\ \norma{g}=1 } 
  \,\bk{A(\xi)g, g}_{\ell^2},
\end{equation*}
where $M_n$ denotes the set of all subspaces of dimension $n$. 
In particular,
\begin{equation*}
  \lambda_n(0) = E_n  = \max_{S \in M_{n-1}} \; \min_{g \in S^\perp\cap\cD(A_0),\ \norma{g}=1 }
  \,\bk{A_0g, g}_{\ell^2}.
\end{equation*}
Let $g \in \cD(A_0)$ with $\norma{g}_{\ell^2} = 1$. 
 From \eqref{A1extim} we have
$$
  \norma{A_1(\xi) g}_{\ell^2}^2 \leq 2\abs{\xi}^2 \sum_n E_n \abs{g_n}^2 
  = 2\abs{\xi}^2 \bk{A_0g, g}_{\ell^2}
$$ 
and, therefore,
$\abs{\bk{A_1(\xi)g, g}_{\ell^2}}  \leq \norma{A_1(\xi)g}_{\ell^2} 
\leq \sqrt{2}\abs{\xi} \bk{A_0g, g}_{\ell^2}^{1/2}$, 
which, using $A(\xi) = A_0 + A_1(\xi) + A_2(\xi)$, yields 
\begin{equation}
\label{formext}
  \abs{ \bk{A(\xi)g, g}_{\ell^2} - \bk{A_0g, g}_{\ell^2} }
  \leq \sqrt{2}\abs{\xi}\,\bk{A_0g, g}_{\ell^2}^{1/2} + \frac{1}{2}\abs{\xi}^2.
\end{equation} 
 From \eqref{formext} we get, in particular,
$$
  \bk{A(\xi)g, g}_{\ell^2}  \leq  \bk{A_0g, g}_{\ell^2} 
  + \sqrt{2}\abs{\xi}\,\bk{A_0g, g}_{\ell^2}^{1/2} + \frac{1}{2}\abs{\xi}^2.
$$ 
which allows us to estimate $\lambda_n(\xi)$ from above.
In fact, since $x + \sqrt{2}\abs{\xi} x^{1/2} + \frac{1}{2}\abs{\xi}^2$ 
is an increasing function of $x$, we can write
$$
  \max\min \bk{A(\xi)g, g}_{\ell^2} \leq \max\min 
  \left\{\bk{A_0g, g}_{\ell^2} + \sqrt{2}\abs{\xi}\bk{A_0g, g}_{\ell^2}^{1/2} + \frac{1}{2}\abs{\xi}^2\right\}
$$ $$
  \leq \max\min \bk{A_0g, g}_{\ell^2} + \sqrt{2}\abs{\xi} \max\min  \bk{A_0g, g}_{\ell^2}^{1/2} 
  + \frac{1}{2}\abs{\xi}^2,
$$
that is 
\begin{equation}
\label{ext1}
  \lambda_n(\xi) \leq E_n + 2\abs{\xi}\,E_n^{1/2} + \frac{\abs{\xi}^2}{2},
\end{equation}
which holds for all $n \in \mN$.
We now estimate $\lambda_n(\xi)$ from below, at least for large $n$.
 From \eqref{formext} we get 
$$
  \bk{A(\xi)g, g}_{\ell^2} \geq  \bk{A_0g, g}_{\ell^2} - \sqrt{2}\abs{\xi}\bk{A_0g, g}_{\ell^2}^{1/2} 
  - \frac{1}{2}\abs{\xi}^2
$$ 
and we remark that $x - \sqrt{2}\abs{\xi} x^{1/2} - \abs{\xi}^2/2$ is an
increasing function of $x$ for $x \geq \abs{\xi}^2/2$.
Thus, let $n_0(\xi)$ be such that $E_{n_0(\xi)} \geq \abs{\xi}^2/2$
and fix $n \geq n_0(\xi)$.
Let us define 
$$
  S_{n-1}^0 = \mathrm{span} \{ e^{(1)}, e^{(2)},\ldots e^{(n-1)} \},
$$
where $\{ e^{(n)} \mid n \in \mN \}$ is the canonical basis of $\ell^2$ (eigenbasis of $A_0$).
We therefore have
\begin{equation*}
  \min_{g \in S_{n-1}^{0\perp} \cap \cD(A_0),\ \norma{g}=1 } \bk{A_0g,g}_{\ell^2} = E_n,
\end{equation*}
because $S_{n-1}^{0\perp} = \mathrm{span} \{ e^{(n)}, e^{(n+1)},\ldots \}$.
Thus, for every $g \in S_{n-1}^{0\perp} \cap \cD(A_0)$ with $\norma{g}_{\ell^2} = 1$, 
we can write 
$$
  \bk{A(\xi)g, g}_{\ell^2} \geq \bk{A_0g,g}_{\ell^2} - \sqrt{2}\abs{\xi}\,\bk{A_0g,g}_{\ell^2}^{1/2} 
  - \frac{1}{2}\abs{\xi}^2
  \geq E_n - \sqrt{2}\abs{\xi}\,E_n^{1/2} - \frac{1}{2}\abs{\xi}^2,
$$
(because $E_n \geq E_{n_0(\xi)} \geq \abs{\xi}^2/2$), and so
$$
  \min_{g \in S_{n-1}^{0\perp} \cap \cD(A_0),\ \norma{g}=1 } \bk{A(\xi)g, g}_{\ell^2} 
  \geq E_n - \sqrt{2}\abs{\xi}\,E_n^{1/2} - \frac{1}{2}\abs{\xi}^2.  
$$
Since $S_{n-1}^0 \in M_{n-1}$, we conclude that  
\begin{equation}
\label{ext2}
 \lambda_n(\xi) \geq E_n - \sqrt{2}\abs{\xi}\,E_n^{1/2} - \frac{1}{2}\abs{\xi}^2,
 \qquad n \geq n_0(\xi),
\end{equation}
which, together with \eqref{ext1}, yields \eqref{LambdaGrowth}.
\end{proof}
From \eqref{LambdaGrowth} we see that, for fixed $\xi$, the sequences $E_n$ and 
$\lambda_n(\xi)$ are asymptotically equivalent. 
Moreover it is not difficult to prove the following.
\begin{corollary}
\label{AnalyticRadius}
A constant $C_0$, independent of $n$, exists such that
$\lambda_{n}(\xi) \geq  \lambda_{n-1}(\xi)$ for all 
$\abs{\xi} \leq C_0(E_{n+1}-E_n)/\sqrt{E_n}$.
Then, the first $N$ bands do not cross each other in a ball of radius 
$$
 R_N = C_0 \max\{E_{n+1}-E_n \mid n \leq N+1 \} / \sqrt{E_{N+1}}\,.
$$
\end{corollary}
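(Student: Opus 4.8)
The plan is to read the non-crossing condition directly off the two-sided estimate \eqref{LambdaGrowth} of Lemma \ref{P4}. Before using it I would check that, on the small $\xi$-ball that will matter, this estimate applies to \emph{every} band at once. Going back to the proof of Lemma \ref{P4}, the threshold $n_0(\xi)$ is any integer with $E_{n_0(\xi)} \geq \abs{\xi}^2/2$; since the $E_n$ increase to $+\infty$ with $E_1 \geq 1$, the choice $n_0(\xi) = 1$ is already admissible as soon as $\abs{\xi}^2/2 \leq E_1$. Thus \eqref{LambdaGrowth} holds simultaneously for all $n \geq 1$ whenever $\abs{\xi}$ lies in a fixed ball.

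Granting this, I would subtract the lower bound for $\lambda_n(\xi)$ from the upper bound for $\lambda_{n-1}(\xi)$, both supplied by \eqref{LambdaGrowth}, and majorize $\sqrt{2E_{n-1}} \leq \sqrt{2E_n}$ using $E_{n-1} \leq E_n$. This yields
$$
  \lambda_n(\xi) - \lambda_{n-1}(\xi) \geq (E_n - E_{n-1}) - 2\sqrt{2}\,\abs{\xi}\,\sqrt{E_n} - \abs{\xi}^2 .
$$
Restricting further to $\abs{\xi} \leq \sqrt{E_n}$ absorbs the quadratic term into the linear one and leaves a bound of the shape $(E_n - E_{n-1}) - c\,\abs{\xi}\,\sqrt{E_n}$ with $c = 2\sqrt{2}+1$ an absolute constant. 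This is strictly positive exactly when $\abs{\xi} < C_0 (E_n - E_{n-1})/\sqrt{E_n}$ with $C_0 = 1/c$, which forces $\lambda_{n-1}(\xi) < \lambda_n(\xi)$: the two consecutive bands stay simple and do not cross. The important feature is that $C_0$ is a pure number, independent of $n$.

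For the collective statement I would intersect these pairwise conditions. Preserving the strict chain $\lambda_1(\xi) < \cdots < \lambda_{N+1}(\xi)$ --- which in particular isolates the $N$-th band from the $(N+1)$-th --- amounts to requiring the separation inequality for every consecutive pair $(n-1,n)$ with $n \leq N+1$. Since $E_n \leq E_{N+1}$ for these indices, replacing each denominator by the common $\sqrt{E_{N+1}}$ only tightens the constraints; retaining the most restrictive of them over the finitely many consecutive gaps then produces a single ball of the announced radius $R_N$, governed by the spectral gaps $E_{n+1}-E_n$ and by $\sqrt{E_{N+1}}$.

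The one genuinely delicate point is precisely the uniformity in the band index invoked above: one must guarantee that the estimate of Lemma \ref{P4}, stated only for $n \geq n_0(\xi)$, is in force for all bands on the relevant ball, and that $C_0$ does not deteriorate as $n$ or $N$ grows. Both are settled by the observation $n_0(\xi) = 1$ for small $\xi$ together with the fact that $c$ is numerical; what remains is the elementary discussion of a quadratic inequality and carries no hidden difficulty.
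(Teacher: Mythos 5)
The paper offers no proof of Corollary \ref{AnalyticRadius} (it is declared ``not difficult''), and your route --- subtracting the two-sided bound of Lemma \ref{P4} for a consecutive pair of bands, absorbing the quadratic term on $\abs{\xi}\le\sqrt{E_n}$, and intersecting the finitely many pairwise conditions --- is certainly the intended one; your central computation is correct. The write-up has, however, a real hole exactly at the point you yourself call delicate. You justify applying \eqref{LambdaGrowth} to \emph{both} bands by taking $n_0(\xi)=1$, which is legitimate only on the fixed ball $\abs{\xi}\le\sqrt{2E_1}$; but nothing guarantees that the ball on which the corollary claims separation, of radius $C_0(E_n-E_{n-1})/\sqrt{E_n}$, is contained in that fixed ball, since the ratio $(E_n-E_{n-1})/\sqrt{E_n}$ has no a priori bound in terms of $E_1$. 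For $\abs{\xi}>\sqrt{2E_1}$ your invocation of \eqref{LambdaGrowth} for band $n-1$ is unsupported: it would require $E_{n-1}\ge\abs{\xi}^2/2$, which can fail precisely when the gap $E_n-E_{n-1}$ is large. The repair is already present in the proof of Lemma \ref{P4}: the upper bound \eqref{ext1} holds for \emph{every} band with no restriction on $\xi$, and for the pair $(n-1,n)$ the lower bound \eqref{ext2} is needed only for band $n$, i.e.\ one needs only $E_n\ge\abs{\xi}^2/2$; this holds on all of $\abs{\xi}\le\sqrt{E_n}$, and the claimed ball is contained there because $C_0<1$ and $E_n-E_{n-1}\le E_n$. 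With that substitution your estimate chain closes; without it, separation is established only on the intersection of the claimed ball with $\abs{\xi}\le\sqrt{2E_1}$, which may be strictly smaller.

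Separately, your assembly of the collective statement does not yield the printed radius $R_N$, although you assert that it does. Keeping ``the most restrictive'' of the pairwise conditions gives a radius governed by the \emph{smallest} gap, namely $C_0\min\{E_{n+1}-E_n \mid n\le N\}/\sqrt{E_{N+1}}$, whereas the corollary displays a \emph{maximum} over the gaps; likewise your (correct) pairwise threshold involves $E_n-E_{n-1}$, while the corollary prints $E_{n+1}-E_n$. These are best read as misprints in the paper: a large gap elsewhere in the spectrum cannot protect a nearly degenerate pair, so the max-version cannot follow from the estimates of Lemma \ref{P4}. What your argument actually proves is the corrected (minimum-gap) statement, and that is all the sequel needs --- in Theorem \ref{P3} only the existence of some positive $R_N$ is used. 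But a proof must either derive the formula as printed or explicitly flag the correction; the sentence claiming that the most restrictive pairwise constraint produces ``the announced radius $R_N$'' is, as written, false, and glossing over it hides the fact that your argument and the stated corollary do not literally match.
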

Let us now consider the family of diagonalization operators $\{ T(\xi) : \ell^2 \to \ell^2 \mid \xi \in \mR^d\}$, 
i.e.\ the unitary operators that map 1-1 the basis $\{e^{(n)} \mid n \in \mN\}$ onto the
basis $\{\varphi^{(n)}(\xi) | n \in \mN\}$, so that
\begin{equation}
\label{LambdaDef}
  \Lambda(\xi) = T^*(\xi) A(\xi) T(\xi) = 
\begin{pmatrix}
  \lambda_1(\xi) & 0 & 0 & \cdots
\\
  0 & \lambda_2(\xi) & 0 & \cdots
\\
   0 & 0 & \lambda_3(\xi) & \cdots
\\
 \vdots & \vdots & \vdots &\ddots  
\end{pmatrix}
\end{equation}
For any given $\eps \geq 0$ we define a unitary operator $T_\eps$ on the space $\cL^2$ 
(see Definition \ref{Spaces}) by
\begin{equation}
\label{Tdef}
  \big( T_\eps g \big)(k) = T(\eps k) g(k).
\end{equation}
\begin{theorem}
\label{P3}
For every $\eps \geq 0$, the operator $T_\eps : \cL^2 \to \cL^2$ 
is unitary, with $T_0 = I$.
Moreover, if $g \in {\cL^2_\mu}$ for some $\mu > 0$,
then $\lim_{\eps\to 0} \norma{T_\eps g - g}_{\cL^2} = 0$.
\end{theorem}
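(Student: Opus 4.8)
The plan is to separate the purely pointwise (in $k$) structure of $T_\eps$ from the limit $\eps\to0$: first establish unitarity and $T_0=I$ fiber-by-fiber, then prove the strong continuity of $\xi\mapsto T(\xi)$ at the origin, and finally pass to the limit by dominated convergence. Throughout I would identify $\cL^2$ with $L^2(\mR^d;\ell^2)$, so that $g$ is a map $k\mapsto g(k)=(g_n(k))_n\in\ell^2$ with $\norma{g}_{\cL^2}^2=\int_{\mR^d}\norma{g(k)}_{\ell^2}^2\,dk$.

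\textbf{Unitarity and $T_0=I$.} By Lemma~\ref{P2}(c) each $T(\xi)$ is unitary on $\ell^2$, since it carries the orthonormal basis $\{e^{(n)}\}$ onto the orthonormal basis $\{\varphi^{(n)}(\xi)\}$. Hence $\norma{T(\eps k)g(k)}_{\ell^2}=\norma{g(k)}_{\ell^2}$ for a.e.\ $k$, and integrating in $k$ shows that $T_\eps$ is an isometry; its inverse is $(T_\eps^{*}g)(k)=T(\eps k)^{*}g(k)$, which is likewise an isometry, so $T_\eps$ is unitary. For $T_0=I$ I would note that at $\xi=0$ one has $A(0)=A_0$, already diagonal in the canonical basis with simple eigenvalues $E_n$; therefore $\varphi^{(n)}(0)=e^{(n)}$, i.e.\ $T(0)=I$, and $(T_0g)(k)=T(0)g(k)=g(k)$.

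\textbf{Strong continuity of $T(\xi)$ at $\xi=0$.} This is the crux. Because the $E_n=\lambda_n(0)$ are simple, for each fixed $n$ the band $\lambda_n(\xi)$ remains simple on a (possibly $n$-dependent) neighborhood of the origin on which $\varphi^{(n)}(\xi)=T(\xi)e^{(n)}$ is analytic; in particular $T(\xi)e^{(n)}=\varphi^{(n)}(\xi)\to\varphi^{(n)}(0)=e^{(n)}$ as $\xi\to0$, for every fixed $n$. I would then upgrade this to $T(\xi)\to I$ strongly by the standard density-plus-uniform-boundedness argument: the $T(\xi)$ are unitary, hence bounded by $1$ uniformly, and convergence already holds on the dense subspace $\Span\{e^{(n)}\mid n\in\mN\}$, so $\norma{T(\xi)v-v}_{\ell^2}\to0$ for every $v\in\ell^2$. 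The point to stress is that one must not claim norm convergence $\norma{T(\xi)-I}\to0$: the analyticity radius shrinks as $n\to\infty$ because of high-band crossings, so only the strong statement is available---and it is precisely what the next step needs. I expect this to be the genuine obstacle.

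\textbf{Passage to the limit.} Finally I would write
\[
  \norma{T_\eps g-g}_{\cL^2}^2=\int_{\mR^d}\norma{T(\eps k)g(k)-g(k)}_{\ell^2}^2\,dk .
\]
For a.e.\ fixed $k$ the vector $g(k)$ lies in $\ell^2$ and $\eps k\to0$, so the integrand tends to $0$ by the strong continuity just proved, while it is dominated by $4\norma{g(k)}_{\ell^2}^2\in L^1(\mR^d)$; dominated convergence then gives $\norma{T_\eps g-g}_{\cL^2}\to0$. The hypothesis $g\in\cL^2_\mu$ with $\mu>0$ makes the role of the decay explicit and supplies uniformity in the tail: splitting the integral at $\{\abs{k}>R\}$, the tail is bounded by $4R^{-2\mu}\norma{g}_{\cL^2_\mu}^2$ independently of $\eps$ (since there $\norma{g(k)}^2\le R^{-2\mu}(1+\abs{k}^2)^{\mu}\norma{g(k)}^2$), hence small for $R$ large, whereas on the bounded region $\{\abs{k}\le R\}$ the integrand vanishes in the limit by strong continuity. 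Once the strong continuity of step two is in hand, steps one and three are routine.
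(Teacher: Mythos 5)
Your proof is correct, but it follows a genuinely different route from the paper's. You argue ``softly'': fiberwise strong continuity of $\xi\mapsto T(\xi)$ at $\xi=0$ (via perturbation theory for each simple isolated eigenvalue $E_n$, then density of $\Span\{e^{(n)}\}$ plus the uniform bound $\norma{T(\xi)}=1$), followed by dominated convergence in $k$. This is clean, and it in fact proves a slightly stronger statement: the limit $\norma{T_\eps g-g}_{\cL^2}\to 0$ holds for \emph{every} $g\in\cL^2$, so the hypothesis $\mu>0$ is not needed for the qualitative conclusion --- your own tail-splitting remark is just dominated convergence in disguise. The paper instead runs a quantitative argument: it introduces the finite-band projector $\Pi_N$, uses the no-crossing radius $R_N$ of Corollary \ref{AnalyticRadius} to get a Lipschitz bound $\norma{(T(\eps k)-I)g^{(N)}(k)}_{\ell^2}\leq L_N\abs{\eps k}\,\norma{g^{(N)}(k)}_{\ell^2}$ on $\abs{\eps k}\leq R_N$, splits the $k$-integral at $\abs{k}=r$ with the choice $r=R_N/\eps$, and so obtains the explicit rate \eqref{TepsIneq}, i.e.\ $\norma{(T_\eps-I)\Pi_N g}_{\cL^2}\leq \eps^{\min\{\mu,1\}}C(\mu,N)\norma{g}_{\cL^2_\mu}$, before removing the band cutoff. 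What the paper's heavier proof buys is precisely this rate: inequality \eqref{TepsIneq} is cited again in Lemma \ref{KPvsEM} and Theorem \ref{CoroFinal}, where the $\eps$-power matters for the final error estimates, so your argument, while a complete proof of Theorem \ref{P3} as stated, could not simply replace the paper's proof without losing an ingredient used downstream. One common caveat to both proofs, worth stating explicitly: the eigenvectors $\varphi^{(n)}(\xi)$ are defined only up to phase, and the convergence $T_\eps\to I$ requires fixing the phases by the continuous (analytic) choice near $\xi=0$ with $\varphi^{(n)}(0)=e^{(n)}$; you assume this implicitly, exactly as the paper does.
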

\begin{proof}
The first part of the statement is clear, because
$$
 \int_{\mR^d} \norma{T(\eps k) g(k)}^2_{\ell^2} \,dk 
 = \int_{\mR^d} \norma{g(k)}^2_{\ell^2} \,dk = \norma{g}_{\cL^2}^2
$$ 
and $\lambda_n(0) = E_n$.
Now, let $\Pi_N$ be the projection operator in $\ell^2$ on the $N$-dimensional sub-space
spanned by $e^{(1)}, e^{(2)}, \ldots, e^{(N)}$ (in other words, the cut-off operator after the 
$N$-th component).
Since the first $N$ bands do not cross in a ball of radius $R_N$ (see Corollary 
\ref{AnalyticRadius}), then $\xi \mapsto T(\xi)\Pi_N$ is unitary analytic from
$\Span \left\{e^{(1)}, e^{(2)}, \ldots, e^{(N)}\right\}$ to 
$\Span \left\{\varphi^{(1)}(\xi), \varphi^{(2)}(\xi), \ldots, \varphi^{(N)}(\xi)\right\}$,
in $\abs{\xi}\leq R_N$.
Let $g \in {\cL^2_\mu}$ and put
$$
  g^{(N)} = \Pi_Ng, \qquad g^{(N)}_c = g - g^{(N)},
$$
so that $\norma{(T_\eps -I) g}_{\cL^2} \leq \norma{(T_\eps -I) g^{(N)}}_{\cL^2}+\norma{(T_\eps -I) g^{(N)}_c}_{\cL^2}$.
Let $\eps>0$ and $r>0$ be such that $\eps r \leq R_N$.
Then, using the analyticity of $T(\eps k)\Pi_N$ in $\abs{\eps k}\leq \eps r \leq  R_N$, 
we can write
$$
  \norma{(T_\eps -I) g^{(N)}}_{\cL^2}^2
  = \int_{\mR^d} \norma{\left(T(\eps k) - I\right)g^{(N)}(k)}^2_{\ell^2}\,dk  
$$ $$
  = \int_{\abs{k} \leq r} \norma{\left(T(\eps k) - I\right) g^{(N)}(k) }^2_{\ell^2}\,dk
  + \int_{\abs{k} > r}\norma{\left(T(\eps k) - I\right) g^{(N)}(k)}^2_{\ell^2}\,dk 
$$ $$
  \leq L_N^2 \int_{\abs{k} \leq r} \abs{\eps k}^2 \norma{g^{(N)}(k)}^2_{\ell^2}\,dk
  + \frac{4}{r^{2\mu}} \int_{\abs{k} > r} \abs{\eps k}^{2\mu} \norma{g^{(N)}(k)}^2_{\ell^2}\,dk,
$$
for some Lipschitz constant $L_N > 0$.
Now, it can be easily verified that the inequality
\begin{equation}
\label{AuxIneq}
  \abs{k}^n \leq (1+\abs{k}^\mu)\,r^{\max\{ n-\mu,\,0\}}
\end{equation}
holds for any $r>0$, $n \geq 0$, $\mu \geq 0$, and $\abs{k} \leq r$.
 From this (with $n=1$) we get
$$
  \int_{\abs{k} \leq r} \abs{\eps k}^2 \norma{g^{(N)}(k)}^2_{\ell^2}\,dk
  \leq \eps^2 r^{2\max\{ 1-\mu,\,0\}} \int_{\abs{k} \leq r} 
  (1+\abs{\eps k}^\mu)^2 \norma{g^{(N)}(k)}^2_{\ell^2}\,dk
$$
and, therefore,
$$
 \norma{(T_\eps - I) g ^{(N)}}_{\cL^2}^2 \leq 
 \big(L_N^2\,\eps^2\, r^{2\max\{ 1-\mu,\,0\}}+ 4r^{-2\mu} \big)
 \norma{g}_{{\cL^2_\mu}}^2.
$$
Choosing $r = R_N/\eps$ we obtain
\begin{equation}
\label{TepsIneq}
  \norma{(T_\eps - I) g ^{(N)}}_{\cL^2} \leq \eps^{\min\{\mu,\,1\}}\,C(\mu,N)\,\norma{g}_{{\cL^2_\mu}},
 \end{equation}
where
$$
 C(\mu,N) = \left(L_N^2 R_N^{2\max\{ 1-\mu,\,0\}} + 4R_N^{-2\mu}\right)^{1/2}.
$$
Moreover, 
$$
 \norma{(T_\eps -I) g^{(N)}_c}_{\cL^2} \leq \norma{T_\eps g^{(N)}_c}_{\cL^2} + \norma{g^{(N)}_c}_{\cL^2}
 \leq 2\norma{g^{(N)}_c}_{\cL^2}.
$$ 
Since $\norma{g^{(N)}_c}_{\cL^2} \to 0$ as $N \to \infty$, we can fix $N$ and, then, $\eps$ in inequality 
\eqref{TepsIneq} so that $\norma{(T_\eps-I)g}_{\cL^2}$ is arbitrarily small, which proves the limit.
\end{proof}
\begin{remark}
 From inequality \eqref{TepsIneq} we see that, when a finite number $N$ of bands is considered, 
the distance between $T_\eps$ and $I$ is of order $\eps^{\min\{\mu,\,1\}}$ for $\gini \in {\cL^2_\mu}$,
with $\mu >0$.
\end{remark}
Let us now consider the second-order approximation of $\Lambda(\xi)$,
\begin{equation}
\label{Lambda2Def}
 \Lambda^{(2)}(\xi) =
\begin{pmatrix}
  \lambda_1^{(2)}(\xi) & 0 & 0 & \cdots
\\
  0 & \lambda_2^{(2)}(\xi) & 0 & \cdots
\\
   0 & 0 & \lambda_3^{(2)}(\xi) & \cdots
\\
 \vdots & \vdots & \vdots &\ddots  
\end{pmatrix}
\end{equation}
where $\lambda^{(2)}_n(\xi)$ is the second-order Taylor approximation of $\lambda_n(\xi)$:
\begin{equation*}
 \lambda_n(\xi) = \lambda^{(2)}_n(\xi) + \cO\big(\abs{\xi}^3\big).
\end{equation*}
The approximated eigenvalues $\lambda^{(2)}_n(\xi)$ can be computed by means of standard
non-degenerate perturbation techniques, which yield
\begin{equation}
\label{lambdaexp}
  \lambda^{(2)}_n(\xi) = E_n + \frac{1}{2}\, \xi \cdot \mM_n^{-1} \xi,
\end{equation}
where
\begin{equation}
\label{effm}
 \mM_n^{-1} = \nabla\otimes\nabla \,\lambda_n(\xi)_{\,|\xi=0} 
  = I - 2 \sum_{n'\not= n} \frac{P_{nn'} \otimes P_{n'n}}{E_n - E_{n'}}
\end{equation}
is the $n$-th band effective mass tensor \cite{Wenckebach99} (we remind that
$P_{nn'}=0$ if $n=n'$).
Note that the 1st order term in \eqref{lambdaexp} is zero.
\par
The operators $\Lambda(\xi)$ and $\Lambda^{(2)}(\xi)$, which
are self-adjoint on their maximal domains, generate,
respectively, the exact dynamics and the effective mass dynamics 
(in Fourier variables and in absence of external fields).
\begin{theorem}
\label{TheoEM}
Let $\gini \in {\cL^2_\mu}$, for some $\mu>0$, and assume $\gini = \Pi_N\gini$ (i.e.\ 
the initial datum is confined in the first $N$ bands).
Then, a constant $C(\mu,N,t) \geq 0$, independent of $\eps$, exists such that
\begin{equation}
\label{EMExtimN}
 \ds \norma{(\mathrm{e}^{-\frac{it}{\eps^2}\,\Lambda(\eps k)} -\mathrm{e}^{-\frac{it}{\eps^2}\,\Lambda^{(2)}(\eps k)}) \gini}_{\cL^2} 
 \leq \eps^{\min\{\mu/3,\, 1\}}\, C(\mu,N,t) \,\norma{\gini}_{{\cL^2_\mu}}\,.
\end{equation}
\end{theorem}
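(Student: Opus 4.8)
The plan is to exploit that both $\Lambda(\eps k)$ and $\Lambda^{(2)}(\eps k)$ are \emph{diagonal} in the band index (see \eqref{LambdaDef} and \eqref{Lambda2Def}), so that the two propagators act componentwise by scalar phases. Expanding the $\cL^2$ norm and using $\gini = \Pi_N\gini$ reduces the claim to controlling
$$\norma{(\e^{-\frac{it}{\eps^2}\Lambda(\eps k)} - \e^{-\frac{it}{\eps^2}\Lambda^{(2)}(\eps k)})\gini}_{\cL^2}^2 = \sum_{n\leq N} \int_{\mR^d} \abs{\e^{-\frac{it}{\eps^2}\lambda_n(\eps k)} - \e^{-\frac{it}{\eps^2}\lambda_n^{(2)}(\eps k)}}^2 \abs{\gini_n(k)}^2\,dk,$$
where the sum runs only over $n\leq N$. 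The two elementary ingredients are the scalar bound $\abs{\e^{i\alpha}-\e^{i\beta}}\leq\min\{2,\abs{\alpha-\beta}\}$ and a third-order Taylor estimate: since $\lambda_n^{(2)}$ is the second-order jet of $\lambda_n$ at the origin (see \eqref{lambdaexp}) and, by Corollary \ref{AnalyticRadius}, each $\lambda_n$ with $n\leq N$ is analytic on the ball $\abs{\xi}\leq R_N$, a constant $K_N$, uniform over the finitely many $n\leq N$ (obtained from Cauchy estimates on the third derivatives near $0$), exists with $\abs{\lambda_n(\xi)-\lambda_n^{(2)}(\xi)}\leq K_N\abs{\xi}^3$ for $\abs{\xi}\leq R_N$.

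Next I would split the $k$-integral at a radius $r$ chosen so that $\eps r\leq R_N$, namely $r=\eps^{-1/3}$, which is admissible for $\eps$ small since then $\eps r=\eps^{2/3}\leq R_N$ (the complementary range of $\eps$ is handled trivially by unitarity, at the cost of enlarging the constant). On the inner region $\abs{k}\leq r$ the phase bound together with the remainder estimate gives
$$\abs{\e^{-\frac{it}{\eps^2}\lambda_n(\eps k)} - \e^{-\frac{it}{\eps^2}\lambda_n^{(2)}(\eps k)}} \leq \frac{t}{\eps^2}\,K_N\,\abs{\eps k}^3 = t\,K_N\,\eps\,\abs{k}^3,$$
after which I convert the surplus powers of $\abs{k}$ into powers of $r$ via \eqref{AuxIneq} (applied to $\abs{k}^6$ with $2\mu$ in the role of $\mu$, giving the factor $r^{\max\{6-2\mu,0\}}=r^{2\max\{3-\mu,0\}}$), and absorb $(1+\abs{k}^2)^\mu$ into $\norma{\gini}_{\cL^2_\mu}$. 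This bounds the inner contribution by $C\,t\,K_N\,\eps\,r^{\max\{3-\mu,0\}}\norma{\gini}_{\cL^2_\mu}$. On the outer region $\abs{k}>r$ I use the trivial bound $2$ together with $1\leq(\abs{k}/r)^{2\mu}$, which bounds the outer contribution by $2\,r^{-\mu}\norma{\gini}_{\cL^2_\mu}$.

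Collecting the two pieces yields
$$\norma{(\e^{-\frac{it}{\eps^2}\Lambda(\eps k)} - \e^{-\frac{it}{\eps^2}\Lambda^{(2)}(\eps k)})\gini}_{\cL^2} \leq C\big( t\,\eps\, r^{\max\{3-\mu,0\}} + r^{-\mu}\big)\norma{\gini}_{\cL^2_\mu},$$
and inserting $r=\eps^{-1/3}$ balances the terms: for $\mu<3$ both become $\eps^{\mu/3}$, while for $\mu\geq 3$ the first is of order $\eps$ and the second is $\eps^{\mu/3}\leq\eps$, so in all cases the bound is $\eps^{\min\{\mu/3,\,1\}}$ up to a constant $C(\mu,N,t)$ (linear in $t$). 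I expect the only delicate point to be the uniformity of $K_N$: it is precisely the cut-off $\gini=\Pi_N\gini$ that confines us to the first $N$ bands, where analyticity holds up to the fixed radius $R_N$ and the third derivatives of $\lambda_n$ are uniformly bounded; without such a restriction the analyticity radius would shrink to zero as $n\to\infty$ and no uniform cubic remainder would be available. Everything else is the same Fourier-splitting bookkeeping already used in the proof of Theorem \ref{P3}.
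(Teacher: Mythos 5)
Your proposal is correct and follows essentially the same route as the paper: a cubic Taylor remainder for $\lambda_n-\lambda_n^{(2)}$ valid on the analyticity ball $\abs{\xi}\leq R_N$ from Corollary \ref{AnalyticRadius}, an inner/outer splitting of the $k$-integral via inequality \eqref{AuxIneq}, and the balancing choice $r\sim\eps^{-1/3}$. The only cosmetic difference is that you convert the propagator difference into the remainder through the scalar bound $\abs{\e^{i\alpha}-\e^{i\beta}}\leq\abs{\alpha-\beta}$ applied fiberwise, whereas the paper writes a Duhamel formula — but since both generators are diagonal and commute on each fiber, the two devices yield the identical estimate.
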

\begin{proof}
Note that, since $\Lambda(\eps k)$ and $\Lambda^{(2)}(\eps k)$ are diagonal, 
then both $\mathrm{e}^{-\frac{it}{\eps^2}\,\Lambda(\eps k)}\gini$  
and $\mathrm{e}^{-\frac{it}{\eps^2}\,\Lambda^{(2)}(\eps k)}\gini$ remain confined in the first $N$ bands at all times.
Denoting $g^\eps(t,k) = \mathrm{e}^{-\frac{it}{\eps^2}\,\Lambda^{(2)}(\eps k)}\gini$, the function 
$h^\eps(t,k) =  (\mathrm{e}^{-\frac{it}{\eps^2}\,\Lambda(\eps k)} -\mathrm{e}^{-\frac{it}{\eps^2}\,\Lambda^{(2)}(\eps k)}) \gini$ 
satisfies the Duhamel formula 
\begin{equation*}
   h^\eps( t,k) = \int_0^t \e^{- \frac{i(t-s)}{\eps^2}\,\Lambda(\eps k)}
 \,\frac{\Lambda(\eps k) - \Lambda^{(2)}(\eps k)}{\eps^2}\, g^\eps(s,k) \,ds,
\end{equation*}
so that
\begin{equation*}
\norma{ h^\eps(t,k)}_{\ell^2} \leq  \int_0^t \Big\|
  \frac{\Lambda(\eps k) - \Lambda^{(2)}(\eps k)}{\eps^2}\, g^\eps(s,k)\Big\|_{\ell^2} \,ds
\end{equation*}
Since $\lambda_1(\xi), \ldots \lambda_N(\xi)$ are analytic for $\abs{\xi} \leq R_N$
(see Corollary \ref{AnalyticRadius}), then
a Lipschitz constant $L'_N$ exists such that 
\begin{equation*}
 \Big\|\frac{\Lambda(\eps k) - \Lambda^{(2)}(\eps k)}{\eps^2}\, g^\eps(s,k)\Big\|_{\ell^2}
 \leq \eps L_N \abs{k}^3 \norma{g^\eps(k,s)}_{\ell^2}
 = \eps L_N \abs{k}^3 \norma{\gini(k)}_{\ell^2}
\end{equation*}
for all $k$ with $\abs{\eps k} \leq R_N$ (where we also used the fact that the $\ell^2$ norm
of $g^\eps$ is conserved during the unitary evolution).
Now we can proceed as in the proof of Theorem \ref{P3}:
if $r>0$ is such that $\eps r \leq R_N$, then we can write
$$
 \int_{\abs{k} \leq r}\norma{ h^\eps(t,k) }^2_{\ell^2} \, dk
\leq (L'_N t \eps)^2 \int_{\abs{k} \leq r} \abs{k}^6  \norma{\gini(k)}^2_{\ell^2}\,dk
$$
and, using inequality \eqref{AuxIneq} with $n=3$,
$$ 
 \int_{\abs{k} \leq r}\norma{ h^\eps(t,k) }^2_{\ell^2} \, dk
  \leq \left(L'_N t\,\eps\, r^{\max\{ 3-\mu,\, 0\}}\right)^2
  \norma{\gini}_{{\cL^2_\mu}}^2.
$$
Moreover,
$$
  \int_{\abs{k} > r}\norma{ h^\eps(t,k)}^2_{\ell^2} \, dk 
  \leq \frac{1}{r^{2\mu}} \int_{\abs{k} > r} \abs{k}^{2\mu} 
  \norma{h^\eps(t,k)}^2_{\ell^2}\,dk 
$$ $$
  \leq\frac{4}{r^{2\mu}} \int_{\abs{k} > r} \abs{k}^{2\mu} 
  \norma{\gini(t,k)}^2_{\ell^2}   \,dk
  \leq \frac{4}{r^{2\mu}} \norma{\gini}_{{\cL^2_\mu}}^2,
$$ 
where we used the fact that
$\norma{g^\eps(t,k)}_{\ell^2} = \norma{\gini(k)}_{\ell^2}$ for all $t$.
Hence, 
$$
 \norma{h^\eps(t)}_{\ell^2}^2 \leq 
 \left[ \left(L'_N t\,\eps\, r^{\max\{ 3-\mu,\, 0\}}\right)^2 +4r^{-2\mu} \right]
 \norma{\gini}_{{\cL^2_\mu}}^2
$$
and, choosing $r = R_N/\eps^{1/3}$, we obtain
$\norma{h^\eps(t)}_{\ell^2} \leq C(\mu,N,t)\, \eps^{\min\{\mu/3,\, 1\}} 
 \,\norma{\gini}_{{\cL^2_\mu}}$,
that is inequality \eqref{EMExtimN}, with
$$
 C(\mu,N,t) =  \left[ \left(L'_N t\, R_N^{\max\{ 3-\mu,\, 0\}}\right)^2 +4R_N^{-2\mu/3} \right]^{1/2}.
$$
\end{proof}
\begin{corollary}
Let  $\gini \in {\cL^2_\mu}$, with $\mu>0$ (but $\gini$ not necessarily confined in the first $N$ bands),  
then $\lim_{\eps\to 0} \norma{(\mathrm{e}^{-\frac{it}{\eps^2}\,\Lambda(\eps k)} -\mathrm{e}^{-\frac{it}{\eps^2}\,\Lambda^{(2)}(\eps k)}) \gini}_{\cL^2}  = 0$,
uniformly in bounded  time intervals.
\end{corollary}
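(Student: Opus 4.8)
The plan is to reduce the general case to the band-confined estimate of Theorem~\ref{TheoEM} by a standard two-step argument in which $N$ is chosen before $\eps$. The reason a direct limit is not available is that the constant $C(\mu,N,t)$ appearing in \eqref{EMExtimN} depends on $N$ through the non-crossing radius $R_N$ and the Lipschitz constant $L'_N$, and these may degenerate as $N\to\infty$; hence one cannot simply let $\eps\to0$ in \eqref{EMExtimN} after replacing $N$ by infinity. Instead I would split the datum as $\gini = g^{(N)} + g^{(N)}_c$ with $g^{(N)} = \Pi_N\gini$ and $g^{(N)}_c = (I-\Pi_N)\gini$, and put $S^\eps(t) = \e^{-\frac{it}{\eps^2}\Lambda(\eps k)} - \e^{-\frac{it}{\eps^2}\Lambda^{(2)}(\eps k)}$. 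By linearity and the triangle inequality,
\begin{equation*}
  \norma{S^\eps(t)\gini}_{\cL^2} \leq \norma{S^\eps(t)g^{(N)}}_{\cL^2} + \norma{S^\eps(t)g^{(N)}_c}_{\cL^2}.
\end{equation*}

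For the confined part I would apply Theorem~\ref{TheoEM} directly: since $\Pi_N g^{(N)} = g^{(N)}$ and $\norma{g^{(N)}}_{\cL^2_\mu} \leq \norma{\gini}_{\cL^2_\mu}$, estimate \eqref{EMExtimN} gives
\begin{equation*}
  \norma{S^\eps(t)g^{(N)}}_{\cL^2} \leq \eps^{\min\{\mu/3,\,1\}}\,C(\mu,N,t)\,\norma{\gini}_{\cL^2_\mu},
\end{equation*}
which tends to $0$ as $\eps\to0$ for each fixed $N$. For the tail I would use that, for each $k$, the matrices $\Lambda(\eps k)$ and $\Lambda^{(2)}(\eps k)$ are self-adjoint, so the propagators $\e^{-\frac{it}{\eps^2}\Lambda(\eps k)}$ and $\e^{-\frac{it}{\eps^2}\Lambda^{(2)}(\eps k)}$ are unitary on $\cL^2$; therefore $\norma{S^\eps(t)}\leq 2$ and
\begin{equation*}
  \norma{S^\eps(t)g^{(N)}_c}_{\cL^2} \leq 2\,\norma{g^{(N)}_c}_{\cL^2},
\end{equation*}
uniformly in $\eps\geq0$ and $t\in\mR$. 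Since $\gini\in\cL^2_\mu\subset\cL^2$, the tail satisfies $\norma{g^{(N)}_c}_{\cL^2}\to0$ as $N\to\infty$.

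Combining the two bounds gives the claim by the usual $\delta/2$ argument, in which the order of the quantifiers is the essential point. Fixing a bounded interval $[0,T]$ and $\delta>0$, I would first choose $N$ so large that $2\,\norma{g^{(N)}_c}_{\cL^2} < \delta/2$, and then, with this $N$ now fixed, choose $\eps_0>0$ so small that $\eps^{\min\{\mu/3,\,1\}}\,C(\mu,N,T)\,\norma{\gini}_{\cL^2_\mu} < \delta/2$ for all $\eps<\eps_0$. Because $t\mapsto C(\mu,N,t)$ is increasing (through the factor $(L'_N\,t\,R_N^{\max\{3-\mu,\,0\}})^2$), I may replace $C(\mu,N,t)$ by $C(\mu,N,T)$ on all of $[0,T]$, so the confined bound is uniform in $t\in[0,T]$, while the tail bound is already $t$-independent. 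Hence $\norma{S^\eps(t)\gini}_{\cL^2}<\delta$ for all $t\in[0,T]$ and $\eps<\eps_0$, which is precisely the asserted convergence, uniform on bounded time intervals. The only delicate point — and the reason the argument is not entirely immediate — is that $N$ must be fixed in order to control the tail \emph{before} $\eps$ is sent to zero, since $C(\mu,N,t)$ offers no uniformity in $N$.
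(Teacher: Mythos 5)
Your proof is correct and follows essentially the same route as the paper: the paper also splits via $\Pi_N$ and $\Pi_N^c$ (applying the projections to the Duhamel difference $h^\eps(t)$ rather than to $\gini$, which is equivalent since $\Pi_N$ commutes with both diagonal propagators), invokes Theorem \ref{TheoEM} on the confined part, bounds the tail by $2\norma{\Pi_N^c\gini}_{\cL^2}$ via unitarity, and concludes with the same choose-$N$-before-$\eps$ argument. No gaps; your remark on the order of quantifiers matches the paper's reasoning exactly.
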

\begin{proof}
Like in the proof of the above theorem, we define 
\begin{equation*}
   h^\eps( t,k) = \int_0^t \e^{- \frac{i(t-s)}{\eps^2}\,\Lambda(\eps k)}
 \,\frac{\Lambda(\eps k) - \Lambda^{(2)}(\eps k)}{\eps^2}\, g^\eps( s,k) \,ds,
\end{equation*}
For any given $N$ we can write
$$
  \norma{h^\eps(t)}_{\cL^2} \leq \norma{\Pi_N h^\eps(t)}_{\cL^2} + \norma{\Pi_N^c h^\eps(t)}_{\cL^2},
$$
where $\Pi_N^c = I - \Pi_N$.
Recalling that the evolutions are diagonal, the first term at the right hand side 
corresponds to the initial datum $\Pi_N\gini$, for which \eqref{EMExtimN} holds. 
Using the fact that $\Pi_N$ commutes with both $\e^{-\frac{it}{\eps^2}\,\Lambda(\eps k)}$
and $\e^{-\frac{it}{\eps^2}\,\Lambda^{(2)}(\eps k)}$, for the second term we have
$$
\norma{\Pi_N^c h^\eps(t)}_{\cL^2}  
\leq 2\norma{\Pi_N^c \gini}_{\cL^2}.
$$
Since $\Pi_N^c \gini \to 0$ in $\cL^2$ as $N\to\infty$, this inequality, 
together with \eqref{EMExtimN}, shows that 
$\lim_{\eps \to 0}\norma{h^\eps(t)}_{\cL^2} = 0$, uniformly in bounded $t$-intervals.
\end{proof}
\section{Comparison of the models}
\label{sec5}
We are now in position to exhibit the ensemble of models encountered and to compare their respective dynamics.
\par
We first started by the exact dynamics. 
Let the wave function $\psi^\eps(t,x)$ be solution of the initial value problem \eqref{SE1}.
If we denote by $f_n^{in,\eps}(x)$ the $\eps$-scaled envelope 
functions of the initial wave function  $\psi^{in,\eps}$, relative to the basis $v_n$,  and by $g_n^{in,\eps}(k)$,
their Fourier transform, then the Fourier transformed envelope functions $g^\eps$ of $\psi^\eps(t,x)$ are 
the solutions of 
\begin{align}
\label{EFEexact}
  &i\partial_t \,g = A_\KP^\eps g + \cU^\eps g, & g^\eps(t=0) = g^{in,\eps}\quad &\text{(exact dynamics)}
\end{align}
where 
\begin{equation}
\label{HKPdef}
 \left( A_\KP^\eps g\right)_n(k) = \frac{1}{\eps^2} \left(A(\eps k)\,g(k)\right)_n
 =  \left( \frac{E_n}{\eps^2} + \frac{\abs{k}^2}{2} \right)g_n(k)
  - \frac{i}{\eps} \sum_{n'} k \cdot P_{nn'} g_{n'}(k),
\end{equation}
The k$\cdot$p approximation consists in passing to the limit in $\cU^\eps$. 
Therefore, we define $ g^\eps_\KP(t)$ as the solution of
\begin{align}
\label{EFEkp}
  &i\partial_t \,g = A_\KP^\eps g + \cU^0 g,&g(t=0) = g^{in,\eps} &\quad \text{(k$\cdot$p model)}
\end{align}
It is worth noting that the back Fourier transform of $g^\eps_\KP(t)$ which we will denote by 
$f^\eps_\KP(t,x)$
is a solution of system
\begin{equation}
\label{kpF}
\begin{aligned}
 &i\partial_t \,f_n =  {E_n\over \eps^2} f_{n}
  - \frac{1}{2} \Delta\,f_n
 - \frac{1}{\eps} \sum_{n'} P_{nn'}\cdot\nabla f_{n'}
  + \sum_{n'}  V_{n n'} f_{n'}, 
\\
 &f_n(t=0) = f_n^{in,\eps}(x).
\end{aligned}
\end{equation}
The diagonalization of the operator $A_\KP^\eps$ performed in the previous section leads to the 
effective mass dynamics
\begin{align}
\label{EFEem}
  &i\partial_t \,g = A^\eps_\EM g + \cU^0 g,& g(t=0) = g^{in,\eps} \quad &\text{(effective mass model)}
\end{align}
where \begin{equation}
\label{HEMdef}
 \left( A^\eps_\EM g\right)(k) = \frac{1}{\eps^2}\left(\Lambda^{(2)}(\eps k)\,g(k)\right)_n
 = \left(\frac{E_n}{\eps^2} + \frac{1}{2}\, k \cdot \mM_n^{-1} k \right) g_n(k).
\end{equation}
The solution of \eqref{EFEem} will be denoted by $g^\eps_\EM(t,k)$ and its back Fourier transform 
$f^\eps_\EM(t,x)$ is easily shown to be the solution of 
\begin{equation}
\label{EM1pos}
\begin{aligned}
  &i\partial_t\, f_n =  \frac{1}{\eps^2}\,E_n f_{n}
  - \frac{1}{2} \DIV \left( \mM_n^{-1} \nabla f_{n}^\eps \right)
  + \sum_{n'} V_{nn'}\,f_{n'}^\eps, 
\\
&f_n(t=0) = f_n^{in,\eps}(x).
\end{aligned}
\end{equation}
This equation is still involving oscillations in time. These oscillations can be filtered by setting
$f_{n,\EM}^\eps (t,x) = h_{n,\EM}^\eps(t,x) \mathrm{e}^{-iE_n{t\over \eps^2}}$  which will be a solution of 
\begin{equation}
\label{hdyn}
\begin{aligned}
  &i\partial_t\, h_{\EM,n}^\eps =   
  - \frac{1}{2} \DIV \left( \mM_n^{-1} \nabla h_{\EM,n}^\eps \right)
  + \sum_{n'} \e^{i\omega_{nn'}t/\eps^2} V_{nn'}\,h_{\EM,n'}^\eps, 
\\
&h_{\EM,n}^\eps(t=0) = f_n^{in,\eps}(x),
\end{aligned}
\end{equation}
where
\begin{equation}
  \omega_{nn'} = E_n - E_{n'}.
\end{equation}
The limit $h_{\EM,n}$ of these function is the solution of the system
\begin{equation}
\label{limit}
  i\partial_t\, h_{\EM,n} =   
  - \frac{1}{2} \DIV \left( \mM_n^{-1} \nabla h_{\EM,n} \right)
 +V_{nn}\,h_{\EM,n}, \quad h_{\EM,n}(t=0) = f_n^{in}(x),
\end{equation}
where $f_n^{in}(x)$ is the limit as $\eps$ tends to zero of $f_n^{in,\eps}(x)$, and which will be made precise later on.
\begin{remark}
The external-potential operators $\cU^\eps$ and $\cU^0$ have been defined in \eqref{Udef} and \eqref{U0def}.
The free k$\cdot$p operator $A(\xi)$ and the effective mass operator 
$\Lambda^{(2)}(\xi)$ (see definitions \eqref{Adef} and \eqref{Lambda2Def}) are now re-introduced as operators 
acting in $\cL^2$.
Recalling definition \eqref{LambdaDef}, we shall also consider the diagonal k$\cdot$p operator
\begin{equation}
\label{Lambda}
 \left( \Lambda^\eps g\right)_n(k) = \frac{1}{\eps^2} \left(\Lambda(\eps k)\,g(k)\right)_n
 = \frac{1}{\eps^2}\,\lambda_n(\eps k)\,g_n(k).
\end{equation}
The operators $A_\KP^\eps$, $A^\eps_\EM$ and $\Lambda^\eps$ are ``fibered'' self-adjoint operators in $\cL^2$,
with fiber space $\ell^2$.
It is well known (see Ref.~\cite{ReedSimonIV78}) that a fibered self-adjoint operator $L$ in $\cL^2$ 
has self-adjointness domain
$$
\cD(L) = \Big\{ g \in \cL^2 \;\Big|\; g(\xi) \in \cD\left(L(\xi)\right) \text{\ a.e. $\xi \in \mR^d$} 
 \text{\ and $\int_{\mR^d} \norma{L(\xi)\,g(\xi)}^2_{\ell^2}\, d\xi < \infty$} \Big\},
$$
where $\cD\left(L(\xi)\right)$ is the self-adjointness domain of $L(\xi)$ in $\ell^2$.
\end{remark}
\subsection{Comparison of Envelope functions}
Assuming $V \in \cW_0$ (Definition \ref{Spaces}), we know from Lemma \ref{Lemma0} 
that $\cU^\eps$ and $\cU^0$ are bounded (and, clearly, symmetric). 
Therefore, $A_\KP^\eps + \cU^\eps$, $A_\KP^\eps + \cU^0$ and $A_\EM^\eps + \cU^\eps$
are the generators of the unitary evolution groups
$$
  G^\eps(t) = \e^{-it( A_\KP^\eps + \cU^\eps)}, 
\quad
  G^\eps_\KP(t) = \e^{-it( A_\KP^\eps + \cU^0)}, 
\quad
  G^\eps_\EM(t) = \e^{-it( A_\EM^\eps + \cU^0)}.
$$
Our goal is to compare, in the limit of small $\eps$, the three mild solutions 
of Eqs.~\eqref{EFEexact}, \eqref{EFEkp} and \eqref{EFEem}, i.e.
\begin{equation}
\label{MildSols}
  g^\eps(t) = G^\eps(t)\,g^{in,\eps},
\quad
  g^\eps_\KP(t) = G^\eps_\KP(t)\,g^{in,\eps},
\quad
  g^\eps_\EM(t) = G^\eps_\EM(t)\,g^{in,\eps},
\end{equation}
\begin{lemma}
\label{L3}
Let $g^{in,\eps} \in {\cL^2_\mu}$ and $V \in \cW_\mu$ for some $\mu \geq 0$ (see Definition \ref{Spaces}).
Then, suitable constants $c_1(\mu,V) \geq 0$ and  $c_2(\mu,V) \geq 0$, independent of $\eps$, exists such that 
\begin{equation}
\label{L3ineq}
  \norma{g^\eps_\KP(t)}_{{\cL^2_\mu}} \leq 
  \e^{c_1(\mu,V)t} \,\norma{g^{in,\eps}}_{{\cL^2_\mu}},
\qquad
  \norma{g^\eps_\EM(t)}_{{\cL^2_\mu}} \leq 
  \e^{c_2(\mu,V)t} \,\norma{g^{in,\eps}}_{{\cL^2_\mu}},
\end{equation}
for all $t \geq 0$.
\end{lemma}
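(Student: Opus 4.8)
The plan is to prove both inequalities by one and the same energy argument, the only substantial ingredient being that $\cU^0$ is bounded not merely on $\cL^2$ (Lemma~\ref{Lemma0}) but on the weighted space $\cL^2_\mu$. Write $M_\mu$ for the fibered scalar multiplier $(M_\mu g)(k) = (1+\abs{k}^2)^{\mu/2} g(k)$, so that $\norma{g}_{\cL^2_\mu} = \norma{M_\mu g}_{\cL^2}$. Since $A_\KP^\eps$ and $A_\EM^\eps$ act fiberwise in $k$ while $M_\mu$ multiplies each fiber by a scalar, $M_\mu$ commutes with both kinetic operators; moreover both are fibered self-adjoint (as noted after \eqref{Lambda}). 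My strategy is to conjugate the evolution by $M_\mu$ and run a Gr\"onwall estimate in $\cL^2$; for $\mu=0$ this even returns exact conservation, so $c_1=c_2=0$ is admissible there.

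The hard part, and the heart of the proof, is the claim that there is $C_\mu>0$, independent of $\eps$, with $\norma{\cU^0 g}_{\cL^2_\mu} \leq C_\mu \norma{V}_{\cW_\mu}\,\norma{g}_{\cL^2_\mu}$. Because $f \in \cH^\mu \iff \hat f \in \cL^2_\mu$ and $\widehat{\bV^0 f} = \cU^0 \hat f$, this is equivalent to boundedness of the position-space operator $\bV^0$ on $\cH^\mu$, and I would prove the latter. The device that removes the band indices is the identity $\sum_n (\bV^0 f)_n(x)\,v_n(z) = V(x,z)\,F(x,z)$ with $F(x,z) = \sum_n f_n(x) v_n(z)$, which follows from definitions \eqref{VnmDef} and \eqref{V0def}. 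Taking the Fourier transform in $x$ and using Parseval in the orthonormal basis $(v_n)$ of $L^2(\cC)$ gives $\norma{f}_{\cH^\mu}^2 = \int_\cC \int_{\mR^d} (1+\abs{k}^2)^\mu \abs{\hat F(k,z)}^2\,dk\,dz$, and the same with $\hat F(\cdot,z)$ replaced by $\widehat{VF}(\cdot,z) = (2\pi)^{-d/2}\,\hat V(\cdot,z) \ast \hat F(\cdot,z)$ for $\bV^0 f$. For a.e.\ fixed $z\in\cC$ this is a scalar convolution in $k$, so the Peetre-type inequality $(1+\abs{k}^2)^{\mu/2} \leq C_\mu\,(1+\abs{k-k'})^\mu (1+\abs{k'}^2)^{\mu/2}$ combined with Young's inequality $\norma{a \ast b}_{L^2} \leq \norma{a}_{L^1}\norma{b}_{L^2}$ yields, for a.e.\ $z$, the bound $\norma{(1+\abs{\cdot}^2)^{\mu/2}\widehat{VF}(\cdot,z)}_{L^2} \leq C_\mu\,\big[(2\pi)^{-d/2}\int (1+\abs{k})^\mu \abs{\hat V(k,z)}\,dk\big]\,\norma{(1+\abs{\cdot}^2)^{\mu/2}\hat F(\cdot,z)}_{L^2}$. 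The bracket is dominated uniformly in $z$ by $\norma{V}_{\cW_\mu}$ (definition \eqref{Mdef}); squaring and integrating over $z\in\cC$ gives $\norma{\bV^0 f}_{\cH^\mu} \leq C_\mu \norma{V}_{\cW_\mu}\norma{f}_{\cH^\mu}$, which is the claim.

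Granting the claim, I set $B = M_\mu\,\cU^0\,M_\mu^{-1}$, bounded on $\cL^2$ with $\norma{B} \leq C_\mu\norma{V}_{\cW_\mu}$. For $g^{in,\eps}$ in a dense core (then extend by density) let $u(t) = g^\eps_\KP(t)$ and $v(t) = M_\mu u(t)$. Since $M_\mu$ commutes with $A_\KP^\eps$, the function $v$ solves $i\partial_t v = A_\KP^\eps v + B v$. Differentiating the energy and using that $A_\KP^\eps$ is self-adjoint (so $\bk{A_\KP^\eps v, v}_{\cL^2}$ is real and drops out) gives $\frac{d}{dt}\norma{v}_{\cL^2}^2 = 2\,\mathrm{Re}\,\bk{-iBv,\, v}_{\cL^2} \leq 2\norma{B}\,\norma{v}_{\cL^2}^2$, whence by Gr\"onwall $\norma{u(t)}_{\cL^2_\mu} = \norma{v(t)}_{\cL^2} \leq \e^{\norma{B}t}\norma{v(0)}_{\cL^2} = \e^{\norma{B}t}\norma{g^{in,\eps}}_{\cL^2_\mu}$. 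This is the first inequality with $c_1(\mu,V) = C_\mu\norma{V}_{\cW_\mu}$, independent of $\eps$. Because $A_\EM^\eps$ is likewise fibered, self-adjoint, and commutes with $M_\mu$, the identical computation gives the second inequality with $c_2(\mu,V) = C_\mu\norma{V}_{\cW_\mu}$.

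I expect the main obstacle to be precisely the weighted boundedness of $\cU^0$ on $\cL^2_\mu$: one must simultaneously arrange the weight bookkeeping so that the derivative weight $(1+\abs{k}^2)^{\mu/2}$ splits between $V$ and $g$ (this is the role of the Peetre inequality), and avoid controlling the infinite band-index double sum $\sum_{n,n'}$ entry by entry. The reduction to the single scalar convolution $V(\cdot,z)F(\cdot,z)$ at each fixed $z$, which diagonalizes away the band indices via Parseval, is what makes the constant come out proportional to $\norma{V}_{\cW_\mu}$ and uniform in $\eps$; once it is in hand, the conjugation-plus-Gr\"onwall step is entirely routine.
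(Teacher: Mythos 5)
Your proposal is correct in substance, but it takes a genuinely different route from the paper's. The paper never establishes your key claim that $\cU^0$ is bounded on $\cL^2_\mu$. Instead, it fixes a multi-index $\alpha$ with $\abs{\alpha}\leq\mu$, introduces truncated moment multipliers $(m_R\,g)_n(k)=k^\alpha g_n(k)$ for $\abs{k}\leq R$ (zero otherwise), writes a Duhamel formula for $m_R\,g^\eps_\KP(t)$ using that $m_R$ commutes with $A_\KP^\eps$, and estimates the commutator $[m_\infty,\cU^0]$ as an operator from $\cL^2_\mu$ to $\cL^2$ via the binomial identity $k^\alpha-\eta^\alpha=\sum_{\beta<\alpha}\binom{\alpha}{\beta}(k-\eta)^{\alpha-\beta}\eta^\beta$: each term is the unweighted operator of Lemma \ref{Lemma0} attached to the auxiliary potential $\hat U_{\alpha\beta}(k,z)=k^{\alpha-\beta}\hat V(k,z)\in\cW_0$; then $R\to\infty$ by dominated convergence and Gronwall. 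You instead prove the stronger weighted bound $\norma{\cU^0 g}_{\cL^2_\mu}\leq C_\mu\norma{V}_{\cW_\mu}\norma{g}_{\cL^2_\mu}$, and your proof of it is sound: the resummation $\sum_n(\bV^0 f)_n(x)\,v_n(z)=V(x,z)F(x,z)$ is exactly the device the paper uses for Lemma \ref{Lemma0}, and your upgrade by Peetre's inequality plus Young's inequality at fixed $z$ is correct, with the constant dominated by $\norma{V}_{\cW_\mu}$ thanks to the essential supremum in \eqref{Mdef}. What your route buys: it treats every real $\mu\geq 0$ uniformly, whereas the paper's moment bookkeeping (recovering the $\cL^2_\mu$-norm from the monomials $k^\alpha$, $\abs{\alpha}\leq\mu$) is, as written, an integer-$\mu$ argument. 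What the paper's truncation buys is precisely the regularization that your write-up glosses over, as follows.

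The one genuine weak point is the definition $v(t)=M_\mu\,g^\eps_\KP(t)$ (with your weight multiplier $(M_\mu g)(k)=(1+\abs{k}^2)^{\mu/2}g(k)$): a priori you do not know that $g^\eps_\KP(t)\in\cL^2_\mu$ for $t>0$. This is a statement about propagation of weighted regularity by the flow, and taking the datum in a dense core and ``extending by density'' does not supply it --- density approximates the datum, it does not put the solution in the domain of the unbounded operator $M_\mu$, and the quantity whose derivative you compute could a priori be infinite; moreover mild solutions need not be differentiable, so the energy identity itself needs justification. The paper's $m_R$ exists exactly to get around this. You can repair the step inside your own framework without truncation: since $B=M_\mu\cU^0M_\mu^{-1}$ is bounded on $\cL^2$ and $A_\KP^\eps$ is self-adjoint, $A_\KP^\eps+B$ generates a $C_0$-group; let $w(t)$ be the mild solution of $i\partial_t w=A_\KP^\eps w+Bw$ with $w(0)=M_\mu\,g^{in,\eps}$, so that the Duhamel formula and Gronwall give $\norma{w(t)}_{\cL^2}\leq\e^{\norma{B}t}\norma{M_\mu g^{in,\eps}}_{\cL^2}$ with no differentiation. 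Applying the bounded operator $M_\mu^{-1}$ to that Duhamel formula, and using $M_\mu^{-1}B=\cU^0M_\mu^{-1}$ together with the commutation of $M_\mu^{-1}$ with $\e^{-itA_\KP^\eps}$ (both are fibered in $k$), shows that $M_\mu^{-1}w$ is a mild solution of \eqref{EFEkp} with datum $g^{in,\eps}$; uniqueness of mild solutions then identifies $M_\mu^{-1}w(t)=g^\eps_\KP(t)$, which yields simultaneously $g^\eps_\KP(t)\in\cL^2_\mu$ and inequality \eqref{L3ineq}. The identical argument with $A_\EM^\eps$ in place of $A_\KP^\eps$ gives the second estimate. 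With this repair your proof is complete.
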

\begin{proof}
We prove the lemma only for $g_\KP$, the proof for $g_\EM$ being identical. 
We also skip the $\eps$ superscript of $g^{in,\eps}$.
Let $\alpha$ be a fixed multi-index with $\abs{\alpha} \leq \mu$.
For $R>0$, consider the bounded multiplication operators on $\cL^2$
$$
  \big( m_R\, g \big)_n(k) = \left\{ 
            \begin{aligned} 
            &k^\alpha g_n(k), & &\text{if $\abs{k}\leq R$,} 
         \\ 
            &0, & &\text{otherwise,} 
            \end{aligned}
            \right.
$$
Moreover, we denote by $m_\infty$ the (unbounded) limit operator
$\big( m_\infty\, g \big)_n(k)  = k^\alpha g_n(k)$.
Since $m_R$ (with $R < \infty$) commutes with $A_\KP^\eps$ on $\cD(A_\KP^\eps)$,
then, by applying standard semigroup techniques, we obtain
\begin{equation*}
  m_R\, g^\eps_\KP(t) = G^\eps_\KP(t)\, m_R\, \gini + \int_0^t G^\eps_\KP(t-s) 
  \left[ m_R ,\,\cU^0\right] g^\eps_\KP(s) \,ds
\end{equation*}
and, therefore,
\begin{equation}
\label{L3aux1}
 \norma{ m_R\, g^\eps_\KP(t)}_{\cL^2} \leq \norma{ m_R\, \gini}_{\cL^2} 
 + \int_0^t \norma{ \left[ m_R,\, \cU^0\right] g^\eps_\KP(s)}_{\cL^2} \,ds.
\end{equation}
Using \eqref{U0def} and the identity $k^\alpha - \eta^\alpha = \sum_{\beta < \alpha} \binom{\alpha}{\beta} 
 \,( k - \eta)^{\alpha-\beta}\,\eta^\beta$, we have
$$
 \big( \left[m_\infty,\, \cU^0 \right] g^\eps_\KP\big)_n(k)
 = \sum_{n'} \frac{1}{(2\pi)^{d/2}} \int_{\mR^d} 
  (k^\alpha - \eta^\alpha)\,\hat V_{nn'}(k-\eta)\,g^\eps_{\KP, n'}(\eta)\,d\eta
$$ $$
 = \sum_{n'} \frac{1}{(2\pi)^{d/2}} \sum_{\beta < \alpha} \binom{\alpha}{\beta} 
  \int_{\mR^d} (k - \eta)^{\alpha-\beta}\,\hat V_{nn'}(k-\eta)\,\eta^\beta g^\eps_{\KP, n'}(\eta)\,d\eta
$$
Since $V \in \cW_\mu$, the potential $U_{\alpha\beta}(x,z)$ such that 
$\hat U_{\alpha\beta}(k,z) = k^{\alpha - \beta}\hat V(k,z)$
belongs to $\cW_0$, with $\norma{U_{\alpha\beta}}_{\cW_0} \leq \norma{V}_{\cW_\mu}$, and then,
using \eqref{Knorm}, we obtain 
\begin{equation}
\label{Comm1}
 \norma{\left[m_\infty,\, \cU^0 \right] g^\eps_\KP}_{\cL^2} \leq 
 \sum_{\beta < \alpha} \binom{\alpha}{\beta} \norma{U_{\alpha\beta}}_{\cW_0}
 \norma{\eta^\beta g^\eps_\KP}_{\cL^2}
 \leq c_1(\mu,V) \norma{g^\eps_\KP}_{{\cL^2_\mu}}.
\end{equation}
with $c_1(\mu,V) = (2^d-1)\norma{V}_{\cW_\mu}$.
Letting $R\to+\infty$, it is not difficult to show that the dominated convergence theorem applies 
and yields
$$
  \lim_{R \to +\infty} \norma{\left[m_R,\, \cU^0 \right] g^\eps_\KP}_{\cL^2}
  = \norma{\left[m_\infty,\, \cU^0 \right] g^\eps_\KP}_{\cL^2} 
  \leq c_1(\mu,V) \norma{g^\eps_\KP}_{{\cL^2_\mu}}.
$$
Then, passing to the limit for $R \to +\infty$ in \eqref{L3aux1}, we get 
\begin{equation*}
  \norma{g^\eps_\KP(t)}_{{\cL^2_\mu}} \leq \norma{\gini}_{{\cL^2_\mu}}
 + c_1(\mu,V)\int_0^t \norma{g^\eps_\KP(s)}_{{\cL^2_\mu}} \,ds,
\end{equation*}
and, therefore, Gronwall's Lemma yields inequality \eqref{L3ineq}.
\end{proof}
Let us begin by comparing the exact dynamics $g^\eps(t)$ with the 
k$\cdot$p dynamics $g_\KP^\eps(t)$.
\begin{theorem}
\label{EXvsKP}
Let $g^\eps(t)$ and $g^\eps_\KP(t)$ be respectively the solution of  \eqref{EFEexact} 
and \eqref{EFEkp}.
If $\ginieps \in {\cL^2_\mu}$ and $V \in \cW_\mu$, for some $\mu \geq 0$, 
then, for any given $\tau \geq 0$, a constant $C(\mu,V,\tau) \geq 0$,
independent of $\eps$, exists such that 
\begin{equation}
\label{EXvsKPeq}
  \norma{g^\eps(t) - g^\eps_\KP(t)}_{\cL^2} 
  \leq \eps^\mu\, C(\mu,V,\tau)\, \norma{\gini}_{{\cL^2_\mu}},
\end{equation} 
for all $0 \leq t \leq \tau$.
\end{theorem}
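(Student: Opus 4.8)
The plan is to exploit the fact that the exact group $G^\eps(t) = \e^{-it(A_\KP^\eps + \cU^\eps)}$ and the k$\cdot$p group $G^\eps_\KP(t) = \e^{-it(A_\KP^\eps + \cU^0)}$ share the same free generator $A_\KP^\eps$ and differ only through the bounded perturbation $\cU^\eps - \cU^0$. Setting $w^\eps(t) = g^\eps(t) - g^\eps_\KP(t)$, which vanishes at $t=0$ since both mild solutions start from $\ginieps$, the variation-of-constants (Duhamel) identity reads
$$
  w^\eps(t) = -i\int_0^t G^\eps(t-s)\,(\cU^\eps - \cU^0)\,g^\eps_\KP(s)\,ds.
$$
Here the exact propagator $G^\eps(t-s)$ sits to the left and the perturbation acts on the k$\cdot$p solution $g^\eps_\KP(s)$; this ordering is the one substantive choice in the proof, for the reason explained at the end.

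Since $G^\eps(t-s)$ is unitary on $\cL^2$ it drops out of the norm estimate, giving
$$
  \norma{w^\eps(t)}_{\cL^2} \leq \int_0^t \norma{(\cU^\eps - \cU^0)\,g^\eps_\KP(s)}_{\cL^2}\,ds.
$$
The integrand is then controlled by Theorem \ref{T3}, which bounds it by $\eps^\mu\, c_\mu\,\norma{V}_{\cW_\mu}\,\norma{g^\eps_\KP(s)}_{{\cL^2_\mu}}$, valid because $V \in \cW_\mu$ and $g^\eps_\KP(s) \in {\cL^2_\mu}$. The remaining task is to verify that the $\cL^2_\mu$-norm of $g^\eps_\KP(s)$ stays finite and grows at most exponentially, uniformly in $\eps$; but this is precisely the content of Lemma \ref{L3}, which gives $\norma{g^\eps_\KP(s)}_{{\cL^2_\mu}} \leq \e^{c_1(\mu,V)s}\,\norma{\ginieps}_{{\cL^2_\mu}}$.

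Combining the two estimates and integrating over $s \in [0,t] \subset [0,\tau]$ yields
$$
  \norma{w^\eps(t)}_{\cL^2} \leq \eps^\mu\, c_\mu\,\norma{V}_{\cW_\mu}\,\norma{\gini}_{{\cL^2_\mu}}
  \int_0^\tau \e^{c_1(\mu,V)s}\,ds,
$$
which is exactly \eqref{EXvsKPeq} with $C(\mu,V,\tau) = c_\mu\,\norma{V}_{\cW_\mu}\,(\e^{c_1(\mu,V)\tau}-1)/c_1(\mu,V)$. The argument is thus a routine assembly of earlier results, and no genuine obstacle arises: the only point requiring care is that the Duhamel formula places the $\cL^2_\mu$-regularity demand on $g^\eps_\KP$ rather than on the exact solution $g^\eps$. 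This matters because we have no comparable a priori $\cL^2_\mu$-bound on $g^\eps$, whose generator carries the full oscillatory operator $\cU^\eps$; writing Duhamel in the opposite order would shift the burden onto $g^\eps$ and stall the estimate. Choosing the factor order above lets Lemma \ref{L3} supply the needed uniform control, and Theorem \ref{T3} then produces the gain of $\eps^\mu$.
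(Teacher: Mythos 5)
Your proof is correct and follows essentially the same route as the paper: the Duhamel formula with the exact propagator $G^\eps(t-s)$ acting on $(\cU^\eps - \cU^0)\,g^\eps_\KP(s)$, unitarity to discard the propagator, Theorem \ref{T3} for the $\eps^\mu$ gain, and Lemma \ref{L3} for the uniform $\cL^2_\mu$ control of $g^\eps_\KP$, yielding the same constant $C(\mu,V,\tau)$. Your closing remark about why the perturbation must fall on $g^\eps_\KP$ rather than on $g^\eps$ is exactly the (implicit) reason for the paper's choice as well.
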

\begin{proof}
The function $h^\eps(t) = g^\eps(t) - g^\eps_\KP(t)$ satisfies the integral equation
\begin{equation*}
  h^\eps(t) = \int_0^t G^\eps(t-s) \big(\cU^0 - \cU^\eps \big) g^\eps_\KP(s) \,ds
\end{equation*}
and, therefore,
\begin{equation*}
 \norma{h^\eps(t)}_{\cL^2} \leq  \int_0^t \norma{\big( \cU^0 - \cU^\eps \big)
 g^\eps_\KP(s)}_{\cL^2} \,ds.
\end{equation*}
 From Lemma \ref{L3} we have that $g^\eps_\KP(t)$ belongs to $\cL^2_\mu$
for all $t$ and, therefore, we can apply Theorem \ref{T3}, which gives
\begin{equation*}
  \norma{\big( \cU^0 - \cU^\eps \big) g^\eps_\KP(s)}_{\cL^2} \leq 
  \eps^\mu c_\mu \norma{V}_{\cW_\mu}\,\norma{g^\eps_\KP(s)}_{{\cL^2_\mu}},
\end{equation*}
for a suitable constant $c_\mu$.
Then we have
\begin{equation*}
  \norma{h^\eps(t)}_{\cL^2}  \leq \eps^\mu\,c_\mu\,\norma{V}_{\cW_\mu}\,
  \int_0^t \norma{g^\eps_\KP(s)}_{{\cL^2_\mu}}ds
\end{equation*}
and, by \eqref{L3ineq}, we have that \eqref{EXvsKPeq} holds with
$ C(\mu,V,\tau) = 
  \frac{c_\mu \left(\e^{c(\mu,V) \tau} - 1\right)}{c(\mu,V)} \norma{V}_{\cW_\mu}$.
\end{proof}
We now compare the k$\cdot$p dynamics $g_\KP^\eps(t)$ with the effective mass 
dynamics $g_\EM^\eps(t)$ (see definitions \eqref{MildSols}).
Recalling the discussion in Sec.~\ref{sec4}, we need, as an intermediate step between
$g_\KP^\eps(t)$ and $g_\EM^\eps(t)$, the function $g_*^\eps(t) = T^*_\eps\,g^\eps_\KP(t)$, 
that is
\begin{equation}
\label{GD}
  g_*^\eps(t) = T^*_\eps\,G^\eps_\KP(t)\,\ginieps= 
  \exp\left[-it\,\left( \Lambda^\eps + T^*_\eps\,\cU^0 T_\eps \right)\right]\,T^*_\eps\,\ginieps,
\end{equation}
representing the diagonalized k$\cdot$p dynamics 
(definitions \eqref{LambdaDef}, \eqref{Tdef} and \eqref{Lambda}).
\begin{lemma}
\label{KPvsEM}
Let $g^\eps_\EM(t)$ and $g_*^\eps(t)$ be respectively defined by  \eqref{EFEem} and \eqref{GD}.
Let $\ginieps \in {\cL^2_\mu}$ and $V \in \cW_\mu$, for some $\mu > 0$, and assume 
$\ginieps = \Pi_N\gini$ (i.e.\ $\ginieps$ is concentrated in the first $N$ bands). 
Then, for any given $\tau \geq 0$, a suitable constant $C'(\mu,N,V,\tau)$, independent of $\eps$, 
exists such that
\begin{equation}
\label{LastIneq}
  \norma{g_*^\eps(t) - g^\eps_\EM(t)}_{\cL^2} 
  \leq \eps^{\min\{\mu/3,\,1 \}}\,C'(\mu,N,V,\tau)\,\norma{\ginieps}_{{\cL^2_\mu}},
\end{equation}
for all $0 \leq t \leq \tau$.
\end{lemma}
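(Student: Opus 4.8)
The plan is to compare the two unitary evolutions by telescoping three independent sources of error — the mismatch of the initial data, of the free (diagonal) generators, and of the external-potential operators — and to close the residual self-consistent term by Gronwall's lemma. By \eqref{GD} we have $g_*^\eps(t)=\e^{-it(\Lambda^\eps+T^*_\eps\cU^0T_\eps)}\,T^*_\eps\ginieps$, while $g^\eps_\EM(t)=\e^{-it(A_\EM^\eps+\cU^0)}\ginieps$, and Lemma \ref{Lemma0} guarantees that both $\cU^0$ and its unitary conjugate $T^*_\eps\cU^0T_\eps$ are bounded by $\norma{V}_{L^\infty}$. First I would peel off the initial-data discrepancy,
$$ g_*^\eps(t)-g^\eps_\EM(t)=\e^{-it(\Lambda^\eps+T^*_\eps\cU^0T_\eps)}(T^*_\eps-I)\ginieps+\big[\e^{-it(\Lambda^\eps+T^*_\eps\cU^0T_\eps)}-\e^{-it(A_\EM^\eps+\cU^0)}\big]\ginieps, $$
whose first summand has $\cL^2$-norm equal to $\norma{(T^*_\eps-I)\ginieps}_{\cL^2}$, of size $\eps^{\min\{\mu,1\}}$ by the adjoint version of Theorem \ref{P3} (the operator $T^*_\eps$ is unitary with $T^*_0=I$ and inherits the same analyticity of $T^*(\xi)\Pi_N$, so the proof of Theorem \ref{P3} applies verbatim), using crucially that $\ginieps=\Pi_N\gini$ is confined to the first $N$ bands.

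For the second bracket I would use the Duhamel representation with the \emph{free} groups as reference: expanding $\e^{-it(\Lambda^\eps+T^*_\eps\cU^0T_\eps)}$ around $\e^{-it\Lambda^\eps}$ and $\e^{-it(A_\EM^\eps+\cU^0)}$ around $\e^{-itA_\EM^\eps}$ and subtracting, the difference splits into four pieces. The first is the free-group mismatch on the confined datum, $\big(\e^{-it\Lambda^\eps}-\e^{-itA_\EM^\eps}\big)\ginieps$, which Theorem \ref{TheoEM} bounds by $\eps^{\min\{\mu/3,1\}}C(\mu,N,t)\norma{\ginieps}_{{\cL^2_\mu}}$. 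The second is a feedback term $\int_0^t\e^{-i(t-s)\Lambda^\eps}\cU^0\,[g_*^\eps(s)-g^\eps_\EM(s)]\,ds$, absorbed by Gronwall since $\cU^0$ is bounded. The third is the potential-mismatch term $\int_0^t\e^{-i(t-s)\Lambda^\eps}(T^*_\eps\cU^0T_\eps-\cU^0)\,g_*^\eps(s)\,ds$, which I would treat through $T^*_\eps\cU^0T_\eps-\cU^0=T^*_\eps\cU^0(T_\eps-I)+(T^*_\eps-I)\cU^0$ and Theorem \ref{P3}. The fourth, $\int_0^t\big(\e^{-i(t-s)\Lambda^\eps}-\e^{-i(t-s)A_\EM^\eps}\big)\cU^0\,g^\eps_\EM(s)\,ds$, is again a free-group mismatch, but now acting on $\cU^0g^\eps_\EM(s)$. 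Throughout I would invoke Lemma \ref{L3} to keep $\norma{g_*^\eps(s)}_{{\cL^2_\mu}}$ and $\norma{g^\eps_\EM(s)}_{{\cL^2_\mu}}$ bounded by $\e^{c\tau}\norma{\ginieps}_{{\cL^2_\mu}}$ on $[0,\tau]$; collecting rates, the slowest is $\eps^{\min\{\mu/3,1\}}$, and the Gronwall factor together with the $N$-dependence inherited from Theorems \ref{TheoEM} and \ref{P3} yield the constant $C'(\mu,N,V,\tau)$.

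The main obstacle lies precisely in the third and fourth pieces. Both Theorem \ref{TheoEM} and the quantitative form of Theorem \ref{P3} deliver their $\eps$-rates \emph{only} on data confined to the first $N$ bands ($g=\Pi_N g$), whereas $\cU^0$ couples all bands: the evolved solutions $g_*^\eps(s)$ and $g^\eps_\EM(s)$, and a fortiori $\cU^0g^\eps_\EM(s)$, are no longer supported on the first $N$ bands, so the diagonalizer mismatch $T_\eps-I$ and the free-propagator mismatch $\e^{-it\Lambda^\eps}-\e^{-itA_\EM^\eps}$ cannot be applied to them with a uniform rate — on high bands $(\lambda_n(\eps k)-E_n)/\eps^2$ even blows up like $\eps^{-1}$. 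What rescues the estimate is adiabatic decoupling: the interband phase factors $\e^{i\omega_{nn'}t/\eps^2}$ produced by the large level spacings $E_n/\eps^2$ (cf.\ \eqref{hdyn}) force the leakage out of the first $N$ bands to be of higher order in $\eps$. The delicate step is therefore to isolate, by a further non-stationary-phase/integration-by-parts argument in $s$ on these two integral terms, the genuinely band-spread contributions and show they are estimable by the boundedness of $\cU^0$ alone and can be swept into the Gronwall argument, so that the rate-producing estimates of Theorems \ref{TheoEM} and \ref{P3} are invoked only on $\Pi_N$-confined objects.
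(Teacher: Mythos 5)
Your first two paragraphs are, up to a harmless rearrangement, exactly the paper's proof. The paper writes the single Duhamel identity \eqref{Duhamel}, whose five terms are precisely your pieces: the free-group mismatch $\left( S^\eps_\Lambda- S^\eps_\EM\right)(t)\ginieps$ estimated by Theorem \ref{TheoEM}, the initial-datum term $S^\eps_\Lambda(t)(T^*_\eps-I)\ginieps$ estimated by \eqref{TepsIneq}, a Gronwall feedback term, the potential mismatch treated through the same splitting $\cU^\eps_T-\cU^0=(T^*_\eps-I)\cU^0+T^*_\eps\cU^0(T_\eps-I)$ with $\cU^\eps_T:=T^*_\eps\cU^0T_\eps$, and the free-group mismatch acting on $\cU^0 g^\eps_\EM(s)$, all glued by Lemma \ref{L3} and Gronwall. (That the paper's feedback term carries $\cU^\eps_T$ acting on $h^\eps(s)$ while its mismatch term acts on $g^\eps_\EM(s)$, whereas yours carry $\cU^0$ and $g_*^\eps(s)$, is an equivalent regrouping; both operators are bounded by $\norma{V}_{\cW_0}$ by Lemma \ref{Lemma0}.)

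Your closing paragraph, however, departs from the paper, and the departure cuts the other way from what you might expect: the paper performs \emph{no} adiabatic-decoupling or non-stationary-phase repair. In \eqref{Stima1.1}--\eqref{Stima3} it applies Theorem \ref{TheoEM} and inequality \eqref{TepsIneq} directly to $g^\eps_\EM(s)$ and $\cU^0 g^\eps_\EM(s)$, with only the ${\cL^2_\mu}$-bounds of Lemma \ref{L3} as justification, even though these objects are not $\Pi_N$-confined (since $\cU^0$ couples all bands) and both quantitative estimates are proved only for confined data, with constants $C(\mu,N)$, $C(\mu,N,t)$ that degenerate as $N\to\infty$ together with the analyticity radius $R_N$. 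So you have not missed a step that the paper supplies; rather, you have put your finger on the one point where the published argument is elliptical. Be aware, though, that your own proposal is also incomplete exactly there: the repair you sketch is, in substance, the integration-by-parts-in-$s$ argument that the paper deploys only later, in the proof of Theorem \ref{LastTeo}, where it requires extra regularity ($V\in\cW_2$, data in $\cH^2$) and, for general data, a density argument that sacrifices the explicit rate; carrying it out here while preserving the $\eps^{\min\{\mu/3,\,1\}}$ rate of \eqref{LastIneq} (for instance by proving a bound on $\norma{\Pi_M^c\, g^\eps_\EM(s)}_{\cL^2}$, $\Pi_M^c=I-\Pi_M$, uniform in $\eps$ and $s\leq\tau$) is a genuine piece of work that you describe but do not execute. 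In short: to reproduce the paper's proof you should stop after your second paragraph; to obtain a proof that is complete at the point you flagged, the adiabatic estimate still has to be written down.
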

\begin{proof}
Let $S^\eps_{\Lambda}(t) = \exp(-it \Lambda^\eps)$, $S^\eps_\EM(t) = \exp(-it A^\eps_\EM)$ 
and $\cU^\eps_T : = T^*_\eps\,\cU^0 T_\eps$.
Then, 
$$
\begin{aligned}
 &g^\eps_\EM(t) = S^\eps_\EM(t) \ginieps + \int_0^t S^\eps_\EM(t-s)\, 
 \cU^0 g^\eps_\EM(s)\,ds,
\\[4pt]
 &g_*^\eps(t) = S^\eps_\Lambda(t) T_\eps^* \ginieps + \int_0^t S^\eps_\Lambda(t-s)\,
 \cU^\eps_T\,g_*^\eps(s)\,ds.
\end{aligned}
$$
Putting $h^\eps = g_*^\eps - g^\eps_\EM$, we can write
\begin{multline}
\label{Duhamel}
 h^\eps(t) = \left( S^\eps_\Lambda- S^\eps_\EM\right)(t)\, \ginieps 
 + S^\eps_\Lambda(t) \left(T_\eps^*  - I\right)\ginieps
 + \int_0^t S^\eps_\Lambda(t-s)\,\cU^\eps_T h^\eps(s)\,ds
\\
 + \int_0^t S^\eps_\Lambda(t-s) \left( \cU^\eps_T - \cU^0 \right) g^\eps_\EM(s)\, ds
 + \int_0^t \left( S^\eps_\Lambda - S^\eps_\EM\right)(t-s)\, \cU^0 g^\eps_\EM(s)\,ds.
\end{multline}
From the effective mass theorem, Theorem \ref{TheoEM},
a constant  $C(\mu,N,t)$ exists such that 
\begin{equation}
\label{Stima1}
  \norma{ \left( S^\eps_\Lambda- S^\eps_\EM\right)(t)\, \ginieps}_{\cL^2}
  \leq \eps^{\min\{\mu/3,\, 1\}}\,C(\mu,N,t) \norma{\ginieps}_{{\cL^2_\mu}}.
\end{equation}
Moreover, from Lemma \ref{L3} we have that both $g^\eps_\EM(t)$ and $\cU^0 g^\eps_\EM(t)$ belong 
to $\cL^2_\mu$ for all $t$, and that a constant 
$C_1(\mu,V,t) \geq 0$ exists such that
\begin{equation}
\label{Stima1.1}
  \norma{\cU^0 g^\eps_\EM(t)}_{{\cL^2_\mu}} \leq C_1(\mu,V,t)\, \norma{\ginieps}_{{\cL^2_\mu}}
\end{equation}
(this stems, in particular, from the commutator inequality \eqref{Comm1}, which still holds for $g^\eps_\EM$).
This inequality, together with Theorem \ref{TheoEM}, yields
\begin{equation}
\label{Stima2}
  \norma{\left( S^\eps_\Lambda- S^\eps_\EM\right)(t-s)\, \cU^0 g^\eps_\EM(s)}_{\cL^2}
  \leq \eps^{\min\{\mu/3,\, 1\}}\,C_2(\mu,N,t-s) \norma{\ginieps}_{{\cL^2_\mu}}
\end{equation}
for a suitable constant $C_2(\mu,N,t) \geq 0$.
In order to estimate the last integral in \eqref{Duhamel}, let us write
$$
  \big(\cU^\eps_T - \cU^0 \big) g^\eps_\EM(s) 
  = \left(T_\eps^* - I\right) \cU^0 g^\eps_\EM(s) 
  + T_\eps^*\,\cU^0 \left(T_\eps - I \right) g^\eps_\EM(s).
$$
Using inequalities \eqref{TepsIneq} and \eqref{Stima1.1} we see that another constant
$C_3(\mu,N,V,t) \geq 0$ exists such that
\begin{equation}
\label{Stima3}
 \norma{\big( \cU^\eps_T - \cU^0 \big) g^\eps_\EM(s)}_{\cL^2} 
 \leq \eps^{\min\{\mu,1\}}\, C_3(\mu,N,V,t)\,\norma{\ginieps}_{{\cL^2_\mu}} \,.
\end{equation}
In conclusion, from inequalities \eqref{Stima1}, \eqref{Stima2} and \eqref{Stima3},
and from Eq.~\eqref{Duhamel}, we get 
$$
  \norma{h^\eps(t)}_{\cL^2} \leq  \eps^{\min\{\mu/3, 1\}} \,C_4(\mu,N,V,\tau) \norma{\ginieps}_{{\cL^2_\mu}}
   + \norma{V}_{\cW_0}\int_0^t \norma{h^\eps(s)}_{\cL^2}\,ds,
$$
for all  $0 \leq t \leq \tau$ (here we also used the fact that all the estimation constants
introduced so far are non-decreasing with respect to time).
Hence, inequality \eqref{LastIneq}, with $C'(\mu,N,V,\tau) = \e^{\tau\norma{V}_{\cW_0}} C_4(\mu,N,\tau,V)$, 
follows from Gronwall's Lemma.
\end{proof}
\begin{theorem}
\label{CoroFinal}
Let  $g^\eps_\KP(t)$ and $g_\EM^\eps(t)$ as in \eqref{MildSols}, and 
assume $g^{in,\eps} \in {\cL^2_\mu}$, with a uniform bound as $\eps$ tends to zero.
Moreover, assume $V \in \cW_\mu$ for some $\mu > 0$.
Then $\lim_{\eps\to 0} \norma{g^\eps_\KP(t) - g^\eps_\EM(t)}_{\cL^2} =0$, uniformly
in bounded time-intervals.
If, in addition, $g^{in,\eps} = \Pi_Ng^{in,\eps}$, for some $N$ then, for any given $\tau \geq 0$, 
a constant $C''(\mu,N,V,\tau) \geq 0$, independent of $\eps$, exists such that
\begin{equation}
\label{CoroIneq}
  \norma{g^\eps_\KP(t) - g^\eps_\EM(t)}_{\cL^2} 
  \leq \eps^{\min\{\mu/3,\,1 \}}\,C''(\mu,N,V,\tau)\,\norma{g^{in,\eps}}_{{\cL^2_\mu}},
\end{equation}
for all $0 \leq t \leq \tau$.
\end{theorem}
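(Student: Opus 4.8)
The strategy is to route the comparison through the diagonalized k$\cdot$p flow $g_*^\eps(t)=T_\eps^*\,g^\eps_\KP(t)$ of \eqref{GD}, for which Lemma \ref{KPvsEM} already measures the gap to $g^\eps_\EM(t)$. Since $T_\eps$ is unitary and hence $T_\eps g_*^\eps=g^\eps_\KP$ (Theorem \ref{P3}), I would split
\[
  g^\eps_\KP(t)-g^\eps_\EM(t)=(T_\eps-I)\,g_*^\eps(t)+\big(g_*^\eps(t)-g^\eps_\EM(t)\big),
\]
so that $\norma{g^\eps_\KP(t)-g^\eps_\EM(t)}_{\cL^2}\le\norma{(T_\eps-I)g_*^\eps(t)}_{\cL^2}+\norma{g_*^\eps(t)-g^\eps_\EM(t)}_{\cL^2}$. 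Under the confinement hypothesis $g^{in,\eps}=\Pi_N g^{in,\eps}$ of the second assertion, the last term is bounded directly by Lemma \ref{KPvsEM} by $\eps^{\min\{\mu/3,1\}}C'(\mu,N,V,\tau)\norma{g^{in,\eps}}_{{\cL^2_\mu}}$, so everything reduces to the de-diagonalization error $\norma{(T_\eps-I)g_*^\eps(t)}_{\cL^2}$.

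For this term I would first observe that $g_*^\eps(t)\in{\cL^2_\mu}$ uniformly in $\eps$: the scalar weight $(1+\abs{k}^2)^{\mu/2}$ commutes with the fibered unitary action of $T(\eps k)$, so $T_\eps^*$ is an isometry of ${\cL^2_\mu}$, and $\norma{g^\eps_\KP(t)}_{{\cL^2_\mu}}$ is controlled by Lemma \ref{L3}. I would then rerun the proof of Theorem \ref{P3} with $g$ replaced by the $\eps$-family $g_*^\eps(t)$, splitting $g_*^\eps=\Pi_M g_*^\eps+\Pi_M^c g_*^\eps$ for a cut-off $M$ to be chosen: estimate \eqref{TepsIneq} bounds the confined part by $\eps^{\min\{\mu,1\}}C(\mu,M)\norma{g_*^\eps(t)}_{{\cL^2_\mu}}$, while the tail is bounded by $2\norma{\Pi_M^c g_*^\eps(t)}_{\cL^2}$ using unitarity of $T_\eps$.

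The crux — and the step I expect to be the main obstacle — is the uniform-in-$\eps$ control of the band tail $\norma{\Pi_M^c g_*^\eps(t)}_{\cL^2}$. The obstruction is structural: $A_\KP^\eps$ mixes bands at order $\eps^{-1}$ through the term $-\frac{i}{\eps}k\cdot P_{nn'}$, so neither $g^\eps_\KP(t)$ nor $g_*^\eps(t)$ stays confined to the first $N$ bands, and \eqref{TepsIneq} alone does not apply. To control the tail I would use conservation of the quadratic form $\bk{(\Lambda^\eps+T_\eps^*\cU^0 T_\eps)g_*^\eps(t),g_*^\eps(t)}$ (self-adjoint generator, $\cU^0$ bounded), together with the band-growth estimate \eqref{LambdaGrowth} of Lemma \ref{P4} relating $\lambda_n(\eps k)$ to $E_n$; after integration in $k$ this yields $\sum_n E_n\norma{g_{*,n}^\eps(t)}_{L^2}^2\le C\,E_N$ uniformly in $\eps$ and in bounded time, whence $\norma{\Pi_M^c g_*^\eps(t)}_{\cL^2}^2\le C\,E_N/E_{M+1}$. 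Balancing $M=M(\eps)\to\infty$ against the growth of $C(\mu,M)$ (governed by the analyticity radii $R_M$ of Corollary \ref{AnalyticRadius}) then produces the asserted rate $\eps^{\min\{\mu/3,1\}}$ and establishes \eqref{CoroIneq}.

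Finally, for the unconfined limit I would argue by band truncation: writing $g^{in,\eps}=\Pi_N g^{in,\eps}+\Pi_N^c g^{in,\eps}$ and using linearity of the flows $G^\eps_\KP,G^\eps_\EM$, the confined piece is handled by the quantitative estimate above (fixed $N$, tending to $0$ as $\eps\to0$), while unitarity bounds the remainder by $2\norma{\Pi_N^c g^{in,\eps}}_{\cL^2}$, which the same uniform band-tail control makes small for $N$ large; in the $g_*^\eps$ versus $g^\eps_\EM$ comparison one replaces Theorem \ref{TheoEM} by the qualitative corollary following it, so that no confinement of the data is needed. Sending first $N\to\infty$ and then $\eps\to0$, uniformly on bounded time-intervals, gives the first assertion.
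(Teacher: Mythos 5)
You follow the paper's skeleton almost exactly: your splitting $g^\eps_\KP-g^\eps_\EM=(T_\eps-I)g_*^\eps+(g_*^\eps-g^\eps_\EM)$, with $g_*^\eps=T_\eps^*g^\eps_\KP$, coincides (up to sign and unitarity of $T_\eps$) with the paper's bound $\norma{(T_\eps^*-I)g^\eps_\KP(t)}_{\cL^2}+\norma{g_*^\eps(t)-g^\eps_\EM(t)}_{\cL^2}$; the second term is handled by Lemma \ref{KPvsEM} in both arguments; the $\cL^2_\mu$-propagation comes from Lemma \ref{L3}; and the reduction of the unconfined statement to the confined one by writing $g^{in,\eps}=\Pi_Ng^{in,\eps}+\Pi_N^cg^{in,\eps}$ and using unitarity on the tail is also the paper's. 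The one place you genuinely deviate is the de-diagonalization term: the paper estimates $\norma{(T_\eps^*-I)g^\eps_\KP(t)}_{\cL^2}$ by applying \eqref{TepsIneq}, with the \emph{fixed} $N$ of the initial confinement, to $g^\eps_\KP(t)$ together with \eqref{L3ineq}, whereas you declare this inapplicable (because neither $A_\KP^\eps$ nor $\cU^0$ preserves $\mathrm{Ran}\,\Pi_N$) and substitute an energy-conservation plus moving-cutoff argument. The concern you raise about band confinement is a fair reading of \eqref{TepsIneq}, but your replacement must then stand on its own, and it does not.

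There are two concrete failures. First, the conserved quantity you invoke is, at $t=0$,
\begin{equation*}
\bk{(\Lambda^\eps+T_\eps^*\cU^0T_\eps)\,g_*^\eps(0),g_*^\eps(0)}_{\cL^2}
=\frac{1}{\eps^2}\int_{\mR^d}\bk{A(\eps k)g^{in,\eps}(k),g^{in,\eps}(k)}_{\ell^2}dk
+\bk{\cU^0g^{in,\eps},g^{in,\eps}}_{\cL^2},
\end{equation*}
and the $A_2$-part of $A(\eps k)$ contributes $\frac12\int_{\mR^d}\abs{k}^2\norma{g^{in,\eps}(k)}_{\ell^2}^2dk$, which is finite only if $g^{in,\eps}\in\cL^2_1$; the theorem is asserted for every $\mu>0$, and for $0<\mu<1$ your energy can be infinite (band confinement does not help, since this is a weight in $k$, not in $n$), so the bound $\sum_nE_n\norma{g^\eps_{*,n}(t)}_{L^2}^2\leq CE_N$ is simply unavailable. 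Second, and more fundamentally, the balancing step cannot produce \eqref{CoroIneq}: on the confined part you must pay $\eps^{\min\{\mu,1\}}C(\mu,M(\eps))$, and nothing controls the growth of $C(\mu,M)$ — i.e.\ of the Lipschitz constants $L_M$ and the radii $R_M$ of Corollary \ref{AnalyticRadius}, where typically $R_M\lesssim\max\{E_{n+1}-E_n\}/\sqrt{E_{M+1}}\to0$ — as $M\to\infty$. With $M(\eps)$ unbounded, the resulting constant depends on the whole band structure in an uncontrolled way and the product $\eps^{\min\{\mu,1\}}C(\mu,M(\eps))$ cannot be asserted to be $O(\eps^{\min\{\mu/3,1\}})$ with a constant $C''(\mu,N,V,\tau)$, which is exactly what \eqref{CoroIneq} requires. (A smaller slip of the same kind: in the unconfined case you claim $\norma{\Pi_N^cg^{in,\eps}}_{\cL^2}$ is made small uniformly in $\eps$ by ``the same band-tail control'', but a uniform $\cL^2_\mu$ bound gives decay in $k$, not in the band index $n$.) If one accepts your premise that the paper's direct use of \eqref{TepsIneq} needs justification, the repair has to be an adiabatic-decoupling estimate at \emph{fixed} $N$ — e.g.\ a Duhamel expansion in the diagonalized variables with integration by parts in time against the phases $\e^{i\omega_{nn'}t/\eps^2}$, as in the proof of Theorem \ref{LastTeo} — not a cutoff drifting to infinity where all constants degenerate.
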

\begin{proof}
We begin by the second statement, assuming $g^{in,\eps} = \Pi_Ng^{in,\eps}$.
Using inequalities \eqref{TepsIneq} and \eqref{LastIneq}, and recalling definition \eqref{GD}, 
we can write
$$
  \norma{g^\eps_\KP(t) - g^\eps_\EM(t)}_{\cL^2} \leq 
  \norma{(T_\eps^*-I)g^\eps_\KP(t)}_{\cL^2} + \norma{g^\eps_*(t) - g^\eps_\EM(t)}_{\cL^2}
$$ $$
  \leq \eps^{\min\{\mu,\,1\}}\,C(\mu,N)\,\norma{g^\eps_\KP(t)}_{{\cL^2_\mu}} 
  + \eps^{\min\{\mu/3,\,1 \}}\,C'(\mu,N,V,\tau)\,\norma{g^{in,\eps} }_{{\cL^2_\mu}},
$$
for $0 \leq t \leq \tau$.
Then, using also \eqref{L3ineq}, inequality \eqref{CoroIneq} follows.
If now $g^{in,\eps}$ simply belongs to $\cL^2_\mu$, then for any fixed $N$ we can write
$$
  \norma{(g^\eps_\KP - g^\eps_\EM)(t)}_{\cL^2}  \leq 
  \norma{(G^\eps_\KP - G^\eps_\EM)(t)\Pi_N g^{in,\eps} }_{\cL^2} 
  + \norma{(G^\eps_\KP - G^\eps_\EM)(t)\Pi_N^c g^{in,\eps} }_{\cL^2}
$$ $$
 \leq \eps^{\min\{\mu/3,\,1 \}}\,C''(\mu,N,\tau)\,\norma{g^{in,\eps}}_{{\cL^2_\mu}}
 + 2\,\norma{\Pi_N^c g^{in,\eps} }_{\cL^2},
$$
for all $0 \leq t \leq \tau$.
Since $\Pi_N^c g^{in,\eps}  \to 0$ in $\cL^2$ as $N\to\infty$, then we can fix $N$ large enough and, 
successively, $\eps$ small enough (uniformly in $0 \leq t \leq \tau$, by assumption) so that
$\norma{g^\eps_\KP(t) - g^\eps_\EM(t)}_{\cL^2}$ is arbitrarily small, which proves our assertion.
\end{proof}
The following result is a direct consequence of the above comparisons.
\begin{corollary}
Assume that the envelope functions $f^{in,\eps}(x)$ are bounded in $\cH^\mu$ for some $\mu>0$ 
and that the potential $V(x,z)$ belong to $\cW_\mu$. 
Then we have the local uniform in time convergence
$$
  \lim_{\eps \to 0} \norma{f^\eps(t) - f^\eps_\EM(t)}_{\cL^2} = 0,
$$
where $f^\eps(t,x)$ and $f^\eps_\EM(t,x)$ are the respective solutions of \eqref{EXE} and \eqref{EM1pos}
\end{corollary}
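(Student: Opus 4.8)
The plan is to move the whole comparison to the Fourier (wavevector) side, where the two dynamics in question have already been analysed, and then to chain together the two comparison estimates via a single triangle inequality. First I would observe that, component by component, the Plancherel theorem on $L^2(\mR^d)$ — equivalently the $\eps$-scaled Parseval identity \eqref{Parsevaleps} — gives
$$
  \norma{f^\eps(t) - f^\eps_\EM(t)}_{\cL^2} = \norma{g^\eps(t) - g^\eps_\EM(t)}_{\cL^2},
$$
where $g^\eps(t) = \widehat{f^\eps}(t)$ is the solution of \eqref{EFEexact} and $g^\eps_\EM(t) = \widehat{f^\eps_\EM}(t)$ is the solution of \eqref{EFEem}. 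The next point is to translate the hypotheses: by Definition \ref{Spaces}, $f \in \cH^\mu$ if and only if $\widehat f \in \cL^2_\mu$, so the assumed uniform $\cH^\mu$-bound on the initial envelope functions $\fini$ is exactly a uniform $\cL^2_\mu$-bound on $\ginieps = \widehat{\fini}$. Hence the data satisfy the hypotheses of both comparison theorems.

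Next I would introduce the k$\cdot$p dynamics $g^\eps_\KP(t)$ (the solution of \eqref{EFEkp}) as an intermediate model and split
$$
  \norma{g^\eps(t) - g^\eps_\EM(t)}_{\cL^2} \leq
  \norma{g^\eps(t) - g^\eps_\KP(t)}_{\cL^2} + \norma{g^\eps_\KP(t) - g^\eps_\EM(t)}_{\cL^2}.
$$
For the first (exact versus k$\cdot$p) term I would invoke Theorem \ref{EXvsKP}: since $\ginieps \in \cL^2_\mu$ and $V \in \cW_\mu$, it is bounded on $[0,\tau]$ by $\eps^\mu\,C(\mu,V,\tau)\,\norma{\ginieps}_{\cL^2_\mu}$, which vanishes as $\eps \to 0$ thanks to the uniform bound on $\norma{\ginieps}_{\cL^2_\mu}$. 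For the second (k$\cdot$p versus effective mass) term I would invoke the first statement of Theorem \ref{CoroFinal}, which — under exactly the present hypotheses (uniformly bounded $\ginieps \in \cL^2_\mu$ and $V \in \cW_\mu$ with $\mu > 0$, and crucially \emph{without} the confinement-in-$N$-bands assumption needed only for the quantitative rate) — gives convergence to $0$ uniformly on bounded time intervals. Superposing the two bounds yields $\lim_{\eps\to 0}\norma{f^\eps(t) - f^\eps_\EM(t)}_{\cL^2} = 0$, locally uniformly in $t$.

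In all honesty the substance of this result lives in the two preceding theorems, and the present statement is their direct corollary; the only step that requires genuine care is the bookkeeping of function spaces. Specifically, one must check that the $\cH^\mu$ hypothesis in position space corresponds under $\cF$ to the $\cL^2_\mu$ hypothesis used in Theorems \ref{EXvsKP} and \ref{CoroFinal}, and that the estimation constants there really are independent of $\eps$, so that the $\eps^\mu$ decay of the first term and the qualitative limit of the second term can be combined into a single local-uniform-in-time convergence. No new estimate is needed beyond those already proved.
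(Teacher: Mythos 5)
Your proof is correct and is exactly the argument the paper intends: the paper declares this corollary ``a direct consequence of the above comparisons,'' i.e.\ of Theorem \ref{EXvsKP} (exact vs.\ k$\cdot$p) and the first statement of Theorem \ref{CoroFinal} (k$\cdot$p vs.\ effective mass, no band-confinement needed), chained by the triangle inequality after passing to Fourier variables via the equivalence $f\in\cH^\mu \Leftrightarrow \widehat{f}\in\cL^2_\mu$ — precisely what you did. One cosmetic correction: the identity $\norma{f^\eps(t)-f^\eps_\EM(t)}_{\cL^2}=\norma{g^\eps(t)-g^\eps_\EM(t)}_{\cL^2}$ follows from componentwise Plancherel alone and is \emph{not} ``equivalently'' the scaled Parseval identity \eqref{Parsevaleps}, which relates $\psi$ to its envelope functions $\pi_n^\eps(\psi)$ rather than $f_n$ to $\widehat{f}_n$, but this mislabelling has no effect on the validity of the argument.
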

We are now able to prove the following theorem.
\begin{theorem}
\label{LastTeo}
Let $h_\EM^\eps(t,x)$ and $h_\EM(t,x)$ be the mild solutions of, respectively,  Eq.~\eqref{hdyn} 
and Eq.~\eqref{limit}. 
Assume $\lim_{\eps\to0} \norma{f^{in,\eps} - f^{in}}_{\cL^2} = 0$ and assume that $\mu > 0$  exists
such that $V \in \cW_\mu$ and $f^{in,\eps}$ is bounded uniformly in $\cH^\mu$.
Then 
$$
  \lim_{\eps\to0} \norma{h_\EM^\eps(t) - h_\EM(t)}_{\cL^2} = 0,
$$
uniformly in bounded time intervals.
\end{theorem}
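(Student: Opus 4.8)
The plan is to read \eqref{hdyn} and \eqref{limit} as two Duhamel perturbations of the \emph{same} band-diagonal effective-mass flow, and to show that the off-diagonal coupling of \eqref{hdyn} averages to zero because of its fast phases. Let $\cS(t)$ be the unitary, band-diagonal propagator generated by $(\cA h)_n=-\tfrac12\DIV(\mM_n^{-1}\nabla h_n)$ (a real Fourier multiplier $\tfrac12 k\cdot\mM_n^{-1}k$ on each band), let $\bV^0$ be the bounded operator \eqref{V0def}, and let $\bV^0_d$ be its diagonal part $(\bV^0_d h)_n=V_{nn}h_n$. Define the time-dependent operator $W^\eps(t)$ with entries $\e^{i\omega_{nn'}t/\eps^2}V_{nn'}$. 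The key structural remark is the factorization $W^\eps(t)=D^\eps(t)\,\bV^0\,D^\eps(t)^{*}$ with $D^\eps(t)=\mathrm{diag}\big(\e^{iE_n t/\eps^2}\big)$ unitary and band-diagonal; hence by Lemma~\ref{Lemma0} one has $\norma{W^\eps(t)}_{\cL^2}\le\norma{V}_{L^\infty}$ uniformly in $t,\eps$, and $h_\EM^\eps$ is just \eqref{limit} written in the interaction picture. Fixing $\tau>0$, the mild solutions satisfy
\[
  h_\EM^\eps(t)=\cS(t)\,f^{in,\eps}-i\int_0^t \cS(t-s)\,W^\eps(s)\,h_\EM^\eps(s)\,ds,
  \qquad
  h_\EM(t)=\cS(t)\,f^{in}-i\int_0^t \cS(t-s)\,\bV^0_d\,h_\EM(s)\,ds.
\]

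Subtracting and splitting $W^\eps(s)h_\EM^\eps-\bV^0_d h_\EM=W^\eps(s)\big(h_\EM^\eps-h_\EM\big)+\big(W^\eps(s)-\bV^0_d\big)h_\EM$, the first piece feeds a Gronwall loop with operator constant $\norma{V}_{L^\infty}$, the initial-data mismatch contributes $\cS(t)(f^{in,\eps}-f^{in})$ of norm $\norma{f^{in,\eps}-f^{in}}_{\cL^2}\to0$, and everything is reduced to the \emph{source term}
\[
  r^\eps(t)=\int_0^t \cS(t-s)\,\big(W^\eps(s)-\bV^0_d\big)\,h_\EM(s)\,ds .
\]
Since $\omega_{nn}=0$, the diagonal entries of $W^\eps(s)-\bV^0_d$ cancel, so this operator is purely off-diagonal with unimodular phases $\e^{i\omega_{nn'}s/\eps^2}$, $n\neq n'$. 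Proving $\sup_{0\le t\le\tau}\norma{r^\eps(t)}_{\cL^2}\to0$ is the heart of the argument and its \textbf{main obstacle}: the averaging must be uniform in $t$ and summable over infinitely many bands, whereas only $\mu>0$ regularity is available (too little for a naive integration by parts, which would need $H^2$).

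I resolve this by a threefold reduction. First, I truncate the input bands, replacing $f^{in}$ by $\Pi_N f^{in}$: because \eqref{limit} is band-diagonal and unitary (the $V_{nn}$ are real), this alters $h_\EM(s)$ by exactly $\norma{\Pi_N^c f^{in}}_{\cL^2}$ in $\cL^2$, hence $r^\eps$ by at most $2\norma{V}_{L^\infty}\tau\,\norma{\Pi_N^c f^{in}}_{\cL^2}$, uniformly in $\eps,t$, and now $h_{\EM,n'}\equiv0$ for $n'>N$. Second, I truncate the output bands: from $\sum_n|V_{nn'}(x)|^2=\norma{V(x,\cdot)v_{n'}}_{L^2(\cC)}^2\le\norma{V}_{L^\infty}^2$ (Parseval for the orthonormal $(v_n)$, cf.\ \eqref{VnmDef}) and dominated convergence, $\norma{\Pi_M^c\,\bV^0 h_\EM(s)}_{\cL^2}\to0$ as $M\to\infty$, uniformly on the compact curve $\{h_\EM(s):0\le s\le\tau\}$; since the phases are unimodular the same bound (up to a fixed factor) controls $\Pi_M^c(W^\eps(s)-\bV^0_d)h_\EM(s)$, so $\norma{\Pi_M^c r^\eps(t)}_{\cL^2}\to0$ uniformly in $\eps,t$. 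Third, for the finitely many surviving pairs $n\le M$, $n'\le N$, $n\neq n'$, I use the group identity $\cS_n(t-s)=\cS_n(t)\cS_n(s)^{*}$ and unitarity of $\cS_n(t)$ to write
\[
  \int_0^t \e^{i\omega_{nn'}s/\eps^2}\,\cS_n(t-s)\,V_{nn'}\,h_{\EM,n'}(s)\,ds
  =\cS_n(t)\int_0^t \e^{i\omega_{nn'}s/\eps^2}\,\Phi_{nn'}(s)\,ds,
  \qquad \Phi_{nn'}(s)=\cS_n(s)^{*}V_{nn'}\,h_{\EM,n'}(s).
\]
The integrand $\Phi_{nn'}$ is now \emph{$t$-independent} and continuous in $\cL^2$; approximating it uniformly by $L^2$-valued step functions and integrating the phase explicitly on each subinterval gives a bound $\tau\eta+C(\eta)\eps^2$, so each term tends to $0$ as $\eps\to0$ uniformly in $t\in[0,\tau]$ (vector-valued Riemann–Lebesgue, with $\omega_{nn'}\neq0$ guaranteed by simplicity of the $E_n$). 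This group-property trick is the decisive point, since it furnishes uniform-in-time averaging without any $H^2$ control.

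Combining the three steps, $\limsup_{\eps\to0}\sup_{[0,\tau]}\norma{r^\eps}_{\cL^2}$ is dominated by the arbitrarily small truncation errors of the first two steps, hence vanishes. Inserting this, together with $\norma{f^{in,\eps}-f^{in}}_{\cL^2}\to0$, into the Duhamel identity for $h_\EM^\eps-h_\EM$ and applying Gronwall with constant $\norma{V}_{L^\infty}$ yields $\sup_{0\le t\le\tau}\norma{h_\EM^\eps(t)-h_\EM(t)}_{\cL^2}\to0$, which is the claim.
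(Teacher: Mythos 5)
Your proof is correct, but it takes a genuinely different route from the paper's. The paper absorbs the diagonal potential $V_{nn}$ into the free generator $H_0$, so that the limit solution is exactly $S(t)\hini$ and the perturbation $R^\eps(t)$ is purely off-diagonal; after one band cutoff it handles the oscillatory Duhamel integral by integrating by parts in time (writing $R^\eps(t)=\eps^2\tfrac{d}{dt}R^\eps_\omega(t)$), which brings down $H_0$ and $\tfrac{d}{ds}h_\EM^\eps(s)$ and therefore forces the auxiliary hypotheses $V\in\cW_2$, $\hini\in\cH^2$ together with $\cH^2$-propagation bounds; general data with $0<\mu<2$ are then reached by regularizing both the datum and the potential and using stability of the two flows. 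You instead keep only the kinetic part in the free propagator, feed $W^\eps(s)\big(h^\eps_\EM-h_\EM\big)$ to Gronwall, and reduce everything to a source term along the fixed, $\eps$-independent limit trajectory; after your two truncations (input, by diagonal unitarity of \eqref{limit}; output, by the Parseval bound $\sum_n|V_{nn'}(x)|^2\le\norma{V}_{L^\infty}^2$ plus Cauchy--Schwarz in $n'$ --- after which the phases indeed drop out --- and compactness of the trajectory $\{h_\EM(s):0\le s\le\tau\}$), the decisive step is the group identity $\cS_n(t-s)=\cS_n(t)\cS_n(s)^*$, which converts the Duhamel integral into $\cS_n(t)\int_0^t\e^{i\omega_{nn'}s/\eps^2}\Phi_{nn'}(s)\,ds$ with $\Phi_{nn'}$ continuous and $t$-independent, so that a vector-valued Riemann--Lebesgue argument applies uniformly in $t$. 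Your route is more elementary and slightly more general: it never uses the hypotheses $V\in\cW_\mu$ and $f^{in,\eps}$ bounded in $\cH^\mu$ (only $V\in\cW_0$ and $\cL^2$-convergence of the data), and it dispenses entirely with the paper's regularization step. The paper's route, in exchange, is quantitative: for band-limited $\cH^2$ data the integration by parts yields the rate $\norma{I_N(t)}_{\cL^2}\le\eps^2 C_N(\tau)\norma{\hini}_{\cH^2}$, whereas your step-function approximation of $\Phi_{nn'}$ gives convergence without a rate. Two small points to fix in your write-up: the equation that $h^\eps_\EM$ solves is \eqref{EM1pos} written in the interaction picture, not \eqref{limit}; and your appeal to $\omega_{nn'}\neq 0$ rests on the standing simplicity assumption of Theorem \ref{main}, which the statement of Theorem \ref{LastTeo} does not restate --- though the paper's own proof relies on it in exactly the same way, through the constant $1/\min\{\omega_{nn'}\mid n'\neq n,\ n\le N\}$.
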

\begin{proof}
Since the dynamics generated by \eqref{hdyn} and \eqref{limit} both preserve the $\cL^2$ norm, 
we can assume without loss of generality that the initial condition $f^{in,\eps}$ and $f^{in}$ 
are identical and replace them by the notation $\hini \in \cH^\mu$. 
We consider the diagonal operator $H_0$ in $\cL^2$
$$
  \left(H_0 h\right)_n(x) = 
  \frac{1}{2} \DIV \left( \mM_n^{-1} \nabla h_{n} \right)(x) + V_{nn}(x)h_{n}(x).
$$
We recall that the matrix $V_{nn'}$ defines a bounded operator on $\cL^2$ (that is,
the operator $\cU^0$ in position variables, see definition \eqref{U0def}). 
Such operator, as well as its diagonal and off-diagonal parts are bounded operators 
with bound $\norma{V}_{\cW_0}$ (see Lemma \ref{Lemma0}).
Then, $H_0$ is self-adjoint on the domain
$$
  \cD(H_0) = \left\{ h \in \cL^2 \;\left|\;  h_n \in \rH^2(\mR^d),\ 
  \sum_{n} \norma{\DIV(\mM_n^{-1} \nabla h_{n})}_{L^2(\mR^d)}^2 < \infty \right. \right\}.
$$
Let $S(t) = \exp(-itH_0)$ denote the (diagonal) unitary group generated by $H_0$.
Moreover we consider the operator $R^\eps(t)$ given by
$$
  \left(R^\eps(t)h\right)_n(x) = 
  \sum_{n' \not= n} \e^{i\omega_{nn'}t/\eps^2} V_{nn'}(x)\,h_{n'}(x),
$$
which, being unitarily equivalent to the off-diagonal part of $\cU^0$, is again 
bounded by $\norma{V}_{\cW_0}$ (for all $t$).
The two mild solutions satisfy
$$
 h_\EM^\eps(t) = S(t) \hini + \int_0^t S(t-s) R^\eps(s)\, h_\EM^\eps(s)\,ds,
\qquad
 h_\EM(t) = S(t) \hini,
$$
and, therefore, what we need to do is proving that 
$$
 h_\EM^\eps(t) - h_\EM(t) = \int_0^t S(t-s) R^\eps(s)\, h_\EM^\eps(s)\,ds
$$ 
goes to zero as $\eps\to 0$.
To this aim we resort to the usual cutoff. 
For any fixed $N \in \mN$ we decompose the right hand side of the previous equation
\begin{equation*}
   h_\EM^\eps(t) - h_\EM(t) = I_N(t) + I_N^c (t),
\end{equation*}
where, using the projection operators $\Pi_N$ and $\Pi_N^c = I - \Pi_N$,
introduced in the proof of Theorem \ref{P3}, we have put
$$
\begin{aligned}
 &I_N(t) = \int_0^t S(t-s) \Pi_N R^\eps(s) \Pi_N h_\EM^\eps(s)\,ds,
\\
 &I_N^c(t) = \int_0^t S(t-s) 
  \left[\Pi_N^c R^\eps(s) \Pi_N  + \Pi_NR^\eps(s) \Pi_N^c 
  + \Pi_N^c R^\eps(s) \Pi_N^c\right] h_\EM^\eps(s)\,ds.
\end{aligned}
$$
\paragraph{Case of regular data}
We assume in this part that $V\in \cW_2$ and that $\hini \in \cH^2$.
We fix a $\delta>0$ arbitrarily small and a maximum time $\tau$.
Because $R^\eps(t)$ is uniformly bounded  and 
$\norma{h_\EM^\eps(t)}_{\cL^2} = \norma{\hini}_{\cL^2}$ then, clearly, a number $N(\delta,\tau)$
(independent of $\eps$) exists such that
$\norma{I_N^c(t)}_{\cL^2} \leq \delta$, for all $N \geq N(\delta,\tau)$ and $0 \leq t \leq \tau$.
We now turn our attention to $I_N(t)$.
Using the assumption $V \in \cW_2$, it is not difficult to prove 
the following facts:
\begin{enumerate}
\item[(i)]
 for every $N$, if $h \in {\cH^2}$ then $\Pi_N h \in \cD(H_0)$, and a constant 
 $C_N$ exists such that $\norma{H_0\,\Pi_Nh}_{\cL^2} \leq C_N\norma{h}_{{\cH^2}}$;
\item[(ii)]
 for every $N$ a constant $C_N'$, independent of $t$ and $\eps$, exists such that, 
 if $h \in {\cH^2}$, then $\norma{\Pi_NR^\eps(t)\Pi_Nh}_{{\cH^2}} \leq C_N'\norma{h}_{{\cH^2}}$\,.
\end{enumerate}
Moreover, in a similar way to Lemma \ref{L3}, we can prove the following: 
\begin{enumerate}
\item[(iii)]
 if $\hini \in {\cH^2}$, then $h_\EM^\eps(t) \in {\cH^2}$ for all $t$ and a function
 $C(t)$, bounded on bounded time intervals and independent of $\eps$, exists such that
 $\norma{h_\EM^\eps(t)}_{{\cH^2}} \leq C(t) \norma{\hini}_{{\cH^2}}$\,.
\end{enumerate}
Using (i), (ii) and (iii) we have that $\Pi_N R^\eps(s) \Pi_N h^\eps(s) \in \cD(H_0)$
and, therefore, $S(t-s) \Pi_N R^\eps(s) \Pi_N h_\EM^\eps(s)$ is continuously 
differentiable in $s$.
This makes possible to perform an integration by parts in the integral defining $I_N(t)$.
Since
$$
  R^\eps(t) = \eps^2 \int R^\eps_\omega(t)\,dt,
$$
where
$$
  \left( R^\eps_\omega(t)h\right)_n(x) = 
  \sum_{n' \not= n} \frac{1}{i\omega_{nn'}}\,\e^{i\omega_{nn'}t/\eps^2} V_{nn'}(x)\, h_{n'}(x),
$$
then the integration by parts yields
\begin{multline*}
  I_N(t) = \eps^2 S(t-s)\Pi_N R^\eps_\omega(s)\Pi_Nh_\EM^\eps(s)\Big|^{s=t}_{s=0}
\\
  - \eps^2 \int_0^t S(t-s)\Pi_N
  \left[ iH_0R^\eps_\omega(s)\Pi_Nh_\EM^\eps(s) + R^\eps_\omega(s)\Pi_N\frac{d}{ds}h_\EM^\eps(s)\right]ds,
\end{multline*}
where, of course,
$$
 \frac{d}{ds}h_\EM^\eps(s) = H_0h_\EM^\eps(s) + R^\eps(s)h_\EM^\eps(s).
$$
Since $\Pi_N R^\eps_\omega(t)$ is uniformly bounded by some constant dependent of $N$ 
(in particular, such constant will depend of $1/\min\{ \omega_{nn'} \mid n'\not= n,\ n \leq N\}$), 
then, from (i), (ii) and (iii), 
and using $\Pi_N H_0 = H_0\,\Pi_N$, we obtain that a constant $C_N(\tau)$, 
independent of $\eps$, exists such that
$$
 \norma{I_N(t)}_{\cL^2} \leq \eps^2 C_N(\tau) \norma{\hini}_{{\cH^2}},
 \qquad 0 \leq t \leq \tau.
$$
Thus, fixing $N \geq N(\delta,\tau)$, a $\eps$ small enough exists 
such that $ \norma{I_N(t)}_{\cL^2} \leq \delta$, for all $0 \leq t \leq \tau$.
For such $N$ and $\eps$ we have, therefore,
$$
  \norma{h_\EM^\eps(t) - h_\EM(t)}_{\cL^2} \leq  \norma{I_N(t)}_{\cL^2} +  \norma{I_N^c (t)}_{\cL^2}
  \leq 2\delta,
$$
which proves the theorem in the regular case.
\paragraph{Case of general data}
If $\mu\geq 2$, then there is nothing to do. 
Let us assume $0< \mu < 2$ and let $\delta$ be a regularizing parameter and let
$\hini_\delta$ and $V_\delta$ be two regularizations of $\hini$ and of $V$ such that
$$
  \hini_\delta \in \cH^2,  \qquad 
  \lim_{\delta \to 0} \norma{\hini_\delta - \hini}_{\cH^\mu} = 0
$$
and
$$
 V_\delta \in \cW_2, \qquad 
 \lim_{\delta \to 0} \norma{V_\delta - V}_{\cW_\mu} = 0.
$$
Let $h_{\EM,\delta}^\eps$ and $h_{\EM,\delta}$ be the corresponding solutions of \eqref{hdyn} 
and \eqref{limit} with the modified initial data and potential.
Then we have
\begin{multline*}
  \norma{(h_\EM^\eps - h_\EM)(t)}_{\cL^2} 
  \leq \norma{(h_\EM^\eps - h_{\EM,\delta}^\eps)(t) }_{\cL^2} +
\\
  + \norma{ (h_{\EM,\delta}^\eps - h_{\EM,\delta})(t)}_{\cL^2}
  + \norma{ (h_{\EM,\delta} - h_\EM)(t)}_{\cL^2}.
\end{multline*}
The above analysis of the regular case shows that for any fixed $\delta > 0$, the second term of the 
right hand side tends to zero as $\eps$ tends to zero. 
Thanks to Theorem \ref{T3}, it is easy to show that the third term of the right hand side tends to zero 
as $\delta$ tends to zero and that the first term of the right hand also tends to zero as $\delta$ tends 
to zero uniformly in $\eps$. 
\end{proof}
\subsection{Convergence of the density}
\label{EMWF}
In this section, we prove the convergence of the particle density towards the superposition of the 
envelope function densities. 
Namely, we have the following theorem.
\begin{theorem}
\label{MainTheorem}
Let the initial datum $\psini \in L^2(\mR^d)$ be such that its envelope functions $(f_n^{in,\eps})$ 
form a bounded sequence in $\cH^\mu$ which strongly converges in $\cL^2$ towards the initial 
datum $f^{in} = (f^{in}_n)$, and assume that there exists a positive $\mu$ such that $V \in \cW_\mu$.
Then for any given function $\theta \in L^1(\mR^d)$ such that $\widehat{\theta}\in L^1(\mR^d)$, 
the following convergence holds locally uniformly in time:
$$
\lim_{\eps \to 0} \int \abs{\psi^\eps(t,x)}^2\, \theta(x) \, dx = \sum_n \int \theta (x) \abs{h_{\EM,n}(t,x)}^2\, dx,
$$
where $\psi_\eps$ is the solution of  \eqref{SE1} and $h_\EM$ is the solution of \eqref{limit}.
\end{theorem}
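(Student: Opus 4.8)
The plan is to interpolate between $\abs{\psi^\eps(t)}^2$ and $\sum_n\abs{h_{\EM,n}(t)}^2$ through the exact scaled envelope functions of the solution. Writing $f_n^\eps(t) = \pi_n^\eps(\psi^\eps(t))$, I split
$$
  \int\theta\,\abs{\psi^\eps(t)}^2\,dx - \sum_n\int\theta\,\abs{h_{\EM,n}(t)}^2\,dx = A_\eps(t) + B_\eps(t),
$$
with $A_\eps(t) = \int\theta\big[\abs{\psi^\eps(t)}^2 - \sum_n\abs{f_n^\eps(t)}^2\big]dx$ and $B_\eps(t) = \sum_n\int\theta\big[\abs{f_n^\eps(t)}^2 - \abs{h_{\EM,n}(t)}^2\big]dx$. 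Two facts will be used repeatedly: first, $\theta\in L^\infty$, since $\norma{\theta}_{L^\infty}\leq(2\pi)^{-d/2}\norma{\hat\theta}_{L^1}$; second, $\norma{\psi^\eps(t)}_{L^2} = \norma{\psini}_{L^2} = \norma{\fini}_{\cL^2}$ is bounded uniformly in $\eps$ and $t$.

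The term $B_\eps(t)$ is controlled by the comparison estimates already proved. I introduce the effective-mass envelope functions $f_{\EM,n}^\eps(t)$ (the back Fourier transform of the solution $g_\EM^\eps(t)$ of \eqref{EFEem}) and split $B_\eps(t)$ accordingly. For any two sequences $a,b\in\cL^2$ the Cauchy--Schwarz inequality gives $\sum_n\int\abs{\,\abs{a_n}^2-\abs{b_n}^2\,}dx\leq\norma{a-b}_{\cL^2}\big(\norma{a}_{\cL^2}+\norma{b}_{\cL^2}\big)$. Applying this with $a = f^\eps(t)$, $b = f_\EM^\eps(t)$ bounds the first part of $B_\eps(t)$ by $\norma{\theta}_{L^\infty}\norma{f^\eps(t)-f_\EM^\eps(t)}_{\cL^2}(\norma{f^\eps(t)}_{\cL^2}+\norma{f_\EM^\eps(t)}_{\cL^2})$, which tends to $0$ locally uniformly in $t$ by the corollary preceding Theorem~\ref{LastTeo} (stating $\norma{f^\eps(t)-f_\EM^\eps(t)}_{\cL^2}\to0$) and conservation of the $\cL^2$ norm. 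For the remaining part I use the phase relation $f_{\EM,n}^\eps(t) = h_{\EM,n}^\eps(t)\,\e^{-iE_nt/\eps^2}$, so that $\abs{f_{\EM,n}^\eps(t)}^2 = \abs{h_{\EM,n}^\eps(t)}^2$, and the same inequality with $a = h_\EM^\eps(t)$, $b = h_\EM(t)$ together with Theorem~\ref{LastTeo} ($\norma{h_\EM^\eps(t)-h_\EM(t)}_{\cL^2}\to0$) shows that it too tends to $0$ locally uniformly in $t$.

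For $A_\eps(t)$ the content is that the fast oscillations carried by the Bloch factors $v_n^\eps$ average out when tested against the slowly varying $\theta$; this is exactly Theorem~\ref{T2}. Since Theorem~\ref{T2} is stated for a fixed function while $\psi^\eps(t)$ depends on $\eps$ and $t$, I first replace $\psi^\eps(t)$ by the effective-mass reconstruction $\tilde\psi^\eps(t) = \abs{\cC}^{1/2}\sum_n f_{\EM,n}^\eps(t)\,v_n^\eps$. By the Parseval identity \eqref{Parsevaleps}, $\norma{\psi^\eps(t)-\tilde\psi^\eps(t)}_{L^2} = \norma{f^\eps(t)-f_\EM^\eps(t)}_{\cL^2}\to0$, so by Cauchy--Schwarz and the $L^2$-mass bound this replacement alters $A_\eps(t)$ by a quantity that vanishes locally uniformly in $t$; the same replacement in the envelope sum costs another such term. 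It then remains to show that $\int\theta\big[\abs{\tilde\psi^\eps(t)}^2 - \sum_n\abs{f_{\EM,n}^\eps(t)}^2\big]dx\to0$, where now the envelopes $f_{\EM,n}^\eps(t)$ are bounded in $\cH^\mu$ uniformly on bounded time intervals, by Lemma~\ref{L3} applied to $g_\EM^\eps = \widehat{f_\EM^\eps}$ (which yields $\norma{f_\EM^\eps(t)}_{\cH^\mu}\leq\e^{c_2t}\norma{\fini}_{\cH^\mu}$).

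The main obstacle is precisely this last step: upgrading Theorem~\ref{T2} to a bound that is uniform over the $\eps$-dependent family $\tilde\psi^\eps(t)$. I expect to obtain it from the quantitative form of the proof of Theorem~\ref{T2}, in which the error is dominated by $\norma{\hat\theta}_{L^1}$-type quantities times a norm of the envelopes that is here controlled by the uniform $\cH^\mu$-bound, with a rate vanishing as $\eps\to0$; the extra regularity encoded in $\mu>0$ is what furnishes this rate. Everything else reduces to book-keeping of the already-established comparison estimates through the Cauchy--Schwarz inequality and conservation of the $L^2$ norm, and all bounds are uniform on bounded time intervals because the comparison theorems and Lemma~\ref{L3} are.
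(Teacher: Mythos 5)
Your overall strategy is the paper's own: split the error into an envelope--comparison part (your $B_\eps$, handled by the corollary preceding Theorem~\ref{LastTeo} together with Theorem~\ref{LastTeo}) and an oscillation--averaging part (your $A_\eps$, to be handled by a quantitative, $\eps$-uniform version of Theorem~\ref{T2} whose rate comes from the $\cH^\mu$ bound). Your treatment of $B_\eps$ is correct. However, $A_\eps$ contains a step that fails as written: you invoke the Parseval identity \eqref{Parsevaleps} to claim $\norma{\psi^\eps(t)-\tilde\psi^\eps(t)}_{L^2}=\norma{f^\eps(t)-f^\eps_\EM(t)}_{\cL^2}$, where $\tilde\psi^\eps(t)=\abs{\cC}^{1/2}\sum_n f^\eps_{\EM,n}(t)\,v_n^\eps$. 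That identity holds only for sequences of genuine $\eps$-scaled envelope functions, i.e.\ with $\widehat{f}_n$ supported in $\cB/\eps$. The exact envelopes $f_n^\eps(t)$ have this property by Theorem~\ref{T1}, but $f^\eps_{\EM,n}(t)$ do not: the effective-mass generator in \eqref{EFEem} contains $\cU^0$, which by \eqref{U0def} is a convolution with $\widehat{V}_{nn'}$ in the Fourier variable and immediately spreads the support of $g^\eps_\EM(t)$ outside $\cB/\eps$. Hence $\tilde\psi^\eps(t)$ is not the wave function whose envelopes are the $f^\eps_{\EM,n}(t)$, and the reconstruction map $(f_n)\mapsto\abs{\cC}^{1/2}\sum_n f_n v_n^\eps$ is an isometry only on band-limited sequences (off that subspace it need not even be bounded). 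This is repairable with the tools at hand --- truncate $g^\eps_\EM(t)$ to $\cB/\eps$ first, the error being $O(\eps^\mu)$ by Lemma~\ref{L3} and \eqref{truc-error} --- but the paper sidesteps it entirely: it only ever reconstructs from the (truncated) exact envelopes $h_n^\eps=f_n^\eps\,\e^{iE_nt/\eps^2}$, and all comparison with $h_{\EM,n}$ is done at the level of $\cL^2$ sequences at the very end.

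The second issue is that the step you yourself call ``the main obstacle'' is precisely where the paper's proof does all of its work, and you leave it at ``I expect to obtain it.'' What must actually be written is the following mechanism: set $\tilde\theta^\eps=\cT_{1/(3\eps)}\theta$ and truncate the (band-limited) envelopes by $\cT_{1/(3\eps)}$ as well; then $\supp\big(\widehat{\tilde\theta^\eps\,\tilde h_n^\eps}\big)\subset\cB/3\eps+\cB/3\eps\subset\cB/\eps$, so the truncated products are again genuine $\eps$-scaled envelope functions and \eqref{Parsevaleps} gives the \emph{exact} identity $\int\tilde\theta^\eps\abs{\tilde\psi^\eps}^2\,dx=\sum_n\int\tilde\theta^\eps\abs{\tilde h_n^\eps}^2\,dx$; the truncation errors are then controlled by $\norma{\theta-\tilde\theta^\eps}_{L^\infty}\to0$ (from $\widehat\theta\in L^1$) and by $C\eps^\mu$ times the uniform $\cH^\mu$ bound on the envelopes (from \eqref{truc-error} and Lemma~\ref{L3}). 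Your sketch names the right ingredients --- the proof of Theorem~\ref{T2} made quantitative through the $\cH^\mu$ regularity --- and this is indeed exactly how the paper argues, but the support-compatibility observation that turns the Parseval identity into an exact equality after truncation is the crux of that proof, and it never appears in your proposal; without it, the argument is incomplete at its central point.
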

\begin{proof}
let $h^\eps_n(t,x) = f^\eps_n(t,x) \e^{iE_n t/\eps^2}$  where $f_n^\eps$ are  the envelope functions of $\psi^\eps$
We deduce from the results of the above subsection, in particular from Theorem \ref{LastTeo}, that
$$
  \lim_{\eps\to 0}  \sum_n \norma{h_n^\eps(t) - h_{\EM,n}(t)}_{L^2(\mR^d)}^2 = 0.
$$
Let 
$$
  \tilde{\theta}^\eps = \cT_{1\over 3\eps}(\theta), \quad \widetilde{h}_n^\eps =   \cT_{1\over 3\eps} (h_n^\eps), \quad \widetilde{h}_{\EM,n}^\eps =  \cT_{1\over 3\eps} (h_{\EM, n}) 
$$
where the truncation operator  $\cT_\gamma$ has been defined in \eqref{truncation}.
Recalling that
$$
  \psi^\eps(t,x) = \abs{\cC}^{1/2} \sum_{n} h_{n}^\eps(t,x) \e^{-iE_nt/\eps^2} v_n^\eps(x)
$$
let us define 
$$
  \tilde{\psi}^\eps(t,x) = \abs{\cC}^{1/2} \sum_{n} \tilde{h}_{n}^\eps(t,x) \e^{-iE_nt/\eps^2} v_n^\eps(x).
$$
It is readily seen, in view of \eqref{truc-error}  that 
$$
 \norma{\psi^\eps(t) - \tilde{\psi}^\eps(t)}_{L^2}^2 
 = \sum_n \norma{h_n^\eps(t) - \tilde{h}_n^\eps(t)}_{L^2}^2 
 \leq C \eps^\mu \norma{h^\eps(t)}_{\cH^\mu}^2,
$$
where, by Lemma \ref{L3}, $ \norma{h^\eps(t)}_{\cH^\mu}$ remains bounded.
It is now clear that 
$$
  \abs{\int \theta(x) \abs{\psi^\eps(t,x)}^2\, dx - \int \theta(x) \abs{\tilde{\psi}^\eps(t,x)}^2\, dx}  
  \leq \abs{\norma{\psi^\eps(t)}_{L^2}^2 -  \norma{\tilde{\psi}^\eps(t)}_{L^2}^2} \norma{ \theta}_{L^\infty} 
$$
goes to 0 and, therefore, we can replace $\psi^\eps$ by $\tilde{\psi}^\eps$.
Now,
$$
  \abs{ \int \theta(x) \abs{\tilde{\psi}^\eps(t,x)}^2\, dx -  \int \tilde{\theta}^\eps(x) \abs{\tilde{\psi}^\eps(t,x)}^2\, dx}
  \leq  \norma{\tilde{\psi}^\eps(t)}_{L^2}^2\norma{\tilde{\theta}^\eps-  \theta}_{L^\infty} \to 0
$$
and, therefore, we can replace $\theta$ by $\tilde{\theta}^\eps$.
But
\begin{multline*}
 \int \tilde{\theta}^\eps(x) \abs{\tilde{\psi}^\eps(t,x)}^2\, dx = 
 \\
  \abs{\cC} \left\langle \sum_{n}\tilde{\theta}^\eps (x)\tilde{h}_{n}^\eps(t,x) \e^{-iE_nt/\eps^2} 
 v_n^\eps(x)
 \, , \,
 \sum_{n} \tilde{h}_{n}^\eps(t,x) \e^{-iE_nt/\eps^2} v_n^\eps(x) \right\rangle
\end{multline*}
and $\supp(\widehat{\tilde{\theta}^\eps \tilde{h}_{n}^\eps}) \subset \cB/3\eps + \cB/ 3 \eps \subset  \cB/\eps$. 
Therefore the Parseval formula \eqref{Parsevaleps} shows that 
$$
  \int \tilde{\theta}^\eps(x) \abs{\tilde{\psi}^\eps(t,x)}^2\, dx = 
  \sum_n  \int \widetilde{\theta}^\eps (x) \abs{\widetilde{h}_n^\eps(t,x)}^2\, dx 
  \to \sum_n \int \theta (x) \abs{h_{\EM,n}(t,x)}^2\, dx,
$$
which completes the proof of the theorem.
\end{proof}
\section{Comments}
\label{sec6}
One of the most restrictive hypotheses that we made in the previous sections is the simplicity of all the 
eigenvalues of the periodic operator $\cH^1_\cL$.  
The question of simplicity of the eigenvalues is central in this problem as already has been noticed in 
the works of Poupaud and Ringhofer \cite{PoupaudRinghofer96} and of Allaire and Piatnistki \cite{Allaire05}. 
In these references,  the authors do not assume that all the eigenvalues are simple but assume that the 
initial datum is concentrated on finite number of bands who have multiplicity 1. 
The difference between our approach and that of these two references is that ours allows for a an 
infinite number of envelope functions. 
Besides, the hypothesis of simplicity of all the eigenvalues at $k = 0$ can be removed and replaced by the 
fact that the initial datum envelope functions corresponding to multiple eigenvalues are vanishing.
The proof has however to be reshuffled and we have chosen to stick to the restrictive hypothesis of 
simple eigenvalues. 
Let us however briefly explain how we can deal with this problem.
One important step is the diagonalization of the k$\cdot$p Hamiltonian which gives rise to the equation \eqref{GD}. 
In this formula the operator $\Lambda^\eps$ is diagonal in the $n$ index while $T^*_\eps\cU^0 T_\eps$ is not 
(the existence of the unitary transformation is still valid even in the case of multiple eigenvalues; it is continuous, 
but not regular for eigenvalues with multiplicity larger than one). 
Because of the separation of the eigenvalues, it is  easy to show that the eigenspaces with different energies 
are decoupled from each other (adiabatic decoupling) and we can replace $T^*_\eps\cU^0 T_\eps$
 by $\cU^0_{nn} \delta_{nn'}$. 
 If the initial data are only concentrated on modes with multiplicity one, then the solution itself is almost 
 concentrated on these modes and for these modes, we can make the expansion of eigenvalues and 
 obtain the effective mass equation \eqref{limit}. 
 Let us also mention a recent work by F. Fendt-Delebecque and F. M\'ehats  \cite{fanny} where the 
 effective mass approximation is performed for the Schr\"odinger equation with large magnetic field 
 and which relies on large time averaging of almost periodic functions. 
 This approach might be of help for analyzing the limit for multiple eigenvalues.
\par
\medskip
One final question which has not been addressed so far is the relationship between the regularity of function $\psi$ 
and that of its corresponding sequence $f^\eps$ of envelope functions. 
In particular, one may look for sufficient conditions on $\psi$ so that $f^\eps \in \cH^\mu$. 
Since the envelope function is a Fourier like expansion of the function $\psi$ on the basis $v_n$, then their 
decay as $n$  becomes bigger depends not only on the regularity of $\psi$ but also on that of the basis $(v_n)$ 
which itself will depend on the regularity of the potential $W_\cL$. 
We show in the following subsection some results in this direction.
\subsection{Asymptotic behavior of scaled envelope functions}
In this section we study the asymptotic behavior as $\eps$ tends to zero of the scaled envelope functions 
relative to the basis $(v_n)$ defined in \eqref{periodic}.
\par
From \eqref{EFChiEps}, it is readily seen that the limit as $\eps$ tends to zero of the envelope function 
is given by 
$$
\begin{array}{lll}
 \ds \lim_{\eps\to 0}  \hat f_n^\eps(k) &= &\ds \lim_{\eps \to 0}\abs{\cB\,}^{-1/2} \int_{\mR^d} \,\cara_{\cB/\eps}(k) 
 \,\e^{-ik\cdot x}\,v_n\left({x\over \eps}\right)\,\psi(x)\,dx
 \\[10pt]
 & = & \ds \abs{\cB\,}^{-1/2}\abs{\cC}^{-1}  \bk{v_n,1} \int_{\mR^d} \,\e^{-ik\cdot x}\psi(x)\,dx 
 =  \abs{\cC}^{-1/2} \bk{v_n,1}\, \widehat{\psi}(k).
 \end{array}
$$
Therefore
\begin{equation}
  \lim_{\eps\to 0} \pi_n^\eps(\psi) =   \abs{\cC}^{-1/2} \bk{v_n,1}\, \psi.
\end{equation}
The following Proposition, shows that the regularity of the crystal potential leads to decay properties 
on the coefficients $\bk{v_n,1} = \int_\cC v_n(x)\,dx$.
\begin{proposition}
\label{vn1}
Let $W_\cL$ be in $C^\infty$. 
Then for any integer  $p$, the coefficients $\bk{v_n,1}$  satisfy the inequality
$$
  \abs{\bk{v_n,1}} \leq {C_p \over E_n^p},
$$
where $C_p$ is a constant only depending on $\|W_\cL\|_{W^{2p, \infty}}$.
\end{proposition}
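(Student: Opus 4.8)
The plan is to exploit the self-adjointness of the periodic operator $H^1_\cL = -\frac{1}{2}\Delta + W_\cL$ together with the fact that $v_n$ is an eigenfunction, in order to trade each power of $E_n$ for one application of $H^1_\cL$ to the constant function $1$. First I would observe that, since $v_n \in \cD(H^1_\cL)$ and $H^1_\cL v_n = E_n v_n$, iterating gives $v_n \in \cD\big((H^1_\cL)^p\big)$ with $(H^1_\cL)^p v_n = E_n^p v_n$ for every integer $p$; moreover the constant function $1$ is smooth and $\cL$-periodic, hence also lies in $\cD\big((H^1_\cL)^p\big)$. Pairing in $L^2(\cC)$ and using self-adjointness of $(H^1_\cL)^p$ then yields
\[
 E_n^p\,\bk{v_n,1} = \bk{(H^1_\cL)^p v_n,\,1} = \bk{v_n,\,(H^1_\cL)^p 1},
\]
where I use that $E_n$ is real and that, in the notation of the statement, $\bk{v_n,1} = \int_\cC v_n\,dx$.

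From this identity the estimate is immediate. Dividing by $E_n^p$ (recall $E_n \geq 1 > 0$, since $W_\cL \geq 1$) and applying the Cauchy--Schwarz inequality together with $\norma{v_n}_{L^2(\cC)} = 1$ gives
\[
 \abs{\bk{v_n,1}} = \frac{1}{E_n^p}\,\abs{\bk{v_n,(H^1_\cL)^p 1}}
 \leq \frac{1}{E_n^p}\,\norma{(H^1_\cL)^p 1}_{L^2(\cC)}.
\]
It then remains only to identify the constant $C_p := \norma{(H^1_\cL)^p 1}_{L^2(\cC)}$, which is manifestly independent of $n$, and to control it in terms of $W_\cL$.

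For this last step the key observation is that $\Delta 1 = 0$, so applying $H^1_\cL$ to $1$ never produces the pure Laplacian iterate: expanding $(H^1_\cL)^p 1 = \big(-\frac{1}{2}\Delta + W_\cL\big)^p 1$ and discarding the terms in which $\Delta$ acts directly on the constant, one sees that $(H^1_\cL)^p 1$ is a fixed smooth $\cL$-periodic function which is a polynomial expression in $W_\cL$ and its derivatives of order at most $2(p-1)$. Its $L^2(\cC)$ norm is therefore bounded by a quantity depending only on $\norma{W_\cL}_{W^{2(p-1),\infty}} \leq \norma{W_\cL}_{W^{2p,\infty}}$, which yields the claimed dependence of $C_p$. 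This computation presents no genuine obstacle; the only points that require care are the justification that both $v_n$ and $1$ belong to $\cD\big((H^1_\cL)^p\big)$, so that the self-adjoint pairing above is legitimate, and the bookkeeping of the highest-order derivative of $W_\cL$ generated by the $p$ successive applications of $H^1_\cL$.
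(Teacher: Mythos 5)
Your proposal is correct and follows essentially the same route as the paper's proof: the identity $E_n^p\bk{v_n,1}=\bk{v_n,(H_\cL^1)^p 1}$ via self-adjointness, then Cauchy--Schwarz with $\norma{v_n}_{L^2(\cC)}=1$, and finally a pointwise bound on $(H_\cL^1)^p 1$ in terms of $W_\cL$ and its derivatives. Your added observations (the domain justification and the fact that only derivatives up to order $2(p-1)$ appear, since $\Delta 1 = 0$) are minor refinements of the same argument.
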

\begin{proof}
We first remark that
$$
  E_n^p \bk{v_n, 1}  = \bk{H_\cL^p v_n, 1} =  \bk{v_n, H_\cL^p  1}
$$
(where $H_\cL^p$ denotes the $p$-th power of $H_\cL$, not to be confused with the notation $H_\cL^\eps$
introduced in Sec.~\ref{sec2}).
Now it is readily seen that if $W_\cL \in W^{2p, \infty}$, then $H_\cL^p  1 \in L^\infty$ with
$\norma{H_\cL^p  1}_{L^\infty} \leq C \norma{W_\cL}_{W^{2p, \infty}}$, for a suitable constant $C\geq0$.
Then
$$
  E_n^p \abs{\bk{v_n, 1}} \leq \|v_n\|_{L^2} \norma{H_\cL^p  1}_{L^2} \leq C_p,
$$
with $C_p$ only depending on $\norma{W_\cL}_{W^{2p, \infty}}$, which ends the proof.
\end{proof} 
We also have the following property.
\begin{lemma}
\label{decay}
Let $\lambda$ and $\lambda'$ two elements of the reciprocal lattice $\cL^*$. 
Assume that $W_\cL \in C^\infty$. 
Then, for any integers $k, p$, we have the estimate
$$
  \abs{\bk{H_\cL^k \e^{i\lambda \cdot x}, H_\cL^k \e^{i\lambda'\cdot x} }} \leq 
  C_{k,p}{ (1+ \abs{\lambda}^{2k} \abs{\lambda'}^{2k}) \over  1 + \abs{\lambda - \lambda'}^{2p}},
$$
for a suitable constant $C_{k,p}\geq0$.
\end{lemma}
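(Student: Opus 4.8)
The plan is to use the self-adjointness of $H_\cL$ on $L^2(\cC)$ to collapse the two factors $H_\cL^k$ into a single $H_\cL^{2k}$, and then to read off the resulting expression as one Fourier coefficient on the cell of a smooth periodic function whose size is a controlled power of $\abs{\lambda'}$ (or, by symmetry, of $\abs{\lambda}$). First I would recall that $\{\abs{\cC}^{-1/2}\e^{i\lambda\cdot x}\mid\lambda\in\cL^*\}$ is the Fourier basis of $L^2(\cC)$ and that each $\e^{i\lambda\cdot x}$ is $\cL$-periodic, so $H_\cL^{2k}\e^{i\lambda'\cdot x}$ is a well-defined smooth periodic function and $\bk{H_\cL^k\e^{i\lambda\cdot x}, H_\cL^k\e^{i\lambda'\cdot x}}=\bk{\e^{i\lambda\cdot x}, H_\cL^{2k}\e^{i\lambda'\cdot x}}$.

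Next I would factor out the plane wave using the fiber Hamiltonian \eqref{HkDef}. Since $\e^{-i\lambda'\cdot x}H_\cL\e^{i\lambda'\cdot x}=H_\cL(\lambda')$, one has $H_\cL^{2k}\e^{i\lambda'\cdot x}=\e^{i\lambda'\cdot x}\,\Omega_{\lambda'}$ with $\Omega_{\lambda'}:=H_\cL(\lambda')^{2k}1$, so that, writing $\mu=\lambda-\lambda'$,
\[ \bk{H_\cL^k\e^{i\lambda\cdot x}, H_\cL^k\e^{i\lambda'\cdot x}}=\int_\cC \e^{-i\mu\cdot x}\,\Omega_{\lambda'}(x)\,dx, \]
which is $\abs{\cC}$ times the $\mu$-th Fourier coefficient $c_\mu(\Omega_{\lambda'})$ of the periodic function $\Omega_{\lambda'}$.

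The decay in $\mu$ is then a single integration by parts: because $(-\Delta)\e^{-i\mu\cdot x}=\abs{\mu}^2\e^{-i\mu\cdot x}$ and periodicity kills the boundary terms, $\abs{\mu}^{2p}\abs{c_\mu(\Omega_{\lambda'})}\leq\norma{(-\Delta)^p\Omega_{\lambda'}}_{L^\infty(\cC)}$, while trivially $\abs{c_\mu(\Omega_{\lambda'})}\leq\norma{\Omega_{\lambda'}}_{L^\infty(\cC)}$. This is where the smoothness of $W_\cL$ enters: $\Omega_{\lambda'}=H_\cL(\lambda')^{2k}1$ is a polynomial in the components of $\lambda'$ of total degree $4k$ (each of the $2k$ applications of $H_\cL(\lambda')$ raises the $\lambda'$-degree by at most two, through its $\tfrac12\abs{\lambda'}^2$ part) whose coefficients are smooth $\cL$-periodic functions assembled from $W_\cL$ and finitely many of its derivatives. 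Hence $\norma{(-\Delta)^p\Omega_{\lambda'}}_{L^\infty}\leq C\,(1+\abs{\lambda'})^{4k}$ with $C$ depending only on $\norma{W_\cL}_{C^{4k+2p}}$, and combining the two bounds gives $\abs{\bk{H_\cL^k\e^{i\lambda\cdot x}, H_\cL^k\e^{i\lambda'\cdot x}}}\leq C\,(1+\abs{\lambda'})^{4k}/(1+\abs{\mu}^{2p})$.

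Finally I would symmetrise and simplify. Placing both factors instead on the $\lambda$-side (equivalently, conjugating the inner product) gives the identical bound with $\lambda$ in place of $\lambda'$, since $\abs{-\mu}=\abs{\mu}$; therefore the left-hand side is $\leq C\min\{(1+\abs{\lambda})^{4k},(1+\abs{\lambda'})^{4k}\}/(1+\abs{\mu}^{2p})$. Assuming $\abs{\lambda'}\leq\abs{\lambda}$ one has $(1+\abs{\lambda'})^{4k}\leq 2^{4k}(1+\abs{\lambda'}^{4k})$, and the elementary inequality $\min(A,B)^2\leq AB$ with $A=\abs{\lambda}^{2k}$, $B=\abs{\lambda'}^{2k}$ gives $\abs{\lambda'}^{4k}\leq\abs{\lambda}^{2k}\abs{\lambda'}^{2k}$, so the minimum is $\lesssim 1+\abs{\lambda}^{2k}\abs{\lambda'}^{2k}$ — exactly the claimed numerator. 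The main (though routine) obstacle is the size estimate on $(-\Delta)^p\Omega_{\lambda'}$: one has to verify by induction on the number of applications of $H_\cL(\lambda')$ that the $\lambda'$-degree stays $\leq 4k$ and that only finitely many derivatives of $W_\cL$ are produced, so that the $C^\infty$ hypothesis makes the constant finite for every $k,p$; all the remaining ingredients are the self-adjoint reduction, one integration by parts, and the above elementary inequality.
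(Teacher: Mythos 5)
Your proof is correct, and it takes a route that is genuinely different from the paper's. The paper never invokes self-adjointness: it expands each factor separately, $H_\cL^k \e^{i\lambda\cdot x}=\sum_{\abs{\alpha}\le 2k}\lambda^\alpha V_\alpha(x)\,\e^{i\lambda\cdot x}$ with $V_\alpha$ built from $W_\cL$ and its derivatives up to order $2k-\abs{\alpha}$, so that the inner product becomes the double sum $\sum_{\alpha,\beta}\lambda^\alpha(\lambda')^\beta\int_\cC V_\alpha V_\beta\,\e^{i(\lambda-\lambda')\cdot x}\,dx$, each term of which is then integrated by parts $2p$ times. This keeps the two frequencies separated from the start ($\abs{\alpha},\abs{\beta}\le 2k$), but as written it produces a numerator of the form $(1+\abs{\lambda}^{2k})(1+\abs{\lambda'}^{2k})$; reducing this to the stated $1+\abs{\lambda}^{2k}\abs{\lambda'}^{2k}$ requires absorbing the cross terms $\abs{\lambda}^{2k}+\abs{\lambda'}^{2k}$ into the denominator (enlarge $p$ and use $\abs{\lambda}\le\abs{\lambda'}+\abs{\lambda-\lambda'}$), a step the paper leaves implicit, remarking only afterwards that the estimate is not optimal. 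You instead collapse both powers onto one side, $\bk{\e^{i\lambda\cdot x},H_\cL^{2k}\e^{i\lambda'\cdot x}}$, recognize $H_\cL^{2k}\e^{i\lambda'\cdot x}=\e^{i\lambda'\cdot x}H_\cL(\lambda')^{2k}1$ via the fiber Hamiltonian \eqref{HkDef}, and read the result off as a single Fourier coefficient of a degree-$4k$ polynomial in $\lambda'$ with smooth periodic coefficients. The resulting one-sided bound $(1+\abs{\lambda'})^{4k}/(1+\abs{\lambda-\lambda'}^{2p})$ is by itself weaker than the claim when $\abs{\lambda'}\gg\abs{\lambda}$, so your symmetrization together with the $\min(A,B)^2\le AB$ trick is essential, and you supply it correctly — which makes your write-up more complete on precisely the point the paper glosses over. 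Both arguments ultimately rest on the same two mechanisms, namely that conjugating $H_\cL$ by a plane wave yields a polynomial in the frequency with coefficients involving finitely many derivatives of $W_\cL$, and that periodic integration by parts yields the decay in $\lambda-\lambda'$; what yours buys is a cleaner bookkeeping (one Fourier coefficient instead of a double sum) at the price of an explicit symmetrization step, while the paper's buys the mixed $\lambda,\lambda'$ structure directly at the price of an unstated absorption argument.
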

\begin{proof}
It is clear that 
$H_\cL^k \e^{i\lambda \cdot x} = \sum_{\abs{\alpha}=0}^{2k} \lambda^\alpha V_\alpha(x)  \e^{i\lambda \cdot x}$,
where $V_\alpha$ contains products of $W_\cL$ and its derivatives up to order $2k-\abs{\alpha}$.
Therefore
$$
  \bk{H_\cL^k \e^{i\lambda \cdot x}, H_\cL^k \e^{i\lambda'\cdot x} } = 
  \sum_{\abs{\alpha}, \abs{\beta} = 0}^{2 k}\lambda^\alpha (\lambda')^\beta 
  \int_\cC V_\alpha(x) V_\beta(x) \,\e^{i(\lambda-\lambda')\cdot x}dx.
$$
Now the result can be obtained by simply integrating by parts $2p$ times. 
\end{proof}
The estimate of Lemma \ref{decay} is not optimal and can certainly be refined,
but this is not the scope of our paper.
Next proposition follows from the previous result.
\begin{proposition}
Assume $W_\cL \in L^\infty$ and let $f_n^\eps = \pi_n^\eps(\psi)$ be the envelope functions
of $\psi$. 
Then the following estimate holds for any $\mu \geq 0$:
\begin{equation}
  \norma{f^\eps}_{\cL^2_\mu}^2 = 
  \sum_{n} \norma{ (1+ \abs{k}^2)^{\mu/2}\, \widehat{ f_n^\eps} (k)}_{L^2}^2 \leq C_\mu \|\psi\|_{H^\mu}^2.
\label{hshs}
\end{equation}
Let now $W_\cL$ be in $C^\infty$, then the following estimate holds for any integer $s$
\begin{equation}
\sum_{n} E_n^{s} \norma{f_n^\eps}_{L^2}^2 \leq C_s  (\|\psi\|_{L^2}^2 +\eps^{2s} \|\psi\|_{H^s}^2).
\label{enhs}
\end{equation}
\end{proposition}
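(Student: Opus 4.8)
The plan is to treat both estimates from a single explicit formula for the Fourier transform of the scaled envelope functions, obtained by expanding the periodic factor $v_n^\eps(x)=v_n(x/\eps)$ in the lattice Fourier basis. Starting from \eqref{EFChiEps}--\eqref{ChiEps} and writing $v_n = \abs{\cC}^{-1/2}\sum_{\lambda\in\cL^*} v_{n,\lambda}\,\e^{i\lambda\cdot x}$ with $v_{n,\lambda}=\bk{v_n,\abs{\cC}^{-1/2}\e^{i\lambda\cdot x}}$ (exactly as in the proof of Lemma \ref{Lemma01}), and using $\abs{\cB}\abs{\cC}=(2\pi)^d$ from \eqref{CB}, I would first establish the identity
\begin{equation*}
  \hat f_n^\eps(k) = \cara_{\cB/\eps}(k)\sum_{\lambda\in\cL^*}\ol{v_{n,\lambda}}\,\hat\psi\Big(k+\tfrac{\lambda}{\eps}\Big).
\end{equation*}
The two algebraic facts I will lean on are that the change-of-basis matrix $(v_{n,\lambda})$ is unitary, so $\sum_n\ol{v_{n,\lambda}}v_{n,\lambda'}=\delta_{\lambda\lambda'}$, and that, as $k$ runs over $\cB/\eps$ and $\lambda$ over $\cL^*$, the points $k+\lambda/\eps$ tile $\mR^d$.

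For \eqref{hshs} I would expand $\sum_n\abs{\hat f_n^\eps(k)}^2$, interchange the nonnegative summations, and apply $\sum_n\ol{v_{n,\lambda}}v_{n,\lambda'}=\delta_{\lambda\lambda'}$ to collapse the double $\lambda$-sum, obtaining $\sum_n\abs{\hat f_n^\eps(k)}^2=\cara_{\cB/\eps}(k)\sum_{\lambda}\abs{\hat\psi(k+\lambda/\eps)}^2$. Multiplying by $(1+\abs{k}^2)^\mu$, integrating over $\cB/\eps$ and changing variables to $k'=k+\lambda/\eps$ then rewrites the left-hand side of \eqref{hshs} as $\int_{\mR^d}(1+\abs{\kappa(k')}^2)^\mu\abs{\hat\psi(k')}^2\,dk'$, where $\kappa(k')\in\cB/\eps$ is the reduction of $k'$ modulo $\cL^*/\eps$. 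The whole estimate thus reduces to the pointwise comparison $(1+\abs{\kappa(k')}^2)^\mu\le C_\mu(1+\abs{k'}^2)^\mu$, uniform in $\eps>0$. I expect this geometric comparison to be the main obstacle: since the parallelepiped $\cB$ is \emph{not} the Voronoi cell of $\cL^*$, one cannot assert $\abs{\kappa(k')}\le\abs{k'}$, and the bound must be argued through the cell geometry (in the spirit of Lemma \ref{gammab}), namely that $\mathrm{dist}(0,\cB+\lambda)>0$ is bounded below for $\lambda\neq0$ while $\abs{\kappa(k')}\le D/\eps$ with $D$ the circumradius of $\cB$; combining these two facts yields $1+\abs{\kappa(k')}^2\le C(1+\abs{k'}^2)$ with $C$ independent of $\eps$.

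For \eqref{enhs} the weight $E_n^s$ blocks the collapse, so instead I would recognize
\begin{equation*}
  M_{\lambda\lambda'} := \sum_n E_n^s\,\ol{v_{n,\lambda}}\,v_{n,\lambda'} = \abs{\cC}^{-1}\bk{(H_\cL^1)^s\e^{i\lambda\cdot x},\,\e^{i\lambda'\cdot x}}
\end{equation*}
as the matrix of $(H_\cL^1)^s$ in the Fourier basis (this uses that $\e^{i\lambda\cdot x}\in\cD((H_\cL^1)^s)$ for $W_\cL\in C^\infty$). Inserting the formula for $\hat f_n^\eps$, the left-hand side of \eqref{enhs} becomes $\int_{\cB/\eps}\sum_{\lambda,\lambda'}M_{\lambda\lambda'}\,\hat\psi(k+\tfrac{\lambda}{\eps})\,\ol{\hat\psi(k+\tfrac{\lambda'}{\eps})}\,dk$. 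The key is then to control the off-diagonal band coupling: integrating by parts $2p$ times exactly as in Lemma \ref{decay}, the smoothness of $W_\cL$ gives $\abs{M_{\lambda\lambda'}}\le C_{s,p}(1+\abs{\lambda}^{2s})(1+\abs{\lambda-\lambda'}^{2p})^{-1}$, and choosing $2p>d$ makes $\sum_{\lambda'}\abs{M_{\lambda\lambda'}}\le C_s(1+\abs{\lambda}^{2s})$ summable over $\cL^*$. A Schur-type estimate (using $2ab\le a^2+b^2$ together with $\abs{M_{\lambda\lambda'}}=\abs{M_{\lambda'\lambda}}$) bounds the quadratic form by $C_s\int_{\cB/\eps}\sum_\lambda(1+\abs{\lambda}^{2s})\abs{\hat\psi(k+\tfrac{\lambda}{\eps})}^2\,dk$.

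Finally, changing variables $k'=k+\lambda/\eps$ in this last expression and using $\abs{\lambda}=\eps\,\abs{k'-\kappa(k')}\le\eps\abs{k'}+D$, hence $1+\abs{\lambda}^{2s}\le C_s(1+\eps^{2s}\abs{k'}^{2s})$, integrates to $C_s\big(\norma{\psi}_{L^2}^2+\eps^{2s}\norma{\psi}_{H^s}^2\big)$, which is precisely \eqref{enhs}. Throughout, the interchange of $\sum_n$ with the integral is justified by nonnegativity (for \eqref{hshs}) and by the absolute convergence furnished by the decay of $M_{\lambda\lambda'}$ (for \eqref{enhs}); note that \eqref{hshs} requires only $W_\cL\in L^\infty$, whereas \eqref{enhs} genuinely uses $W_\cL\in C^\infty$ in order to integrate by parts arbitrarily often and make the lattice sum converge.
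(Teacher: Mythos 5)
Your proposal is correct and is essentially the paper's own proof: the paper likewise expands $v_n$ in the lattice Fourier basis so that $\hat f_n^\eps$ becomes a sum of shifted copies of $\hat\psi$ weighted by $v_{n,\lambda}$, collapses the double lattice sum by unitarity to prove \eqref{hshs} (using the same $\eps$-uniform geometric comparison, stated there as $\abs{k}\le c\,\abs{k-\lambda}$ for $k\in\cB$, $\lambda\in\cL^*$), and for \eqref{enhs} identifies $\sum_n E_n^s\, v_{n,\lambda}\ol{v_{n,\lambda'}}$ with the matrix element $\abs{\cC}^{-1}\bk{H_\cL^s\,\e^{i\lambda\cdot x},\e^{i\lambda'\cdot x}}$, bounds it by the integration-by-parts estimate of Lemma \ref{decay}, and concludes with the same $2ab\le a^2+b^2$ plus Schur-type summation and tiling change of variables. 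The only differences are cosmetic: your reduction-map formulation and the additive bound $\abs{\lambda}\le\eps\abs{k'}+D$ replace the paper's multiplicative comparisons, with identical effect.
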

\begin{proof}
Let us first prove \eqref{hshs}. 
Using the identity
$$
  \widehat{ f_n^\eps}(k) = \abs{\cB\,}^{-1/2} \int_{\mR^d} \,\cara_{\cB/\eps}(k) 
  \,\e^{-ik\cdot x}\,v_n\left({x\over \eps}\right)\,\psi(x)\,dx,
$$
as well as the decomposition
$$
 v_n (x) = {1\over \abs{\cC}^{1/2}} \sum_{\lambda \in \cL^*} v_{n,\lambda} e^{i\lambda\cdot x}
$$
where $v_{n,\lambda} = \bk{v_n, { e^{i\lambda\cdot x}\over \abs{\cC}^{1/2}} }$, 
we obtain, 
\begin{multline*}
 \sum_{n} \norma{ (1+ |k|^2)^{\mu/2}\,\widehat{ f_n^\eps}(k) }_{L^2}^2 
\\
 =  \sum_n \sum_{\lambda,\lambda'} \int_{\cB/\eps} 
  (1+|k|^2)^\mu  v_{n,\lambda}  \overline{v_{n,\lambda'}} \,
 \widehat{\psi}\left(k -\textstyle{{\lambda\over \eps}}\right) 
 \overline{\widehat{\psi}}\left(k -\textstyle{{\lambda'\over \eps}}\right) dk.
\end{multline*}
Summing first with respect to $n$ and using the identity 
$$
  \sum_nv_{n,\lambda}  \overline{v_{n,\lambda'}}  = 
  \frac{1}{\abs{\cC}} \,\langle e^{i\lambda\cdot x},  e^{i\lambda' \cdot x} \rangle
  = \delta_{\lambda,\lambda'},
$$
 the right hand side of the above identity takes the simple form
$$
  \ds \sum_{n} \norma{ (1+ |k|^2)^{\mu/2}\,\widehat{ f_n^\eps}(k) }_{L^2}^2 
  = \ds \sum_{\lambda \in \cL^*}  \int_{\cB/\eps} (1+|k|^2)^\mu  
  \abs{\widehat{\psi}\left(k -\textstyle{{\lambda\over \eps}}\right)}^2 dk.
$$
It is now readily seen that there exists a constant $c \geq 1$, only depending on the fundamental cell $\cC$, 
such that for all $k\in\cB$ and for all $\lambda \in \cL^*$, we have the estimate
$$ 
  \abs{k} \leq c \abs{k-\lambda},
$$ 
so that 
\begin{multline*}
  \sum_{n} \norma{ (1+ \abs{k}^2)^{\mu/2} \,\widehat{ f_n^\eps}(k)}_{L^2}^2  \leq
  c^{2\mu} \sum_{\lambda \in \cL^*}  \int_{\cB/\eps} \left(1+\abs{k-\textstyle{{\lambda \over \eps}}}^2\right)^\mu   
  \abs{\widehat{\psi}\left(k -\textstyle{{\lambda \over \eps}}\right)}^2\, dk  
  \\
  = c^{2\mu}  \int_{\mR^d} (1+\abs{k}^2)^\mu   \abs{\widehat{\psi}(k)}^2\, dk.
\end{multline*}
This implies that a suitable constant $C_\mu$ exists such that  \eqref{hshs} holds. 
Let us now prove \eqref{enhs}.
We proceed analogously and find
$$
  \sum_{n} E_n^{s} \norma{f_n^\eps }_{L^2}^2 
   = {1\over (2\pi)^d} \ds \sum_n\sum_{\lambda,\lambda'} \int_{\cB/\eps} E_n^{s} \, 
   v_{n,\lambda}  \overline{v_{n,\lambda'}} \,\widehat{\psi}\left(k -\textstyle{{\lambda\over \eps}}\right)
   \overline{\widehat{\psi}}\left(k -\textstyle{{\lambda'\over \eps}}\right) dk.
$$
As above, we first make the sum over the index $n$ and, therefore, we need to evaluate
$$
 \sum_{n} E_n^{s}  v_{n,\lambda}  \overline{v_{n,\lambda'}}.
$$
We first remark that $E_n^s  v_{n,\lambda}  =\bk{ H_{\cL}^s v_n, { e^{i\lambda\cdot x}\over \abs{\cC}^{1/2}} } 
= \bk{ v_n, H_{\cL}^s { e^{i\lambda\cdot x}\over \abs{\cC}^{1/2}} }$.
Therefore
$$
  \sum_{n} E_n^{s} \, v_{n,\lambda}  \overline{v_{n,\lambda'}} = 
  {1\over \abs{\cC} } \bk{ H_{\cL}^s   e^{i\lambda\cdot x},     e^{i\lambda'\cdot x} }.
$$
Contrary to the proof of \eqref{hshs},  the obtained formula is not diagonal in $(\lambda, \lambda')$ but 
Lemma \ref{decay} leads to the following estimate, which holds for large enough integers $p$:
$$
\begin{aligned}
 \sum_{n} E_n^{s} \norma{f_n^\eps}_{L^2}^2 
 &\leq C_{s,p} \sum_{\lambda,\lambda'} \int_{\cB/\eps} 
 {1 + \abs{\lambda}^{2s}  \over 1 + \abs{\lambda-\lambda'}^{2p}}  
\abs{\widehat{\psi}\left(k -\textstyle{{\lambda\over \eps}}\right) 
\overline{\widehat{\psi}}\left(k -\textstyle{{\lambda'\over \eps}}\right) } dk 
\\[6pt]
&\leq \frac{C_{s,p}}{2} \sum_{\lambda,\lambda'} \int_{\cB/\eps} 
{1 + \abs{\lambda}^{2s}  \over 1 + \abs{\lambda-\lambda'}^{2p}} 
 \left[\abs{\widehat{\psi}\left(k -\textstyle{{\lambda\over \eps}}\right)}^2 + 
 \abs{\overline{\widehat{\psi}}\left(k -\textstyle{{\lambda'\over \eps}}\right) }^2\right] dk 
\\[6pt]
&\leq C_{s,p} \sum_{\lambda\in\cL^*}  \int_{\cB/\eps} (1 + \abs{\lambda}^{2s} )  
\abs{\widehat{\psi}\left(k -\textstyle{{\lambda\over \eps}}\right)}^2\, dk.
\end{aligned}
$$
Note that we used the fact that, for large enough $p$, the following estimates hold with constants 
$C_1$ and $C_2$ only depending on $s$ and $p$
$$
  \sum_{\lambda \in \cL^*} {1 + \abs{\lambda}^{2s}  \over 1 + \abs{\lambda-\lambda'}^{2p}} 
  \leq C_1 (1+ \abs{\lambda'}^{2s}),  
  \quad  
  \sum_{\lambda' \in \cL^*} {1 + \abs{\lambda}^{2s}  \over 1 + \abs{\lambda-\lambda'}^{2p}} 
  \leq C_2 (1+ \abs{\lambda}^{2s}).
$$
Now, for $\lambda \neq 0$ and $\eps k \in \cB$  it is readily seen that $|\lambda| \leq c_0 \abs{\lambda -\eps k}$, 
where $c_0$ is a positive constant independent of $\lambda$ and $k$. 
Therefore,
$$
  \sum_{\lambda\in\cL^*}  \int_{\cB/\eps} (1 + \abs{\lambda}^{2s} )  
  \abs{\widehat{\psi}\left(k -\textstyle{{\lambda\over \eps}}\right)}^2\, dk 
  \leq \norma{\psi}_{L^2}^2 + \eps^{2s} c_0^{2s} \norma{\psi}_{H^s}^2,
$$
which implies that a suitable constant $C_s$ exists such that  \eqref{enhs} holds.
\end{proof}
\section{Postponed proofs}
\label{post}
This section is devoted to the proofs of some results stated in the beginning of the paper.
\subsection{Proof of Theorem \ref{T1}}
For any Schwartz function $\psi$ we can write
$$
  \psi(x) = (2\pi)^{-d/2} \int_{\mR^d} \hat \psi(k)\,\e^{ik\cdot x} dk
  = \sum_{\eta \in \cL^*} (2\pi)^{-d/2} 
  \int_{\cB + \eta} \hat \psi(k)\,\e^{ik\cdot x} dk
$$ $$
  = \sum_{\eta \in \cL^*} (2\pi)^{-d/2}\, \e^{i\eta\cdot x}
  \int_{\cB} \hat \psi(\xi+\eta)\,\e^{i\xi\cdot x}\, d\xi
  = \sum_{\eta \in \cL^*} \e^{i\eta\cdot x} G_\eta(x), 
$$
where
$$
  G_\eta(x) =  (2\pi)^{-d/2} \int_{\cB} \hat \psi(\xi+\eta)\,\e^{i\xi\cdot x}\,d\xi
$$
clearly belongs to $\cF^*L^2_\cB(\mR^d)$.
Moreover, we have
$$
 \sum_{\eta \in \cL^*} \norma{G_\eta}_{L^2}^2 = 
 \sum_{\eta \in \cL^*} \norma{\hat G_\eta}_{L^2}^2 =
 \sum_{\eta \in \cL^*} \int_{\mR^d} \abs{\hat \psi(\xi+\eta)\cara_\cB(\xi)}^2 dk 
$$ $$
 = \sum_{\eta \in \cL^*} \int_{\cB+\eta} \abs{\hat \psi(\xi)}^2 dk 
 = \norma{\psi}_{L^2}^2.
$$
Thus, defining
\begin{equation}
\label{Faux}
  F(x,y) = \sum_{\eta \in \cL^*} \e^{i\eta\cdot x} G_\eta(y), 
  \qquad
  (x,y) \in \cC\times\mR^d,
\end{equation}
we have that $F \in L^2(\cC\times\mR^d)$ and 
$$
  \abs{\cC}^{-1}\norma{F}_{L^2(\cC\times\mR^d)}^2 
  = \sum_{\eta \in \cL^*} \norma{G_\eta}_{L^2(\mR^d)}^2
  =  \norma{\psi}_{L^2(\mR^d)}^2 
$$
(where we used the fact that $\{ \abs{\cC}^{-1/2}\,\e^{i\eta\cdot x} \mid \eta \in \cL^* \}$ 
is a orthonormal basis of $L^2(\cC)$).
Since $\{ v_n \mid n \in \mN\}$ is another orthonormal basis of $L^2(\cC)$, then we can also write
$$
 F(x,y) = \abs{\cC}^{1/2}\,\sum_n f_n(y) v_n(x),
$$
where
\begin{equation}
 f_n(y) = \abs{\cC}^{-1/2} \bk{F(\cdot,y),v_n}_{L^2(\cC)}.
\end{equation}
Note that $\hat f_n \in L^2_\cB(\mR^d)$ for every $n$ and that  
$$
  \norma{\psi}_{L^2(\mR^d)}^2 = \abs{\cC}^{-1} \norma{F}_{L^2(\cC\times\mR^d)}^2  
  = \sum_n\norma{f_n}_{L^2(\mR^d)}^2.
$$
For $y=x$, \eqref{Faux} yields \eqref{EFdec}, at least for Schwartz functions.
However, it can be easily proved that the mapping $\psi \mapsto (f_0,f_1,\ldots)$
can be uniquely extended to an isometry between $L^2(\mR^d)$ 
and $\ell^2(\mN,\cF^*L^2_\cB(\mR^d))$, with the properties \eqref{EFdec} and \eqref{Parseval}.
\subsection{Proof of Theorem \ref{T2}}
Recalling definition \eqref{truncation}, let
$$
  \tilde{\theta}^\eps = \cT_{1\over 3\eps}(\theta), \qquad \tilde{f}_n^\eps =   \cT_{1\over 3\eps} (f_n^\eps), 
$$
and define
$$
  \tilde{\psi}^\eps(x) = \abs{\cC}^{1/2} \sum_n \tilde{f}_n^\eps(x)\, v_n^\eps(x).
$$
Then, we can write
\begin{equation*}
\begin{aligned}
  &\int_{\mR^d} \theta(x)\Big[ |\psi(x)|^2 -\sum_{n} |f_n^\eps(x)|^2\Big] dx 
  = \int_{\mR^d} \theta(x)\Big[\abs{\psi(x)}^2 -|\tilde{\psi}^\eps(x)|^2\Big]dx  
\\[6pt]
 &+ \int_{\mR^d} \Big[ \theta(x)- \tilde{\theta}^\eps(x) \Big] |\tilde{\psi}^\eps(x)|^2\, dx  
 + \int_{\mR^d} \tilde{\theta}^\eps(x)\Big[|\tilde{\psi}^\eps(x)|^2 -\sum_{n} |\tilde{f}_n^\eps(x)|^2\Big] dx  
\\[6pt]
 &+ \int_{\mR^d} \tilde{\theta}^\eps(x) \sum_n \Big[|\tilde{f}^\eps_n(x)|^2 - |f_n^\eps(x)|^2\Big] dx  
 + \int_{\mR^d} \Big[ \tilde{\theta}^\eps(x)- \theta(x) \Big] \sum_n |f^\eps_n(x)|^2\, dx  
\\[6pt]
 & = I_1 + I_2 + I_3 + I_4 + I_5.
\end{aligned}
\end{equation*} 
Since $\supp (\widehat{\tilde{\theta}^\eps \tilde{f}^\eps_n }) \subset \cB/3\eps + \cB/3\eps \subset \cB/\eps$,
then $\tilde{\theta}^\eps \tilde{f}^\eps_n$ are the envelope functions of $\tilde{\theta}^\eps \tilde{\psi}^\eps$ 
and the Parseval identity \eqref{Parsevaleps} can be applied to the functions  $\tilde{\psi}^\eps$ and 
$\tilde{\theta}^\eps \tilde{\psi}^\eps$, which yields $I_3 = 0$.
\par
As far as the terms $I_2$ and $I_5$ are concerned, we have
$$
  \abs{I_2} \leq \norma{\theta- \tilde{\theta}^\eps}_{L^\infty} \norma{\tilde{\psi}^\eps}_{L^2}
  = \norma{\theta- \tilde{\theta}^\eps}_{L^\infty} \sum_n \norma{\tilde{f}^\eps_n}_{L^2}
  \leq \norma{\theta- \tilde{\theta}^\eps}_{L^1} \norma{\psi}_{L^2}
$$
and, therefore, $I_2 \to 0$ as $\eps \to 0$. 
Similarly we can prove that $I_5 \to 0$.
\par
Finally, if $R$ is the radius of a ball contained in $\cB$, we have
\begin{multline*}
  \abs{I_1}  \leq  
  \norma{\theta}_{L^\infty}  \Big( \norma{\psi}_{L^2}^2 - \norma{\tilde{\psi}^\eps}_{L^2}^2 \Big) =
  \\
  \norma{\theta}_{L^\infty} \sum_n \Big( \norma{f_n^\eps}_{L^2}^2 - \norma{\tilde{f}^\eps_n }_{L^2}^2 \Big)
 \leq \norma{\theta}_{L^\infty} \sum_n  \int_{\abs{k} > \frac{R}{3\eps}} \abs{\hat {f}^\eps_n(k)}^2 dk.
\end{multline*}
The last integral goes to 0 as $\eps\to 0$, because 
$\sum_n  \int_{\mR^d} |\hat {f}^\eps_n(k)|^2 dk = \norma{\psi}_{L^2}^2$ and the dominated convergence 
theorem applies.
Thus $I_1 \to 0$ and, in a similar way, we can also prove that $I_4\to 0$. 
In conclusion, 
$$
  \int_{\mR^d} \theta(x)\Big[ |\psi(x)|^2 -\sum_{n} |f_n^\eps(x)|^2\Big] dx \to 0
$$
as $\eps \to 0$, which proves the theorem.
\medskip 

{\textbf Acknowledgements.}
N. Ben Abdallah acknowledges support from the project 
QUATRAIN (BLAN07-2 212988) funded by the French Agence Nationale de la Recherche) and from the 
Marie Curie Project DEASE: MEST-CT-2005-021122 funded by the European Union.
L. Barletti acknowledges support from Italian national research project PRIN 2006 
``Mathematical modelling of semiconductor devices, mathematical methods
in kinetic theories and applications'' (2006012132\_004).


\begin{thebibliography}{99}
%
%
\bibitem{allaire98} 
Allaire, G.  Conca, C.: 
Bloch wave homogenization and spectral asymptotic analysis.  
J.\ Math.\ Pures Appl.\ (9)  \textbf{77} no. 2, 153--208, (1998)
%
\bibitem{Allaire04} 
Allaire, G., Capdeboscq, Y., Piatnitski, A., Siess, V. and Vanninathan, M.:
Homogenization of periodic systems with large potentials.  
Arch.\ Ration.\ Mech.\ Anal.\  \textbf{174},   179--220 (2004) 
%
\bibitem{Allaire05}
Allaire, G. and Piatnistki, A.: 
 Homogenization of the Schr\"odinger equation and effective mass theorems. 
Comm.\ Math.\ Phys.\  \textbf{258}, 1--22 (2005)
%
\bibitem{Allaire06} 
Allaire, G. and Vanninathan, M.:
Homogenization of the Schr\"odinger equation with a time oscillating potential.  
Discrete Contin.\ Dyn.\ Syst.\ Ser.\ B  \textbf{6}, 1--16 (2006)

\bibitem{allaire08} 
Allaire, G.:  
Periodic homogenization and effective mass theorems for the Schr\"odinger equation.  
In: Ben Abdallah, N.\ and Frosali, G.\ (Eds.): 
\textit{Quantum transport. Modelling, analysis and asymptotics.}
Lecture Notes in Math.\ 1946. Springer, Berlin, 2008.
%
\bibitem{Ashcroft76}
Ashcroft, N.W. and N.~D.~Mermin, N.D.:
\textit{Solid State Physics}.
Saunders College Publishing, 1976.
%
\bibitem{Bastard} 
Bastard, G.:
\textit{Wave mechanics applied to semiconductor heterostructures}.
Wiley Interscience, New York, 1990.
%
\bibitem{BerezinShubin91}
Berezin, F.A. and Shubin, M.A.:
\textit{The Schr\"odinger Equation}.
Kluwer, Dordrecht, 1991
%
\bibitem{Burt92}
Burt, M.G.:
The justification for applying the effective mass approximation to microstructures.
J.\ Phys.\ Condens.\ Matter \textbf{4}, 6651--6690 (1992)
%
\bibitem{fanny} 
Fendt-Delebecque, F.and M\'ehats, F.:
An effective mass theorem for the bidimensional electron gas in a strong magnetic field.
Commun. Math. Phys. {\bf{292}}, 829Ð870 (2009)
%
\bibitem{hagedorn-joye} 
Hagedorn, G.A. and Joye, A.: 
A time-dependent Born-Oppenheimer approximation with exponentially small error estimates.
Comm.\ Math.\ Phys.\ \textbf{223}, 583--626 (2001)
%
\bibitem{Kato80}
Kato, T.:
\textit{Perturbation Theory for Linear Operators} (Second edition).
Springer-Verlag, Berlin, 1980
%
\bibitem{kuch93} 
Kuchment, P.: 
\textit{Floquet theory for partial differential equations.}
Operator Theory: Advances and Applications, 
60. BirkhŠuser Verlag, Basel, 1993.
%
\bibitem{LuttingerKohn55}
Luttinger, J.M. and Kohn, W.:
Motion of electrons and holes in perturbed periodic fields.
Phys.\ Rev.\ \textbf{97}, 869--882 (1955)
%
\bibitem{panati}
Panati, G., Spohn, H. and Teufel, S.:  
The time-dependent Born-Oppenheimer approximation.
M2AN Math.\ Model.\ Numer.\ Anal.\  \textbf{41}, 297--314 (2007)
%
\bibitem{PoupaudRinghofer96} 
Poupaud, F.  and Ringhofer, C.:
 Semi-classical limits in a crystal with exterior potentials and effective mass theorems.
Comm.\ Partial Differential Equations \textbf{21}, 1897--1918 (1996)
%
\bibitem{ReedSimonI72}
Reed, M. and Simon, B.:
\textit{Methods of Modern Mathematical Physics, I - Functional Analysis}.
 Academic Press, New York, 1972
%
\bibitem{ReedSimonIV78}
Reed, M. and Simon, B.:
\textit{Methods of Modern Mathematical Physics, IV - Analysis of   Operators}.
Academic Press, New York, 1978
%
 \bibitem{spohn-teufel} 
 Spohn, H. and Teufel, S.: 
 Adiabatic decoupling and time-dependent  Born-Oppenheimer theory.
Comm.\ Math.\ Phys.\ \textbf{224}, 113--132 (2001)
%
\bibitem{teufel} 
Teufel, S.: 
\textit{Adiabatic Perturbation Theory in Quantum Dynamics}.
Springer-Verlag, Berlin, 2003
%
\bibitem{Wenckebach99}
Wenckebach, T.:
\textit{Essentials of Semiconductor Physics}.
Wiley, Chichester, 1999.
%
%
\end{thebibliography}
\end{document}